\title{Optimal and Scalable Methods to Approximate the Solutions of Large-Scale Bayesian Problems: Theory and Application to Atmospheric Inversions and Data Assimilation}
\author{N. Bousserez and Daven K. Henze}
\date{2016/09/19}
\chardef\bslash=`\\ 
 \theoremstyle{plain} 
\newtheorem{thm}{Theorem}[section]
\newtheorem{cor}[thm]{Corollary}
\newtheorem{lem}[thm]{Lemma}
\newtheorem{prop}[thm]{Proposition}
\theoremstyle{definition}
\newtheorem{defn}[thm]{Definition}%
\theoremstyle{remark}
\newtheorem{rem}{\textbf{Remarks}}[section]
\newcommand{\eval}[2][\right]{\relax
  \ifx#1\right\relax \left.\fi#2#1\rvert}
\begin{document}
\maketitle
\markboth{Optimal and Scalable Methods to Approximate Bayesian Solutions}
{Optimal and Scalable Methods to Approximate Bayesian Solutions}
\renewcommand{\sectionmark}[1]{}

\begin{abstract}
This paper provides a detailed theoretical analysis of methods to approximate the solutions of high-dimensional ($>10^6$) linear Bayesian problems. An optimal low-rank projection that maximizes the information content of the Bayesian inversion is proposed and efficiently constructed using a scalable randomized SVD algorithm. Useful optimality results are established for the associated posterior error covariance matrix and posterior mean approximations, which are further investigated in a numerical experiment consisting of a large-scale atmospheric tracer transport source-inversion problem. This method proves to be a robust and efficient approach to dimension reduction, as well as a natural framework to analyze the information content of the inversion. Possible extensions of this approach to the non-linear framework in the context of operational numerical weather forecast data assimilation systems based on the incremental 4D-Var technique are also discussed, and a detailed implementation of a new Randomized Incremental Optimal Technique (RIOT) for 4D-Var algorithms leveraging our theoretical results is proposed.
\end{abstract}
\section{Introduction}
The Bayesian approach to inverse problems consists of updating a prior probability distribution of a quantity of interest conditioned on some physically-related observations. The conditioned probability distribution is called the posterior distribution. The Bayesian framework has been widely adopted to solve geophysical problems. For large-scale non-linear problems, such as those encountered in atmospheric modeling, sampling techniques (e.g., Markov Chain Monte-Carlo) to estimate the posterior distribution require prohibitively numerous integrations of the forward model that relates the inferred variables to the observations   \citep{Tarantola05}. Alternatively, when the forward model is linear and the probability distributions for the prior and the observations are Gaussian, the posterior probability distribution is Gaussian and can be fully characterized by its mean (i.e., the maximum-likelihood) and its error covariance matrix, for which analytical expressions are available. However, explicitly calculating the posterior error covariance matrix remains a challenging task in inverse problems for which the dimensions of the matrix are very large \citep{bousserez2015improved}. As an example, the typical number of optimized variables in data assimilation (DA) systems for numerical weather prediction (NWP) is $\sim10^8$, which corresponds to a posterior error covariance matrix of dimension $10^8 \times 10^8$. In such cases, the posterior error covariance matrix cannot be explicitly represented in computer memory, and appropriate low-rank approximations are needed to extract useful information. Low-rank estimations of the posterior error covariance matrix are also useful to compute other quantities of interest such as the model resolution matrix and the Degree Of Freedom for Signal (DOFS), which characterize the information content of the inversion \citep{rodgers2000inverse,Tarantola05}. 

 The variational approach to solving large-scale inverse problems employs tangent-linear and adjoint models with iterative gradient-based optimization algorithms to compute the maximum-likelihood of the posterior distribution. The potential of state-of-the-art optimization algorithms to provide posterior error covariance estimates as a by-product of the minimization have long been recognized (e.g., \citet{thacker1989role,Rabier92,Nocedal06,Muller05}). For large-scale problems, such optimization algorithms are usually halted before full convergence, effectively only approximating the solution. Although the convergence properties of these approximations have been investigated in previous numerical experiments (e.g., \citet{bousserez2015improved}), a theoretical analysis of their optimality with respect to the information content of the inversion has yet to be performed. Another approach to make large-scale inverse problems tractable is the use of ensembles to approximate the error covariances of the system. Such methods, which can be either stochastic (e.g., the Ensemble Kalman Filter (EnKF)) or deterministic (i.e., square-root formulations such as the Ensemble Adjusted Kalman Filter (EAKF)) have the advantage that they do not require the use of an adjoint model. However, the small number of ensembles used results in severe rank-deficiencies for the associated error covariance matrices. Sophisticated filtering localization techniques that help mitigate this sampling noise are the subject of intense research activities in ensemble-based DA (e.g., \citet{menetrier2015linear,andersonlei13}).

Besides implicit (i.e., incomplete variational minimizations) and explicit (i.e., ensemble-based) low-rank approximations, another approach to approximate the solution of large-scale Bayesian problems consists of performing a prior dimensional reduction of the system. In the context of linear problems, such methods can allow one to explicitly form the matrices associated with the inverse problem and to analytically compute the posterior mean and posterior error covariance. In the atmospheric inversion community, several studies have focused on designing objective methods to construct reduced spaces, so as to optimize some criteria related to the information content of the problem. In \citet{bocquet2011bayesian}, a rigorous multi-scale approach to dimension reduction is proposed, wherein an optimal aggregation scheme is defined by constructing a grid of tiles for which the associated reduced Bayesian problem has maximum DOFS. However, the optimization of the grid can be computationally expensive, and the method requires explicitly calculation of the Jacobian of the system, which is not feasible for high-dimensional systems. Moreover, the optimality of the solution is guaranteed only for a specific class of reductions, namely grid aggregation methods. More recently, \citet{turner2015balancing} proposed a method to construct an optimal reduced basis set of Gaussian-mixture functions to analytically solve large-scale atmospheric source inversion problems. Their algorithm consists of an incremental construction of the basis where the posterior error variances in observation space are recomputed at each iteration (i.e., for each new dimension) until a minimum total error is reach. The cost associated with computing the posterior error variances at each iteration makes this approach poorly scalable and not suitable for problems where the optimal basis needs to be constructed in a timely manner.  Similarly to \citet{bocquet2011bayesian}, their approach also lacks generality by restricting the analysis to a specific class of basis (namely, the Gaussian-mixture functions).

Recently, \citet{spantini2015optimal} presented a detailed theoretical analysis of optimal low-rank approximations of the posterior mean and posterior error covariance matrix for linear Bayesian problems. They show that the proposed approximations are defined in the subspace that maximizes the observational constraints, which is measured as the relative gain in information in the posterior with respect to the prior information. Interestingly, this method can reconcile theoretical optimality and computational scalability, since in practice the low-rank optimal approximations can be efficiently constructed by applying matrix-free singular value decomposition (SVD) routines to the so-called prior-preconditioned Hessian of the quadratic cost function. Furthermore, when high-performance computing is required, the use of recently developed randomized SVD methods allows to fully parallelize the algorithm and to implement and scalable approach to low-rank approximation for large-scale Bayesian problems (e.g., \citet{bui2012extreme}). 

In this paper, we provide a detailed theoretical analysis of the Bayesian approximation problem in the context of optimal projections. This approach has the advantage of producing approximations to the posterior mean and posterior error covariance matrix that are consistent with each other, i.e., they are both approximations to the full-dimensional posterior solutions and exact solutions to a projected low-rank Bayesian problem. Our mathematical developments generalize the theoretical framework of \citet{bocquet2011bayesian} and allow us to construct a projection that maximizes the DOF among all low-rank projections. This maximum-DOFS projection yields posterior mean and posterior error covariance approximations similar to those proposed in  \citet{spantini2015optimal}, for which we provide additional interpretations and optimality results. Moreover, although \citet{spantini2015optimal} identified their optimal approximations as the solutions of a projected Bayesian problem with maximum observational information with respect to the prior, we note that they did not rigorously demonstrate this result by omitting to analyze the so-called representativeness error. This error, which quantifies the impact of the unobserved subspace in the inversion as a result of the dimension reduction, has been taken into account in our proofs. For the first time, this optimal approximation method is applied to a large-scale atmospheric-transport source inversion problem using a highly-scalable randomized SVD algorithm. Finally, we investigate new links between the maximum-DOFS approximations and preconditioned conjugate-gradient (CG) algorithms embedded in non-linear Gauss-Newton minimization methods such as incremental 4D-Var in operational DA systems for NWP. This enables us to propose an improved incremental 4D-Var algorithm leveraging both our theoretical optimality results and the efficiency of randomized SVD algorithms.

Section \ref{theory} of this paper presents the theory and formalism of the optimal low-rank projection problem and provides useful optimality results for the associated approximations of the posterior mean and posterior covariance matrices. Section \ref{implementation} discusses the practical construction of the optimal approximations and describes in details a randomized SVD algorithm that allows highly-scalable implementation of the method. In Section \ref{num_exp}, we present a numerical experiment to illustrate the theoretical results established in Section \ref{theory} and test the computational performance of the randomized SVD approach to implement the optimal low-rank approximations. Our example consists of a high-dimensional atmospheric-transport source inversion problem using a large dataset of satellite observations. Finally, in Section \ref{link_da} we investigate the links between the proposed optimal approximations and variational optimization algorithms used in current operational DA systems for NWP, and we propose a new Randomized Incremental Optimal Technique (RIOT) for 4D-Var based on our findings. 

\section{Theory}
\label{theory}
\subsection{The Bayesian Problem}
\subsubsection{Finding the Maximum Likelihood}
\label{form_prel}
Here we shall review the Bayesian inversion approach to finding the maximum likelihood of a set of random variables, given some prior probability distribution functions (pdf) on these variables and on a set of physically-related observations, adopting the notations generally used in the numerical weather prediction community. Formally, the vector of observations, $\mathbf{y}$, is related to the so-called control vector $\mathbf{x}$ through a forward model operator, $H$:
 \begin{eqnarray}
 \label{eq1}
\mathbf{y}=H(\mathbf{x}),
\end{eqnarray}
where $\mathbf{x}\in E$, $\mathbf{y}\in F$, $H: E \rightarrow F$, and $E$ and $F$ are the control space (of dimension $n$) and the observation space (of dimension $p$), respectively.

Assuming Gaussian pdfs for the prior ($\mathbf{x}^b$) and the observations, with covariance error matrices $\mathbf{B}$ and $\mathbf{R}$, respectively, the maximum likelihood can be obtained by minimizing the following cost function:
\begin{eqnarray}
 \label{eq3}
J(\mathbf{x}) =\frac{1}{2}(H(\mathbf{x})-\mathbf{y})^T\mathbf{R}^{-1}(H(\mathbf{x})-\mathbf{y})+\frac{1}{2}(\mathbf{x}-\mathbf{x}^{b})^T\mathbf{B}^{-1}(\mathbf{x}-\mathbf{x}^{b}).
\end{eqnarray}
An analytical solution of (\ref{eq3}) can be expressed as: 
\begin{eqnarray}
 \label{post_update2}
\mathbf{x}^a=\mathbf{x}^b+\left(\mathbf{B}^{-1}+\mathbf{H}^T\mathbf{R}^{-1}\mathbf{H}\right)^{-1}\mathbf{H}^T\mathbf{R}^{-1}\left(\mathbf{y}-H(\mathbf{x}^b)\right)
\end{eqnarray}
where $\mathbf{H}$ is the Jacobian of the forward model. By applying the Shermann-Morrison-Woodbury formula to (\ref{post_update2}) \citep{sherman1949adjustment}, an alternative expression can be obtained: 
\begin{eqnarray}
 \label{eq2}
\mathbf{x}^a=\mathbf{x}^b+\mathbf{BH}^T\left(\mathbf{R+HBH}^T\right)^{-1}\left(\mathbf{y}-H(\mathbf{x}^b)\right),
\end{eqnarray}

Equations (\ref{eq2}) and (\ref{post_update2}) differ significantly in term of practical implementation. Eq. (\ref{eq2}) requires forming and inverting the ($p\times p$) matrix $\mathbf{R+HBH}^T$, while in Eq. (\ref{post_update2}) the ($n\times n$) matrix $\mathbf{B}^{-1}+\mathbf{H}^T\mathbf{R}^{-1}\mathbf{H}$ is inverted. The matrix $\mathbf{R+HBH}^T$ is called the matrix of innovation statistics, and it plays an important role in DA methods. 

For the common class of problems associated with a large control vector (e.g., $n>10^6$) but a small number of observations $(p<<n)$, provided that a tangent linear (i.e., an implicit $\mathbf{H}$) and an adjoint (i.e., an implicit $\mathbf{H}^T$) models are available, it may be possible to explicitly form and invert the matrix of innovation statistics and to compute the maximum likelihood exactly using Eq. (\ref{eq2}) (note that in this case $\mathbf{B}$ would need to be defined implicitly as well). This approach is at the core of the representer method \citep{bennett2005inverse}, which is similar to the so-called Physical Space Assimilation System (PSAS). Note that in practice $p$ adjoint and tangent-linear model integrations are required to extract the $p$ columns of $\mathbf{HBH}^T$. Although parallel implementation is possible to compute those $p$ columns, operational constraints (e.g., in NWP) and the limitation of computer resources may render this method impractical even for moderately large $p$ (e.g., $p>10^3$). In the case where $n$ is small enough, the maximum likelihood solution can be obtained following a similar approach, but using (\ref{post_update2}) instead of (\ref{eq2}). The formulation $(\ref{post_update2})$ is sometimes used in ensemble-based DA methods (e.g., EAKF) \citep{Anderson01}, where a small number of perturbed trajectories is used to produce a sample estimate of $\mathbf{B}^{-1}+\mathbf{H}^T\mathbf{R}^{-1}\mathbf{H}$. If neither of the analytical formulations (\ref{eq2}) and (\ref{post_update2}) can be directly used (e.g., if both $n$ and $p$ are very large), a variational optimization approach consisting of minimizing the cost function ($\ref{eq3}$) is usually the method of choice, provided an adjoint model is available. However, solutions obtained from iterative minimization techniques are often only approximations to the maximum likelihood solution, since in practice the iteration is halted before full convergence is reached.

In the present study, we shall assume that $n$ is very large ($n>10^6$) and propose optimal approximations to the Bayesian solution whose practical implementations present good scalability properties. The optimality criteria considered will rely on the information content of the inversion, whose rigorous definition is the object of the following Section.

\subsubsection{The Linear Case: Information Content and Incremental Formulation}
\label{info_content}

 In this study we shall assume that the forward model $H$ is linear, so that $H=\mathbf{H}$. The non-linear case will be treated in Section \ref{link_da}, which investigates operational DA assimilation methods in NWP. Assuming linearity for the forward model allows us to rigorously define and compute useful quantities characterizing the information content of the inversion \citep{rodgers2000inverse}. With a linear forward model, $\mathbf{H}$, the posterior distribution is Gaussian and the maximum likelihood is equal to the posterior mean.  If the linear approximation of the forward model is valid in a neighborhood of the maximum-likelihood, the local posterior pdf is approximately Gaussian, in which case the notion of posterior error covariance becomes (locally) meaningful. Moreover, for a linear forward model, the cost function defined in (\ref{eq3}) becomes quadratic, and the inverse of its Hessian at the minimum is equal to the posterior error covariance matrix, that is:
\begin{eqnarray}
 \label{eq4}
\mathbf{P}^a \equiv \overline{(\mathbf{x}^a-\mathbf{x}^t)(\mathbf{x}^a-\mathbf{x}^t)^T}= (\nabla ^2 J)^{-1}(\mathbf{x}^a)=(\mathbf{B}^{-1}+\mathbf{H}^T\mathbf{R}^{-1}\mathbf{H})^{-1},
\end{eqnarray}
where $\overline{\mathbf{x}}$ denotes the expectation of the random vector $\mathbf{x}$ and $\mathbf{x}^t$ represents the true state.
Another useful formulation for $\mathbf{P}^a$ can be obtained by applying the Shermann-Morrison-Woodbury formula to (\ref{eq4}):
\begin{eqnarray}
 \label{eq5}
\mathbf{P}^a = \mathbf{B}-\mathbf{BH}^T(\mathbf{HBH}^T+\mathbf{R})^{-1}\mathbf{HB}
\end{eqnarray}
This formula expresses $\mathbf{P}^a$ as a negative update of $\mathbf{B}$. The update term ($\mathbf{BH}^T(\mathbf{HBH}^T+\mathbf{R})^{-1}\mathbf{HB}$ can be interpreted as the posterior error reduction afforded by the observations. Another useful metric related to the information content of the problem is the DOFS, which quantifies the number of parameters independently constrained by the observations. It can be defined as the trace of the model resolution matrix (or averaging kernel) $\mathbf{A}$, which represents the sensitivity of the posterior mean to the true state \citep{rodgers2000inverse}:
\begin{eqnarray}
 \label{eq6}
\mathbf{A} \equiv \frac{\partial \mathbf{x}^a}{\partial \mathbf{x}^t} &= &\mathbf{Id}-\mathbf{P}^a\mathbf{B}^{-1} \\
 \label{eq7}
\text{DOFs}&=&\text{Tr}(\mathbf{A})
\end{eqnarray}

Finally, an additional useful formulation is to link the model resolution matrix to the posterior update term in (\ref{eq5}):
\begin{align}
 \label{eq8}
\mathbf{A}=\mathbf{BH}^T(\mathbf{HBH}^T+\mathbf{R})^{-1}\mathbf{H}
\end{align}
From (\ref{eq6}) we clearly see that $\mathbf{A}$ can be interpreted as a relative posterior error reduction.

Both $\mathbf{B}-\mathbf{P}^a$ and $\mathbf{A}$ characterize the information content of the inversion (in an absolute and relative sense, respectively) and will be central to our analysis. Since the triplet $(\mathbf{x}^{a}, \mathbf{P}^{a},\mathbf{A})$ fully characterizes the posterior pdf and the information content of the linear Bayesian problem, it shall be referred to as the solution of the Bayesian problem. It is worth noting that in our large-scale framework the matrices $\mathbf{P}^a$ and $\mathbf{A}$ cannot be computed directly nor represented explicitly in computer memory. Meaningful approximations of these quantities are therefore needed to properly interpret the statistical significance and the information content of the estimated posterior mean. Other applications include posterior sampling strategies (e.g., in cycling DA methods), where optimal and efficient approximations of the square-root for $\mathbf{P}^a$ are required (see Section \ref{opt_post_samp}).

In the case of a linear forward model, $\mathbf{H}$, the posterior update formula (\ref{eq2}) suggests a simplification of the problem by considering the increment $\delta \mathbf{x}\equiv\mathbf{x}-\mathbf{x}^b$ and innovation $\mathbf{d}\equiv\mathbf{y}-\mathbf{Hx}^b$ as the control and observation vector, respectively. The error statistics associated with the variables $\delta \mathbf{x}$ and $\mathbf{d}$ are the same as those associated with $\mathbf{x}$ and $\mathbf{y}$, respectively. Therefore, the previous equations defining the Bayesian solutions are unchanged when applying this change of variable. Note that in this incremental framework the prior is now the constant null vector  ($\delta\mathbf{x}^b=0$), whereas the true state is a random variable ($\delta\mathbf{x}^t=\mathbf{x}^t-\mathbf{x}^b$). In the rest of this paper, unless specified otherwise, the control vector $\mathbf{x}$ will be identified with the increment $\delta\mathbf{x}$, and the observation vector $\mathbf{y}$ replaced by the innovation vector $\mathbf{d}$.
 \subsubsection{Low-Rank Projections versus Low-Rank Approximations}
 \label{diff_proj_approx}
When approximating the solution of a large-scale Bayesian problem, a fundamental distinction needs to be made between low-rank projections and low-rank approximations.   A low-rank projections consists of restricting the Bayesian problem to a (small) subspace of the initial control space (by means of a projection operator). In this case a Bayesian problem of lower-rank is solved, and its solution can be considered to be an approximation of the initial problem in the sense that its posterior mean and posterior error covariance matrix converge to the true solutions as the reduced control space is (incrementally) increased. On the other hand, low-rank approximations of Bayesian problems belong to a more general class of methods that construct approximations of the posterior mean and posterior error covariance matrix of the initial high-dimensional problem, without the requirement of consistency between the approximated (low-rank) posterior mean and posterior error covariance (that is, they do not necessarily represent the posterior mean and corresponding posterior error covariance of a Bayesian problem). This distinction is important for interpretation as well as for applications of these methods. For instance, in a non-linear framework, a projection can be useful to define a low-rank version of a large-scale Bayesian problem to which MCMC sampling methods can be efficiently applied (e.g., \citet{cui2014likelihood}). Other approximations of the same rank that do not correspond to a projected Bayesian problem may provide better estimates of the true solution, but would not be suitable for this application. In our study, we will first describe the formalism of low-rank projections and provide an optimal projection for the Bayesian problem that maximizes the information content (i.e., the DOFS) of the inversion (see Section \ref{model_reduc}). We will then explore the link between the solutions of this optimal projected problem and optimal low-rank approximations of the posterior mean and posterior error covariance matrix (Section \ref{link_lr_approx}).

\subsection{Low-Rank Projections}
\label{model_reduc}

\subsubsection{General Formulation}
One way to reduce the computational cost associated with solving a large-scale Bayesian problem is to project the problem onto a small subspace, $E'\subset E$, of dimension $k<<n$. By construction, the projection restricts the posterior updates to the prior mean ($\mathbf{x}^b$) and to the prior error covariance matrix to the subspace $E'$, which effectively amounts to solving a problem of dimension $k$. The projection can be chosen so as to optimize some criteria, usually related to the information content of the inversion (e.g., maximum DOF or minimum posterior error covariance matrix for some norm). An important aspect of the projection is that it may induce an additional observational error if the observed subspace (i.e., the orthogonal of the kernel of $\mathbf{H}$) is not included in the range of the projector. This additional term is the so-called representativeness error. In the following we shall rigorously define the projected Bayesian problem and provide an analytical expression for the representativeness error.
  \begin{defn}[\textbf{Projected Bayesian Problem}]
Let us consider a Bayesian problem defined by $\mathcal{B}\equiv(E,F,\mathbf{H},\mathbf{B},\mathbf{R})$ (using the definitions in Section \ref{form_prel}), and a projection operator $\mathbf{\Pi}$ (i.e, $\mathbf{\Pi}^2=\mathbf{\Pi}$). The projected problem associated with $\mathbf{\Pi}$ is the Bayesian problem $\mathcal{B}_{\Pi}\equiv(E_\Pi,F,\mathbf{H}_\Pi,\mathbf{B}_\Pi,\mathbf{R}_{\Pi})$, where $E_\Pi=\{ \mathbf{\Pi x},\, \mathbf{x} \in E\} $, and $\mathbf{H}_\Pi$, $\mathbf{B}_\Pi$, and $\mathbf{R}_{\Pi}$ are the forward model, prior and observation error covariance matrices, respectively, in some basis of $E_\Pi$ and $F$.
 \end{defn}
 The observational error covariance matrix ($\mathbf{R}_{\Pi}$) of the projected problem can be expressed as a function of $\mathbf{B}$ and $\mathbf{\Pi}$:
 \begin{prop}[\textbf{Representativeness Error}]
  \label{agg_error}
The observational error covariance matrix $\mathbf{R}_{\Pi}$ for the projected Bayesian problem $\mathcal{B}_\mathbf{\Pi}=(E,F,\mathbf{H\Pi},\mathbf{B}_\Pi,\mathbf{R}_\Pi)$ can be expressed as the sum of the observational error covariance for the original Bayesian problem, $\mathbf{R}$, and a representativeness error, as follows:
  \begin{align}
  \label{eq:agg_err}
  \mathbf{R}_\Pi=\mathbf{R}+\mathbf{H}(\mathbf{B}+\mathbf{\Pi} \mathbf{B} \mathbf{\Pi}^T - \mathbf{B} \mathbf{\Pi}^T - \mathbf{\Pi} \mathbf{B})\mathbf{H}^T 
\end{align}
\end{prop}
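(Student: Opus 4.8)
The plan is to derive $\mathbf{R}_\Pi$ from first principles by tracking how the projection modifies the statistical model of the observations, rather than by manipulating covariance formulas directly. In the incremental framework of Section~\ref{info_content} the control increment $\mathbf{x}$ has zero prior mean and prior covariance $\mathbf{B}$, and the observation equation reads $\mathbf{y}=\mathbf{Hx}+\boldsymbol{\epsilon}$, where the observation error $\boldsymbol{\epsilon}$ has zero mean, covariance $\mathbf{R}$, and is uncorrelated with $\mathbf{x}$. The first step is to split the control into its resolved and unresolved parts relative to $\mathbf{\Pi}$, writing $\mathbf{x}=\mathbf{\Pi x}+(\mathbf{Id}-\mathbf{\Pi})\mathbf{x}$, so that $\mathbf{y}=(\mathbf{H\Pi})\mathbf{x}+\mathbf{H}(\mathbf{Id}-\mathbf{\Pi})\mathbf{x}+\boldsymbol{\epsilon}$.

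The key conceptual step is to read this rewritten equation as the observation equation of the projected problem $\mathcal{B}_\Pi$, whose forward operator is $\mathbf{H\Pi}$. From the viewpoint of $\mathcal{B}_\Pi$ the only resolvable signal is $(\mathbf{H\Pi})\mathbf{x}$, while the unresolved contribution $\mathbf{H}(\mathbf{Id}-\mathbf{\Pi})\mathbf{x}$ cannot be represented in $E_\Pi$ and must therefore be absorbed into an effective observation error $\boldsymbol{\eta}\equiv\mathbf{H}(\mathbf{Id}-\mathbf{\Pi})\mathbf{x}+\boldsymbol{\epsilon}$. By definition, the observation error covariance of the projected problem is $\mathbf{R}_\Pi\equiv\overline{\boldsymbol{\eta}\boldsymbol{\eta}^T}$, and the representativeness error will emerge as the extra contribution produced by the unresolved part of the control.

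The remaining step is a direct covariance computation. Expanding $\overline{\boldsymbol{\eta}\boldsymbol{\eta}^T}$, the two cross terms are proportional to $\overline{\mathbf{x}\boldsymbol{\epsilon}^T}$ and hence vanish, since $\boldsymbol{\epsilon}$ is uncorrelated with $\mathbf{x}$ and both are centered; using $\overline{\mathbf{x}\mathbf{x}^T}=\mathbf{B}$ this leaves $\mathbf{R}_\Pi=\mathbf{R}+\mathbf{H}(\mathbf{Id}-\mathbf{\Pi})\mathbf{B}(\mathbf{Id}-\mathbf{\Pi})^T\mathbf{H}^T$. Expanding $(\mathbf{Id}-\mathbf{\Pi})\mathbf{B}(\mathbf{Id}-\mathbf{\Pi})^T=\mathbf{B}+\mathbf{\Pi}\mathbf{B}\mathbf{\Pi}^T-\mathbf{B}\mathbf{\Pi}^T-\mathbf{\Pi}\mathbf{B}$ then yields the stated expression~(\ref{eq:agg_err}).

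I expect the main difficulty to be conceptual rather than computational: one must justify that the unresolved component $(\mathbf{Id}-\mathbf{\Pi})\mathbf{x}$ is legitimately modeled using the prior covariance $\mathbf{B}$ and treated as noise uncorrelated with $\boldsymbol{\epsilon}$. The zero-mean property, which guarantees that the representativeness error enters purely as an unbiased covariance contribution, relies crucially on the incremental formulation $(\delta\mathbf{x}^b=0)$; with a nonzero prior mean the term $\mathbf{H}(\mathbf{Id}-\mathbf{\Pi})\mathbf{x}$ would carry a deterministic bias that does not belong in a covariance. Once this probabilistic modeling is accepted, the algebra is routine.
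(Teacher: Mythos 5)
Your proposal is correct and follows essentially the same route as the paper: both identify the effective observation error of the projected problem as the innovation misfit $\mathbf{d}-\mathbf{H}\mathbf{\Pi}\,\delta\mathbf{x}^t=\mathbf{H}(\mathbf{Id}-\mathbf{\Pi})\delta\mathbf{x}^t+\boldsymbol{\epsilon}$ (up to sign), invoke the independence of prior and observation errors to kill the cross terms, and expand $\mathbf{H}(\mathbf{Id}-\mathbf{\Pi})\mathbf{B}(\mathbf{Id}-\mathbf{\Pi})^T\mathbf{H}^T$ into the four-term expression. Your framing of the unresolved component as noise absorbed into $\mathbf{R}_\Pi$, and your remark on the role of the incremental (zero prior mean) formulation, merely make explicit what the paper's definition of $\mathbf{R}_\Pi$ already encodes.
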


 \begin{proof}
For the sake of clarity, below we distinguish between the control vector $\mathbf{x}$ and its associated increment $\delta \mathbf{x}\equiv\mathbf{x}-\mathbf{x}^b$, and the observation vector $\mathbf{y}$ and the corresponding \textit{innovation} $\mathbf{d}\equiv\mathbf{y}-\mathbf{Hx}^b$. In the incremental framework, the observational error covariance matrix can be written:
 \begin{align*}
 \mathbf{R}_\mathbf{\Pi} &\equiv \overline{(\mathbf{H}\mathbf{\Pi}\delta\mathbf{x}^t-\mathbf{d})(\mathbf{H}\mathbf{\Pi}\delta\mathbf{x}^t-\mathbf{d})^T} \\ \nonumber
&  =\overline{(\mathbf{H}\mathbf{\Pi}\mathbf{x}^t-\mathbf{H}\mathbf{\Pi}\mathbf{x}^b+\mathbf{H}\mathbf{x}^b-\mathbf{H}\mathbf{x}^t+\epsilon)(\mathbf{H}\mathbf{\Pi}\mathbf{x}^t-\mathbf{H}\mathbf{\Pi}\mathbf{x}^b+\mathbf{H}\mathbf{x}^b-\mathbf{H}\mathbf{x}^t+\epsilon)^T}   \nonumber
\end{align*}
Using the independence assumption between the errors in the observations and in the prior, one obtains:
\begin{align*}
&  \mathbf{R}_\mathbf{\Pi} = \mathbf{R}+\mathbf{H}(\mathbf{B}+\mathbf{\Pi} \mathbf{B} \mathbf{\Pi}^T - \mathbf{B} \mathbf{\Pi}^T - \mathbf{\Pi} \mathbf{B})\mathbf{H}^T
 \end{align*}
\end{proof}
Our goal is to find a projection that maximizes the DOFS or minimizes the posterior error covariance matrix of the Bayesian problem $\mathcal{B}_\Pi$, in some sense to be defined thereafter. In the following we describe a two-step approach to the optimal projection problem, wherein an appropriate decomposition is used to construct a class of projectors in which the optimal solutions must lie. This restriction to a particular class of projectors allows us to greatly simplify the problem, as we show in Section \ref{opt_proj}. In addition, the two-step method yields some interesting theoretical interpretations, and can be related to previous Bayesian dimension reduction approaches, as described in Sections \ref{2steps} and \ref{gen_change_var}.
 
\subsubsection{A Two-Step Approach}
\label{2steps}
The idea behind the projection approach is to solve a Bayesian problem of smaller dimension ($k$) than the original large-scale problem (i.e., one has $k<<n$), allowing fast (sometimes analytical) computation of its solution. In this Section, a factorization of rank-$k$ projectors is proposed to construct a Bayesian problem of dimension $k$ from which the solutions of the projected problem $\mathcal{B}_\Pi$ in the canonical basis of $E$ can be derived with simple transformations. Let us recall that any projector is defined by its null space and its range, and can be written:
\begin{align}
\label{def_proj}
\mathbf{\Pi}=\mathbf{I}\left( \mathbf{O}^T\mathbf{I}\right)^{-1}\mathbf{O}^T,
\end{align}
where $\mathbf{O}$ is a matrix whose columns form an orthonormal basis for the orthogonal of the null space of $\mathbf{\Pi}$, and $\mathbf{I}$ is a matrix whose columns span the range of $\mathbf{\Pi}$. In other words, $\mathbf{I}$ defines the subspace (of dimension $k$) onto which the Bayesian problem is projected, while $\mathbf{O}$ defines the direction of the projection. Another form for (\ref{def_proj}) can be derived as follows: 
  \begin{prop}[\textbf{Factorization of a Projector}]
  \label{eq_2steps}
  Any linear operator $\mathbf{\Pi}$ is a projector of rank $k$ if and only if it can be written as the product of two rank-$k$ matrices,  one of which is the left inverse of the other, i.e.: 
  \begin{align}
  \label{fact_1}
  \mathbf{\Pi}= \mathbf{\Gamma}^\star\mathbf{\Gamma},
\end{align}
where $ \mathbf{\Gamma}^\star$ and $\mathbf{\Gamma}$ two matrices of dimension $(n\times k)$ and $(k\times n)$, respectively, with maximum rank and:
  \begin{align}
    \label{fact_2}
 \mathbf{\Gamma}\mathbf{\Gamma}^\star=\mathbf{Id}_k
\end{align}
\end{prop}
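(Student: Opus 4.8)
The plan is to prove the two implications separately, since the statement is an equivalence.

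For the ``if'' direction, I would assume $\mathbf{\Pi}=\mathbf{\Gamma}^\star\mathbf{\Gamma}$ with $\mathbf{\Gamma}\mathbf{\Gamma}^\star=\mathbf{Id}_k$ and both factors of maximal rank $k$, and verify that $\mathbf{\Pi}$ is a rank-$k$ projector. Idempotency follows by direct substitution: $\mathbf{\Pi}^2=\mathbf{\Gamma}^\star(\mathbf{\Gamma}\mathbf{\Gamma}^\star)\mathbf{\Gamma}=\mathbf{\Gamma}^\star\mathbf{Id}_k\mathbf{\Gamma}=\mathbf{\Pi}$, where the key cancellation uses exactly the left-inverse relation (\ref{fact_2}). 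To pin down the rank, I would use that an idempotent matrix has rank equal to its trace, together with the cyclic property of the trace: $\mathrm{Tr}(\mathbf{\Pi})=\mathrm{Tr}(\mathbf{\Gamma}^\star\mathbf{\Gamma})=\mathrm{Tr}(\mathbf{\Gamma}\mathbf{\Gamma}^\star)=\mathrm{Tr}(\mathbf{Id}_k)=k$. This gives $\mathrm{rank}(\mathbf{\Pi})=k$ in one stroke and avoids a more delicate sub-multiplicativity argument for showing the rank is not strictly smaller than $k$.

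For the ``only if'' direction, I would start from a rank-$k$ projector $\mathbf{\Pi}$ and construct the factors explicitly. Choose $\mathbf{\Gamma}^\star$ to be any $n\times k$ matrix whose columns form a basis of $\mathrm{range}(\mathbf{\Pi})$; it then has full column rank $k$. Since for every $\mathbf{x}$ the vector $\mathbf{\Pi x}$ lies in $\mathrm{range}(\mathbf{\Pi})$, it has unique coordinates in this basis depending linearly on $\mathbf{x}$, which defines a $k\times n$ matrix $\mathbf{\Gamma}$ with $\mathbf{\Pi}=\mathbf{\Gamma}^\star\mathbf{\Gamma}$; this $\mathbf{\Gamma}$ has rank $k$ because $\mathbf{\Pi}$ does. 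It remains to verify the left-inverse relation. Here I would use that $\mathbf{\Pi}$ acts as the identity on its own range (for a projector, $\mathbf{v}\in\mathrm{range}(\mathbf{\Pi})$ implies $\mathbf{\Pi v}=\mathbf{v}$, since $\mathbf{v}=\mathbf{\Pi w}$ gives $\mathbf{\Pi v}=\mathbf{\Pi}^2\mathbf{w}=\mathbf{\Pi w}=\mathbf{v}$). In particular $\mathbf{\Pi}\mathbf{\Gamma}^\star=\mathbf{\Gamma}^\star$, i.e. $\mathbf{\Gamma}^\star(\mathbf{\Gamma}\mathbf{\Gamma}^\star)=\mathbf{\Gamma}^\star$; cancelling the left-invertible full-column-rank factor $\mathbf{\Gamma}^\star$ yields $\mathbf{\Gamma}\mathbf{\Gamma}^\star=\mathbf{Id}_k$, as required. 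Alternatively, one can read off the factors directly from the representation (\ref{def_proj}) by setting $\mathbf{\Gamma}^\star=\mathbf{I}(\mathbf{O}^T\mathbf{I})^{-1}$ and $\mathbf{\Gamma}=\mathbf{O}^T$, for which (\ref{fact_2}) is immediate.

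I do not anticipate a serious obstacle, as the result is essentially a reformulation of the rank factorization of an idempotent matrix. The one place requiring care is the rank bookkeeping, namely establishing that the rank is exactly $k$ rather than merely at most $k$. The trace-equals-rank identity for idempotents is the most economical way to handle this and is the step I would single out as the crux of the ``if'' direction, while the left-cancellation of the full-column-rank factor $\mathbf{\Gamma}^\star$ is the corresponding crux of the ``only if'' direction.
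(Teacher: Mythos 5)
Your proof is correct. The ``only if'' direction is essentially the paper's argument: the authors simply read the factors off the representation $\mathbf{\Pi}=\mathbf{I}\left(\mathbf{O}^T\mathbf{I}\right)^{-1}\mathbf{O}^T$ by setting $\mathbf{\Gamma}=\mathbf{O}^T$ and $\mathbf{\Gamma}^\star=\mathbf{I}(\mathbf{O}^T\mathbf{I})^{-1}$, which is exactly the alternative you mention at the end; your longer construction via a basis of $\mathrm{range}(\mathbf{\Pi})$ and left-cancellation of the full-column-rank factor is equivalent but more self-contained, since it does not presuppose the range/null-space representation of a projector. Where you genuinely diverge is the ``if'' direction: the paper rewrites $\mathbf{\Pi}=\mathbf{\Gamma}^\star(\mathbf{\Gamma}\mathbf{\Gamma}^\star)^{-1}\mathbf{\Gamma}$ and matches it against the canonical form (\ref{def_proj}), again delegating both idempotency and the rank count to that representation, whereas you verify $\mathbf{\Pi}^2=\mathbf{\Pi}$ by direct substitution and pin down the rank with $\mathrm{rank}(\mathbf{\Pi})=\mathrm{Tr}(\mathbf{\Pi})=\mathrm{Tr}(\mathbf{\Gamma}\mathbf{\Gamma}^\star)=k$. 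Your version is the more elementary and arguably the more airtight on the rank bookkeeping, which the paper leaves implicit; the paper's version is shorter and keeps the factorization visibly tied to the null space and range of $\mathbf{\Pi}$, which is what the subsequent two-step construction actually uses. Both are valid; no gap to report.
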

 \begin{proof}
 Any projector $\mathbf{\Pi}$ of rank $k$ can be written $\mathbf{\Pi}=\mathbf{I}\left( \mathbf{O}^T\mathbf{I}\right)^{-1}\mathbf{O}^T$, where $\mathbf{I}$ (the range of $\mathbf{\Pi}$) and $\mathbf{O}$ (the orthogonal of the null space of $\mathbf{\Pi}$) are two matrices of rank $k$ and dimension $(n\times k)$. Defining $\mathbf{\Gamma}=\mathbf{O}^T$ and $\mathbf{\Gamma}^\star=\mathbf{I}(\mathbf{O}^T\mathbf{I})^{-1}$, we obtain $\mathbf{\Pi}=\mathbf{\Gamma}^\star\mathbf{\Gamma}$, with $\mathbf{\Gamma}$ and $\mathbf{\Gamma}^\star$ of dimension $(n\times k)$ and $(k\times n)$, respectively, and $\mathbf{\Gamma}\mathbf{\Gamma}^\star=\mathbf{Id}_k$. Now let us consider a linear operator $\mathbf{\Pi}=\mathbf{\Gamma}^\star\mathbf{\Gamma}$, with $\mathbf{\Gamma}\mathbf{\Gamma}^\star=\mathbf{Id}_k$. One has also $\mathbf{\Pi}=\mathbf{\Gamma}^\star (\mathbf{\Gamma}\mathbf{\Gamma}^\star)^{-1}\mathbf{\Gamma}$. Defining $\mathbf{O}=\mathbf{\Gamma}^T$ and $\mathbf{I}=\mathbf{\Gamma}^\star$, one sees that $\mathbf{\Pi}$ has the general form of a projector.
\end{proof}
\begin{rem}
In practice, $\mathbf{\Gamma}$ and $\mathbf{\Gamma}^\star$ can be derived from the range and the null space of the projection $\mathbf{\Pi}$ (see previous proof). Note that the decomposition $\mathbf{\Pi}= \mathbf{\Gamma}^\star\mathbf{\Gamma}$ is not unique. Indeed, $\mathbf{\Gamma}=(\mathbf{OP})^T$ and $\mathbf{\Gamma}^\star=\mathbf{IQ}((\mathbf{O}\mathbf{P})^T\mathbf{I}\mathbf{Q})^{-1}$ verify (\ref{fact_1}) and  (\ref{fact_2}) for all orthogonal matrices $\mathbf{P}$ and  $\mathbf{Q}$ (i.e., verifying $\mathbf{P}^T\mathbf{P}=\mathbf{Q}^T\mathbf{Q}=\mathbf{Id}$). This simply formalizes the fact that a projection is only defined by its null space and range, and is therefore basis-invariant.
\end{rem} 
The previous decomposition is now used to define a reduced Bayesian problem of dimension $k$ in order to be able to express the solution of the projected problem in the initial control space $E$.
 
 \begin{defn}[\textbf{Reduced Bayesian Problem}]
Let us consider a projected Bayesian problem defined by $\mathcal{B}_{\mathbf{\Pi}}\equiv(E_\Pi,F,\mathbf{H}_\Pi,\mathbf{B}_\Pi,\mathbf{R}_{\Pi})$. The \textit{reduced Bayesian problem} associated with  $\mathcal{B}_{\Pi}$ in the basis defined by the columns of $\mathbf{\Gamma}^\star$ is the problem $\mathcal{B}_\omega=(E_\omega,F,\mathbf{H\Gamma}^\star,\mathbf{\Gamma}\mathbf{B}\mathbf{\Gamma}^T,\mathbf{R}_{\Pi})$, where $E_\omega=\{\mathbf{\Gamma}\mathbf{x},\, \mathbf{x}\in E \}$ and $\mathbf{\Pi}=\mathbf{\Gamma}^\star\mathbf{\Gamma}$ is a rank-$k$ factorization of the projector $\mathbf{\Pi}$, as defined in Prop. \ref{eq_2steps}. 
\end{defn}

The couple $(\mathbf{\Gamma},\mathbf{\Gamma}^\star)$ characterizes the correspondence between the reduced control space $E_\omega$, on which the reduced problem $\mathcal{B}_\omega$ is defined, and the initial control space $E$. Note that the $k$ columns of $\mathbf{\Gamma}^\star$ form a basis for the subspace of $E$ on which the Bayesian problem is projected. Therefore, the vector $\mathbf{\Gamma x}$ represents the coordinates of $\mathbf{x}$ in the basis defined by $\mathbf{\Gamma}^\star$.
Using the posterior solution of the reduced Bayesian problem $\mathcal{B}_\omega$, one can provide an analytical solution for the projected problem $\mathcal{B}_{\mathbf{\Pi}}$ in the initial space $E$:

\begin{prop}[\textbf{Posterior Solution for a General Projection }]
 \label{proper:corresp}
Let us consider a projector $\mathbf{\Pi}=\mathbf{\Gamma}^\star\mathbf{\Gamma}$, factored according to Prop. \ref{eq_2steps}. We define the associated projected and reduced Bayesian problems, $\mathcal{B}_{\mathbf{\Pi}}\equiv(E_\Pi,F,\mathbf{H}_\Pi,\mathbf{B}_\Pi,\mathbf{R}_{\Pi})$and $\mathcal{B}_\omega=(E_\omega,F,\mathbf{H\Gamma}^\star,\mathbf{\Gamma}\mathbf{B}\mathbf{\Gamma}^T,\mathbf{R}_{\Pi})$, respectively.  One has:
\begin{eqnarray}
\label{eq:xa_proj}
\mathbf{x}^{a}_\Pi&=&\mathbf{\Gamma}^\star\mathbf{x}^{a}_\omega \\
\label{eq:Pa_proj}
\mathbf{P}^{a}_\Pi&=&{\mathbf{\Gamma}^\star} \mathbf{P}^{a}_\omega {\mathbf{\Gamma}^\star}^T\\
\label{eq:A_proj}
\mathbf{A}_\Pi&=&{\mathbf{\Gamma}^\star} {\mathbf{A}_\omega} {\mathbf{\Gamma}},
\end{eqnarray}
where  $\mathbf{x}^{a}_\Pi$ and $\mathbf{x}^{a}_\omega$ are the posterior mean of $\mathcal{B}_\Pi$ and $\mathcal{B}_\omega$, respectively, $\mathbf{P}^{a}_\Pi$ and $\mathbf{P}^{a}_\omega$ are the posterior error covariance matrices of $\mathcal{B}_\Pi$ and $\mathcal{B}_\omega$, respectively, and $\mathbf{A}_\Pi$ and $\mathbf{A}_\omega$ are the model resolution matrices of $\mathcal{B}_\Pi$ and $\mathcal{B}_\omega$, respectively. More precisely: 
\begin{eqnarray}
\label{eq:xa_proj1a}
\mathbf{x}^{a}_\Pi&=&\mathbf{\Pi}\mathbf{B}\mathbf{\Pi}^T\mathbf{H}^T ( \mathbf{H}\mathbf{\Pi}\mathbf{B}\mathbf{\Pi}^T\mathbf{H}^T+\mathbf{R}_\Pi )^{-1}\mathbf{d}\\
\label{eq:Pa_proj1a}
\mathbf{P}^{a}_\Pi&=&\mathbf{\Pi}  \mathbf{B}\mathbf{\Pi}^T- \mathbf{\Pi} \mathbf{B}\mathbf{\Pi}^T\mathbf{H}^T ( \mathbf{H}\mathbf{\Pi}\mathbf{B}\mathbf{\Pi}^T\mathbf{H}^T+\mathbf{R}_\Pi )^{-1}\mathbf{H} \mathbf{\Pi}\mathbf{B} \mathbf{\Pi}^T \\
\label{eq:A_proj1a}
\mathbf{A}_\Pi&=&\mathbf{\Pi}\mathbf{B}\mathbf{\Pi}^T\mathbf{H}^T ( \mathbf{H}\mathbf{\Pi}\mathbf{B}\mathbf{\Pi}^T\mathbf{H}^T+\mathbf{R}_\Pi )^{-1}\mathbf{H} \mathbf{\Pi}
\end{eqnarray}
Or equivalently:
\begin{eqnarray}
\label{eq:xa_proj1b}
\mathbf{x}^{a}_\Pi&=&{\mathbf{\Gamma}^\star}\left [ ( \mathbf{\Gamma} \mathbf{B} \mathbf{\Gamma}^T)^{-1} + {\mathbf{\Gamma}^\star}^T \mathbf{H}^T{\mathbf{R}_\Pi}^{-1}\mathbf{H}\mathbf{\Gamma}^\star \right ]^{-1} {\mathbf{\Gamma}^\star}^T \mathbf{H}^T{\mathbf{R}_\Pi}^{-1}\mathbf{d} \\
\label{eq:Pa_proj1b}
\mathbf{P}^{a}_\Pi&=&{\mathbf{\Gamma}^\star}\left [ ( \mathbf{\Gamma} \mathbf{B} \mathbf{\Gamma}^T)^{-1} + {\mathbf{\Gamma}^\star}^T \mathbf{H}^T{\mathbf{R}_\Pi}^{-1}\mathbf{H}\mathbf{\Gamma}^\star \right ]^{-1}{\mathbf{\Gamma}^\star}^T\\
\label{eq:A_proj1b}
\mathbf{A}_\Pi&=&\mathbf{\Pi}- {\mathbf{\Gamma}^\star}\left [ ( \mathbf{\Gamma} \mathbf{B} \mathbf{\Gamma}^T)^{-1} + {\mathbf{\Gamma}^\star}^T \mathbf{H}^T{\mathbf{R}_\Pi}^{-1}\mathbf{H}\mathbf{\Gamma}^\star \right ]^{-1}  ( \mathbf{\Gamma} \mathbf{B} \mathbf{\Gamma}^T)^{-1} \mathbf{\Gamma}
\end{eqnarray}
Here the solution $(\mathbf{x}^{a}_\Pi,\mathbf{P}^{a}_\Pi,\mathbf{A}_\Pi)$ is expressed in the canonical basis of the initial control space $E$.
\end{prop}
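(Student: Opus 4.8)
The plan is to reduce the whole statement to the ordinary $k$-dimensional Bayesian problem $\mathcal{B}_\omega$, whose solution is given verbatim by the closed forms of Section~\ref{info_content}, and then to transport those solutions back into the canonical basis of $E$ through the factors $\mathbf{\Gamma}$ and $\mathbf{\Gamma}^\star$. Because $\mathcal{B}_\omega=(E_\omega,F,\mathbf{H}\mathbf{\Gamma}^\star,\mathbf{\Gamma}\mathbf{B}\mathbf{\Gamma}^T,\mathbf{R}_\Pi)$ is a genuine full-rank problem (the $k\times k$ prior $\mathbf{\Gamma}\mathbf{B}\mathbf{\Gamma}^T$ is invertible), I would first apply (\ref{eq2}), (\ref{eq5}) and (\ref{eq8}) directly to it, writing, with the shorthand $\mathbf{S}=\mathbf{H}\mathbf{\Gamma}^\star(\mathbf{\Gamma}\mathbf{B}\mathbf{\Gamma}^T){\mathbf{\Gamma}^\star}^T\mathbf{H}^T+\mathbf{R}_\Pi$ for the reduced innovation matrix and using $(\mathbf{H}\mathbf{\Gamma}^\star)^T={\mathbf{\Gamma}^\star}^T\mathbf{H}^T$,
\begin{align*}
\mathbf{x}^a_\omega &= (\mathbf{\Gamma}\mathbf{B}\mathbf{\Gamma}^T){\mathbf{\Gamma}^\star}^T\mathbf{H}^T\mathbf{S}^{-1}\mathbf{d},\qquad \mathbf{A}_\omega = (\mathbf{\Gamma}\mathbf{B}\mathbf{\Gamma}^T){\mathbf{\Gamma}^\star}^T\mathbf{H}^T\mathbf{S}^{-1}\mathbf{H}\mathbf{\Gamma}^\star, \\
\mathbf{P}^a_\omega &= \mathbf{\Gamma}\mathbf{B}\mathbf{\Gamma}^T-(\mathbf{\Gamma}\mathbf{B}\mathbf{\Gamma}^T){\mathbf{\Gamma}^\star}^T\mathbf{H}^T\mathbf{S}^{-1}\mathbf{H}\mathbf{\Gamma}^\star(\mathbf{\Gamma}\mathbf{B}\mathbf{\Gamma}^T).
\end{align*}

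The second and conceptually central step is to justify the transport rules (\ref{eq:xa_proj})--(\ref{eq:A_proj}). The reduced problem is exactly $\mathcal{B}_\Pi$ written in the coordinate frame whose axes are the columns of $\mathbf{\Gamma}^\star$: a control vector has reduced coordinates $\mathbf{\Gamma}\mathbf{x}$ and is lifted back to $E$ by $\mathbf{\Gamma}^\star$ (recall $\mathbf{\Gamma}\mathbf{\Gamma}^\star=\mathbf{Id}_k$). Consequently the posterior mean, a point, lifts as $\mathbf{x}^a_\Pi=\mathbf{\Gamma}^\star\mathbf{x}^a_\omega$; the posterior covariance, a second-order statistic, transforms as the bilinear form $\mathbf{P}^a_\Pi=\mathbf{\Gamma}^\star\mathbf{P}^a_\omega{\mathbf{\Gamma}^\star}^T$; and the resolution matrix $\mathbf{A}=\partial\mathbf{x}^a/\partial\mathbf{x}^t$ acquires one lift on its output and one projection on its input, $\mathbf{A}_\Pi=\mathbf{\Gamma}^\star\mathbf{A}_\omega\mathbf{\Gamma}$, since the reduced true state is $\mathbf{\Gamma}\mathbf{x}^t$ and $\partial(\mathbf{\Gamma}\mathbf{x}^t)/\partial\mathbf{x}^t=\mathbf{\Gamma}$ by the chain rule. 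Each of these three identities can alternatively be checked by the direct substitution of Step~3, which I regard as the safe route.

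The third step is pure algebra. Substituting the Step~1 expressions into the transport rules and invoking the two identities $\mathbf{\Gamma}^\star\mathbf{\Gamma}=\mathbf{\Pi}$ and $\mathbf{\Gamma}^\star(\mathbf{\Gamma}\mathbf{B}\mathbf{\Gamma}^T){\mathbf{\Gamma}^\star}^T=\mathbf{\Pi}\mathbf{B}\mathbf{\Pi}^T$, every adjacent $\mathbf{\Gamma}^\star\cdots\mathbf{\Gamma}$ pair collapses to a $\mathbf{\Pi}$, the matrix $\mathbf{S}$ becomes $\mathbf{H}\mathbf{\Pi}\mathbf{B}\mathbf{\Pi}^T\mathbf{H}^T+\mathbf{R}_\Pi$, and (\ref{eq:xa_proj1a})--(\ref{eq:A_proj1a}) drop out. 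The information-form expressions (\ref{eq:xa_proj1b})--(\ref{eq:A_proj1b}) then follow by instead writing the reduced solution in its inverse-Hessian form (as in (\ref{post_update2}) and (\ref{eq4})), namely with $\mathbf{P}^a_\omega=[(\mathbf{\Gamma}\mathbf{B}\mathbf{\Gamma}^T)^{-1}+{\mathbf{\Gamma}^\star}^T\mathbf{H}^T\mathbf{R}_\Pi^{-1}\mathbf{H}\mathbf{\Gamma}^\star]^{-1}$, and transporting once more by the rules of Step~2.

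I expect the main obstacle to be conceptual rather than computational, and it concerns the information form specifically: the projected prior $\mathbf{\Pi}\mathbf{B}\mathbf{\Pi}^T$ has rank $k<n$ and is therefore singular, so $(\mathbf{\Pi}\mathbf{B}\mathbf{\Pi}^T)^{-1}$ does not exist and (\ref{eq:xa_proj1b})--(\ref{eq:A_proj1b}) simply cannot be written directly in $E$; it is precisely the detour through the full-rank reduced prior $\mathbf{\Gamma}\mathbf{B}\mathbf{\Gamma}^T$ that makes the Hessian form available at all. Carefully tracking this rank deficiency, and verifying that the representativeness covariance $\mathbf{R}_\Pi$ of Prop.~\ref{agg_error} is literally the same operator in $\mathcal{B}_\Pi$ and in $\mathcal{B}_\omega$ (so that $\mathbf{S}$ is unambiguous), is where the argument needs the most care.
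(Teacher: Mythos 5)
Your proposal is correct and follows essentially the same route as the paper's own proof: identify $\mathcal{B}_\omega$ as $\mathcal{B}_\Pi$ expressed in the basis given by the columns of $\mathbf{\Gamma}^\star$, solve the full-rank reduced problem with the standard closed forms, and transport the solution back to $E$ via $\mathbf{\Gamma}^\star$ (with the extra $\mathbf{\Gamma}$ on the input side of the resolution matrix), after which the collapse of $\mathbf{\Gamma}^\star\mathbf{\Gamma}$ pairs into $\mathbf{\Pi}$ yields both families of formulas. Your closing remark that the information form only exists because the reduced prior $\mathbf{\Gamma}\mathbf{B}\mathbf{\Gamma}^T$ is invertible while $\mathbf{\Pi}\mathbf{B}\mathbf{\Pi}^T$ is not is a correct and worthwhile observation, consistent with the way the paper states (\ref{eq:xa_proj1b})--(\ref{eq:A_proj1b}).
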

\begin{proof}
Let us define the two vector spaces $E_\mathbf{\Pi}=\{\mathbf{\Pi}\mathbf{x},\, \mathbf{x} \in E\}$ and $E_\omega=\{\mathbf{\Gamma} \mathbf{x},\, \mathbf{x} \in E\}$. One can easily verify that the following application defines an isomorphism between $E_\Pi$ and $E_\omega$:
\begin{align}
\begin{cases}
g: E_\Pi\rightarrow E_\omega  \\
\mathbf{x}_\mathbf{\Pi} \longmapsto \mathbf{x}_\omega = \mathbf{\Gamma} \mathbf{x}_\mathbf{\Pi} \\ \notag
g^{-1}: E_\omega \rightarrow  E_\mathbf{\Pi}\  \\
\mathbf{x}_\omega \longmapsto \mathbf{x}_\mathbf{\Pi} = \mathbf{\Gamma}^\star \mathbf{x}_\omega \notag
\end{cases}
\end{align}

 With this definition, $g$ associates any vector of $E_\Pi$ expressed in the canonical basis of $E$ to its coordinates in the basis formed by the columns of $\mathbf{\Gamma}^\star$, while $g^{-1}$ associates any vector of $E_\Pi$ expressed in the basis defined by $\mathbf{\Gamma}^\star$ to its coordinates in the canonical basis of $E$.  The prior error covariance matrix $\mathbf{B}_\Pi$ in the basis defined by $\mathbf{\Gamma}^\star$ is simply $\overline{(\mathbf{\Gamma}\mathbf{\Pi}(\mathbf{x}^t-\mathbf{x}^b)(\mathbf{\Gamma}\mathbf{\Pi}(\mathbf{x}^t-\mathbf{x}^b))^T}=\overline{(\mathbf{\Gamma}(\mathbf{x}^t-\mathbf{x}^b)(\mathbf{\Gamma}(\mathbf{x}^t-\mathbf{x}^b))^T}=\mathbf{\Gamma}\mathbf{B}\mathbf{\Gamma}^T$.  Likewise, the forward model $\mathbf{H}_\Pi$ expressed in the basis defined by $\mathbf{\Gamma}^\star$ is simply $\mathbf{H}\mathbf{\Gamma}^\star$, since: $\forall \mathbf{x}_\Pi \in E_\Pi,\, \mathbf{H}_\Pi\mathbf{x}_\Pi=\mathbf{H}\mathbf{\Pi}\mathbf{x}=\mathbf{H}\mathbf{\Gamma}^\star\mathbf{\Gamma}\mathbf{\Pi}\mathbf{x}=\mathbf{H}\mathbf{\Gamma}^\star\mathbf{\Gamma}\mathbf{x}=\mathbf{H}\mathbf{\Gamma}^\star\mathbf{x}_\omega$. Therefore, the Bayesian problems $\mathcal{B}_{\mathbf{\Pi}}\equiv(E_\Pi,F,\mathbf{H}_\Pi,\mathbf{B}_\Pi,,\mathbf{R}_{\mathbf{\Pi}})$ and $\mathcal{B}_\omega=(E_\omega,F,\mathbf{H\Gamma}^\star,\mathbf{\Gamma}\mathbf{B}\mathbf{\Gamma}^T,\mathbf{R}_{\Pi})$ are strictly equivalent ($\mathcal{B}_\omega$ is $\mathcal{B}_{\mathbf{\Pi}}$ expressed in the particular basis defined by $\mathbf{\Gamma}^\star$). Noting $\mathbf{x}^{a}_\omega$, $\mathbf{P}^{a}_\omega$ and $\mathbf{A}_\omega$ the posterior mean, posterior error covariance and model resolution matrix, respectively, of the Bayesian problem $\mathcal{B}_\omega$, one can directly obtain Eq. (\ref{eq:xa_proj}) and (\ref{eq:Pa_proj}) by applying $g^{-1}$. For Eq. (\ref{eq:A_proj}), we note that the model resolution matrix of the reduced problem $\mathcal{B}_\Pi$ in the canonical basis must verify: $\forall \mathbf{x} \in E,\, \mathbf{A}_\Pi \mathbf{\Pi}\mathbf{x}=\mathbf{\Gamma}^\star\mathbf{A}_\omega\mathbf{x}_\omega$. Using $\mathbf{x}_\omega=\mathbf{\Gamma}\mathbf{\Pi}\mathbf{x}$ in the right-hand side, one obtains $ \mathbf{A}_\Pi \mathbf{\Pi}\mathbf{x}=\mathbf{\Gamma}^\star\mathbf{A}_\omega\mathbf{\Gamma}\mathbf{\Pi}\mathbf{x}$, which by identification gives Eq. (\ref{eq:A_proj}). Formulas (\ref{eq:xa_proj1a})-(\ref{eq:A_proj1a}) and (\ref{eq:xa_proj1b})-(\ref{eq:A_proj1b}) are then obtained by replacing $(\mathbf{x}^{a}_\omega, \mathbf{P}^{a}_\omega,\mathbf{A}_\omega)$ in (\ref{eq:xa_proj})-(\ref{eq:A_proj}) using Eq. (\ref{eq2}), (\ref{eq5}), (\ref{eq8}), and Eq. (\ref{post_update2}), (\ref{eq4}), (\ref{eq6}), respectively.   
\end{proof}

\begin{rem}
As long as the dimension $p$ of the observation space $F$ allows for explicit construction and inversion of ($p\times p$) covariance matrices in $F$, formulas (\ref{eq:xa_proj1a})-(\ref{eq:A_proj1a}) or (\ref{eq:xa_proj1b})-(\ref{eq:A_proj1b}) can be used to compute the posterior mean and extract any column of the posterior error covariance or model resolution matrices for the projected problem. On the other hand, when $p$ is large, due to the presence of the representativeness error $\mathbf{R}_\Pi$ and the necessity to form the $(p \times p)$ matrix $\mathbf{H}(\mathbf{B}+\mathbf{\Pi} \mathbf{B} \mathbf{\Pi}^T - \mathbf{B} \mathbf{\Pi}^T - \mathbf{\Pi} \mathbf{B})\mathbf{H}^T$, it may not be practical to use either of the two formulations (\ref{eq:xa_proj1a})-(\ref{eq:A_proj1a}) or (\ref{eq:xa_proj1b})-(\ref{eq:A_proj1b}). Interestingly, as we will show in Section \ref{opt_proj}, one can define an optimal projection which has the remarkable property that it effectively avoids the need to know and evaluate the representativeness errors covariance matrix to compute the posterior solution.
 \end{rem}

\paragraph{Aggregations and Projections}
\label{rem:aggregation_proj}
In the multi-scale formalism presented by \citet{bocquet2011bayesian}, $\mathbf{\Gamma}$ and $\mathbf{\Gamma}^\star$ are called the \textit{aggregation} and \textit{prolongation} operators, respectively, and $\mathbf{\Gamma}\mathbf{\Gamma}^\star=\mathbf{Id}_k$ is an imposed stability condition. In their study the operator $\mathbf{\Gamma}$ consists of a weighted average of model grid cell parameters (e.g., atmospheric fluxes) and the objective is to solve an aggregated version of the initial Bayesian problem of smaller dimension, which corresponds to our reduced problem $\mathcal{B}_\omega$. Although Prop. \ref{proper:corresp} shows that the formulations for the aggregated problem $\mathcal{B}_\omega$ and the projected problem $\mathcal{B}_\Pi$ are theoretically equivalent, their interpretations are quite different. In the projection framework, the analysis is centered on choosing a subspace of $E$, $E_\Pi$, on which the Bayesian problem is solved, i.e., on the choice of  $\mathbf{\Gamma}^\star$, whose columns represent a basis of $E_\Pi$ (e.g., \citet{turner2015balancing}). On the other hand, in the aggregation framework, the problem focuses on the choice of an average operator to define an aggregated problem, i.e., on $\mathbf{\Gamma}$ . Likewise, in the projection framework the posterior solution is analyzed in $E_\Pi$, while in the aggregation framework the posterior solution for the aggregated control vector in $E_\omega$ is the meaningful quantity to interpret.  Note that, from Eq. (\ref{eq:A_proj}), the DOFS of the aggregated problem is the same as the DOFS of the associated projected problem expressed in the canonical basis of $E$, since $\mathrm{Tr}(\mathbf{A}_\Pi)=\mathrm{Tr}({\mathbf{\Gamma}^\star} {\mathbf{A}_\omega} {\mathbf{\Gamma}})=\mathrm{Tr}({\mathbf{A}_\omega} {\mathbf{\Gamma}}{\mathbf{\Gamma}^\star} )=\mathrm{Tr}({\mathbf{A}_\omega})$. In our analysis $\mathbf{\Gamma}$ is a general $(k\times n)$ operator, therefore we shall refer to it as a \textit{reduction} operator (instead of an aggregation operator) and refer to $\mathbf{\Gamma}^\star$ as a prolongation operator for the sake of consistency with the formalism of \citet{bocquet2011bayesian}.

\subsubsection{A Generalized Change of Variable for Linear Bayesian Problems}
\label{gen_change_var}
We now turn to the problem of optimizing the choice of the projection $\mathbf{\Pi}$. The factorization of the projection described in the previous section will be used to construct our optimal solution in two steps.  The first step consists, for a given reduction operator $\mathbf{\Gamma}$, of finding a prolongation operator $\mathbf{\Gamma}^\star$ that minimizes the representativeness error  $\mathbf{R}_{\Pi}$ of the projected problem. The following Theorem provides such an optimal prolongation operator $\mathbf{\Gamma}^\star$ as a function of $\mathbf{\Gamma}$:
\begin{thm}[\textbf{Optimal Prolongation}]
\label{thm:best_prolong}
For any reduction operator $\mathbf{\Gamma}$, there exists a prolongation operator $\mathbf{\Gamma}_{opt}^\star$ such that the representativeness error is minimum w.r.t. the L\"owner partial ordering. More specifically, one has:
\begin{align}
\label{best_prolongation}
&\forall \mathbf{\Gamma} \in \mathcal{M}_{k,n}(\mathbb{R}),\, \exists  {\mathbf{\Gamma}_{opt}^\star} \in \mathcal{M}_{n,k}(\mathbb{R}) \,|\, \forall {\mathbf{\Gamma}^\star}\in \mathcal{M}_{n,k}: 
\mathbf{R}_{\Pi_\text{opt}} \le \mathbf{R}_{\Pi}   ,
\end{align}
where $\mathbf{\Pi}=\mathbf{\Gamma}^\star\mathbf{\Gamma}$, $\mathbf{\Pi_\text{opt}}=\mathbf{\Gamma}_{opt}^\star\mathbf{\Gamma}$, $\mathcal{M}_{m,n}$ represents the space of $(m \times n)$ real matrices, and the symbol $\le$ denotes the L\"owner partial ordering within the set of real positive definite matrices.\newline
Moreover, one has:
\begin{eqnarray}
\label{best_prolong_eq}
\mathbf{\Gamma}^\star_{opt} =\mathbf{B}\mathbf{\Gamma}^T(\mathbf{\Gamma}\mathbf{B}\mathbf{\Gamma}^T)^{-1} 
\end{eqnarray}
\end{thm}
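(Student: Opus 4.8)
The plan is to exploit the perfect-square structure hidden in the representativeness error of Proposition~\ref{agg_error}. The first thing I would do is observe that the matrix sandwiched between $\mathbf{H}$ and $\mathbf{H}^T$ in~(\ref{eq:agg_err}) factors exactly: with $\mathbf{\Pi}=\mathbf{\Gamma}^\star\mathbf{\Gamma}$ a short expansion gives
\[
\mathbf{B}+\mathbf{\Pi}\mathbf{B}\mathbf{\Pi}^T-\mathbf{B}\mathbf{\Pi}^T-\mathbf{\Pi}\mathbf{B}=(\mathbf{\Pi}-\mathbf{Id})\mathbf{B}(\mathbf{\Pi}-\mathbf{Id})^T,
\]
so that $\mathbf{R}_\Pi=\mathbf{R}+\mathbf{H}(\mathbf{\Pi}-\mathbf{Id})\mathbf{B}(\mathbf{\Pi}-\mathbf{Id})^T\mathbf{H}^T$. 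Because $\mathbf{R}$ and $\mathbf{H}$ do not depend on the prolongation, minimizing $\mathbf{R}_\Pi$ in the L\"owner order is equivalent to minimizing the manifestly positive-semidefinite quantity $\mathbf{H}(\mathbf{\Gamma}^\star\mathbf{\Gamma}-\mathbf{Id})\mathbf{B}(\mathbf{\Gamma}^\star\mathbf{\Gamma}-\mathbf{Id})^T\mathbf{H}^T$ over $\mathbf{\Gamma}^\star$.

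The heart of the argument is then a matrix completion of the square in $\mathbf{\Gamma}^\star$. Setting $\mathbf{S}\equiv\mathbf{\Gamma}\mathbf{B}\mathbf{\Gamma}^T$, which is symmetric positive definite since $\mathbf{B}\succ0$ and $\mathbf{\Gamma}$ has full row rank, I would expand $(\mathbf{\Gamma}^\star\mathbf{\Gamma}-\mathbf{Id})\mathbf{B}(\mathbf{\Gamma}^\star\mathbf{\Gamma}-\mathbf{Id})^T$ as a quadratic in $\mathbf{\Gamma}^\star$ and recast it as
\[
(\mathbf{\Gamma}^\star-\mathbf{\Gamma}^\star_{opt})\,\mathbf{S}\,(\mathbf{\Gamma}^\star-\mathbf{\Gamma}^\star_{opt})^T+\big(\mathbf{B}-\mathbf{B}\mathbf{\Gamma}^T\mathbf{S}^{-1}\mathbf{\Gamma}\mathbf{B}\big),
\]
where $\mathbf{\Gamma}^\star_{opt}=\mathbf{B}\mathbf{\Gamma}^T\mathbf{S}^{-1}=\mathbf{B}\mathbf{\Gamma}^T(\mathbf{\Gamma}\mathbf{B}\mathbf{\Gamma}^T)^{-1}$ is precisely the claimed optimum. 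The second bracket is independent of $\mathbf{\Gamma}^\star$, while the first is positive semidefinite and vanishes if and only if $\mathbf{\Gamma}^\star=\mathbf{\Gamma}^\star_{opt}$ (using that $\mathbf{S}$ is invertible). Conjugation by $\mathbf{H}$ preserves positive semidefiniteness, so for every $\mathbf{\Gamma}^\star\in\mathcal{M}_{n,k}$ one obtains
\[
\mathbf{R}_\Pi-\mathbf{R}_{\Pi_{opt}}=\mathbf{H}(\mathbf{\Gamma}^\star-\mathbf{\Gamma}^\star_{opt})\,\mathbf{S}\,(\mathbf{\Gamma}^\star-\mathbf{\Gamma}^\star_{opt})^T\mathbf{H}^T\succeq0,
\]
which is the L\"owner inequality~(\ref{best_prolongation}). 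It then remains only to check that $\mathbf{\Gamma}^\star_{opt}$ is a legitimate prolongation in the sense of Proposition~\ref{eq_2steps}: it has full column rank, being a product of full-rank factors, and satisfies the stability constraint $\mathbf{\Gamma}\mathbf{\Gamma}^\star_{opt}=\mathbf{\Gamma}\mathbf{B}\mathbf{\Gamma}^T\mathbf{S}^{-1}=\mathbf{Id}_k$, so that $\mathbf{\Pi}_{opt}=\mathbf{\Gamma}^\star_{opt}\mathbf{\Gamma}$ is indeed a rank-$k$ projector and $\mathbf{R}_{\Pi_{opt}}$ a genuine representativeness error.

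I expect the main obstacle to be conceptual rather than computational: for a partially ordered family of matrices a L\"owner-minimum need not exist at all, so the real content is that the completion of the square produces a single $\mathbf{\Gamma}^\star_{opt}$ dominating every competitor \emph{simultaneously in all directions}. A convenient by-product of this route is that the unconstrained minimizer of the quadratic already satisfies $\mathbf{\Gamma}\mathbf{\Gamma}^\star=\mathbf{Id}_k$, so no Lagrange-multiplier treatment of the constraint is needed; the constraint re-enters only to confirm feasibility. The one place that genuinely demands care is the bookkeeping of the completion of the square — in particular verifying that the cross terms reproduce $-\mathbf{\Gamma}^\star\mathbf{\Gamma}\mathbf{B}-\mathbf{B}\mathbf{\Gamma}^T{\mathbf{\Gamma}^\star}^T$ exactly and that the residual constant matches $\mathbf{B}-\mathbf{B}\mathbf{\Gamma}^T\mathbf{S}^{-1}\mathbf{\Gamma}\mathbf{B}$.
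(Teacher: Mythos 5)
Your proof is correct, and it takes a genuinely more self-contained route than the paper's. You first factor the update term as the exact square $(\mathbf{\Pi}-\mathbf{Id})\mathbf{B}(\mathbf{\Pi}-\mathbf{Id})^T$ and then perform an explicit matrix completion of the square in $\mathbf{\Gamma}^\star$, which yields the identity $\mathbf{R}_\Pi-\mathbf{R}_{\Pi_{opt}}=\mathbf{H}(\mathbf{\Gamma}^\star-\mathbf{\Gamma}^\star_{opt})\,\mathbf{S}\,(\mathbf{\Gamma}^\star-\mathbf{\Gamma}^\star_{opt})^T\mathbf{H}^T\ge 0$; the algebra of the cross terms and of the residual $\mathbf{B}-\mathbf{B}\mathbf{\Gamma}^T\mathbf{S}^{-1}\mathbf{\Gamma}\mathbf{B}$ checks out. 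The paper instead recognizes the minimization of $\Delta\mathbf{B}$ as a Best Linear Unbiased Estimator problem and cites the L\"owner-optimality of the BLUE among all linear estimators, which is shorter but outsources the key step to an external reference (the standard proof of that reference is essentially your completion of squares). Your version buys three things the paper's proof does not make explicit: a transparent reason why a L\"owner minimum exists at all for this partially ordered family (the excess over the optimum is an exact positive semidefinite quadratic), an explicit verification that $\mathbf{\Gamma}\mathbf{\Gamma}^\star_{opt}=\mathbf{Id}_k$ so that $\mathbf{\Pi}_{opt}$ is a legitimate rank-$k$ projector, and, as a bonus, the perfect-square form of $\Delta\mathbf{B}$ that the paper only introduces later in Proposition~\ref{prop:geom_interp_prolong}. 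The only implicit hypothesis you should flag is that $\mathbf{\Gamma}$ has full row rank so that $\mathbf{S}=\mathbf{\Gamma}\mathbf{B}\mathbf{\Gamma}^T$ is invertible, but this is already assumed by the statement of the theorem through the formula $(\mathbf{\Gamma}\mathbf{B}\mathbf{\Gamma}^T)^{-1}$.
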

\begin{proof}
Let us rewrite the observational error covariance for the projected problem using the decomposition (\ref{eq_2steps}): 
\begin{align}
\mathbf{R}_\mathbf{\Pi} = \mathbf{R}+\mathbf{H}(\mathbf{B}+\mathbf{\Gamma}^\star\mathbf{\Gamma} \mathbf{B}\mathbf{\Gamma}^T{\mathbf{\Gamma}^\star}^T - \mathbf{B} \mathbf{\Gamma}^T{\mathbf{\Gamma}^\star}^T -\mathbf{\Gamma}^\star\mathbf{\Gamma}  \mathbf{B})\mathbf{H}^T
\end{align}
From Lemma (\ref{eq:ineq_proj}), it is clear that minimizing $\mathbf{R}_\mathbf{\Pi}$ is equivalent to minimizing the matrix $\mathbf{\Delta B}=\mathbf{B}+\mathbf{\Gamma}^\star\mathbf{\Gamma} \mathbf{B}\mathbf{\Gamma}^T{\mathbf{\Gamma}^\star}^T - \mathbf{B} \mathbf{\Gamma}^T{\mathbf{\Gamma}^\star}^T -\mathbf{\Gamma}^\star\mathbf{\Gamma}  \mathbf{B}$. Fixing $\mathbf{\Gamma}$, we note that the solution (best prolongation $\mathbf{\Gamma}_{opt}^\star$) to this minimization problem is also the Best Linear Unbiased Estimator (BLUE) of the following problem:
\begin{align*}
 \label{demo_blue}
&\mathrm{Arg} \min_{{\mathbf{\Gamma}^\star}} \mathrm{Tr} \overline{({\mathbf{x}} - \mathbf{x}^t)(\mathbf{x} - \mathbf{x}^t)^T}  ,\\
&\text{with } 
\begin{cases}
&\mathbf{x} = \mathbf{x}_b +{\mathbf{\Gamma}^\star} (\mathbf{y}-\mathbf{\Gamma} \mathbf{x}_b) \\ \notag
&\mathbf{y} = \mathbf{\Gamma}\mathbf{x}^t \\ \notag
&  \overline{({\mathbf{x}^b} - \mathbf{x}^t)(\mathbf{x}^b - \mathbf{x}^t)^T} = \mathbf{B}
\end{cases}
\end{align*}
The BLUE solution to this problem, and therefore the optimal prolongation operator $\mathbf{\Gamma}^\star$, is given by $\mathbf{\Gamma}_{opt}^\star=\mathbf{B}\mathbf{\Gamma}^T(\mathbf{\Gamma}\mathbf{B}\mathbf{\Gamma}^T)^{-1}$. The posterior error covariance matrix of the BLUE analysis is precisely $\Delta \mathbf{B}$, and it is minimum in the sense of the  L\"owner partial ordering among all linear estimator (i.e., among all prolongation operators) (e.g., \citet{isotalo2008blue}),  which proves (\ref{best_prolongation}).
\end{proof}
 
\begin{rem}
The optimal prolongation $\mathbf{\Gamma}_{opt}^\star$ was first proposed by \citet{bocquet2011bayesian}, where it was derived from a Bayesian perspective exploiting the prior information. In our approach this result is obtained simply by minimizing the representativeness error. 
\end{rem}
Theorem \ref{thm:best_prolong} is a strong optimality result, since the optimality w.r.t. the L\"owner partial ordering implies that the representativeness error is minimum in any direction of the observation space. This leads to the following important optimality result: 

\begin{cor}
\label{opt_inf_cont}
For any reduction operator $\mathbf{\Gamma}$, the prolongation operator $\mathbf{\Gamma}_{opt}^\star=\mathbf{B}\mathbf{\Gamma}^T(\mathbf{\Gamma}\mathbf{B}\mathbf{\Gamma}^T)^{-1} $ minimizes the Fisher measurement information matrix w.r.t. the L\"owner partial ordering, i.e.:
\begin{align}
&\forall \mathbf{\Gamma} \in \mathcal{M}_{k,n}(\mathbb{R}),\, \exists  {\mathbf{\Gamma}_{opt}^\star} \in \mathcal{M}_{n,k}(\mathbb{R}) \,|\, \forall {\mathbf{\Gamma}^\star}\in \mathcal{M}_{n,k}: 
\mathbf{H}^T\mathbf{R}_{\Pi_\text{opt}}^{-1}\mathbf{H} \le \mathbf{H}^T\mathbf{R}^{-1}_{\Pi} \mathbf{H}  ,
\end{align}
where $\mathbf{\Pi}=\mathbf{\Gamma}^\star\mathbf{\Gamma}$ and $\mathbf{\Pi_\text{opt}}=\mathbf{\Gamma}_{opt}^\star\mathbf{\Gamma}$.
\end{cor}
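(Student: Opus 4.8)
The plan is to deduce the corollary directly from Theorem \ref{thm:best_prolong} by combining two standard properties of the L\"owner partial ordering on symmetric positive-definite matrices: that matrix inversion is operator-antitone, and that the ordering is preserved under a congruence $\mathbf{M}\mapsto\mathbf{H}^T\mathbf{M}\mathbf{H}$. First I would check the hypotheses needed to invert, namely that both $\mathbf{R}_\Pi$ and $\mathbf{R}_{\Pi_\text{opt}}$ are positive definite. Using Proposition \ref{agg_error} together with the identity $\mathbf{B}+\mathbf{\Pi}\mathbf{B}\mathbf{\Pi}^T-\mathbf{B}\mathbf{\Pi}^T-\mathbf{\Pi}\mathbf{B}=(\mathbf{Id}-\mathbf{\Pi})\mathbf{B}(\mathbf{Id}-\mathbf{\Pi})^T$, the representativeness error may be written as the single congruence $\mathbf{H}(\mathbf{Id}-\mathbf{\Pi})\mathbf{B}(\mathbf{Id}-\mathbf{\Pi})^T\mathbf{H}^T$, which is positive semidefinite; hence $\mathbf{R}_\Pi=\mathbf{R}+\mathbf{H}(\mathbf{Id}-\mathbf{\Pi})\mathbf{B}(\mathbf{Id}-\mathbf{\Pi})^T\mathbf{H}^T\ge\mathbf{R}>0$, and likewise $\mathbf{R}_{\Pi_\text{opt}}>0$.

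With positive-definiteness established, I would invoke the conclusion of Theorem \ref{thm:best_prolong}, namely $\mathbf{R}_{\Pi_\text{opt}}\le\mathbf{R}_\Pi$ for every prolongation $\mathbf{\Gamma}^\star$ at the fixed reduction $\mathbf{\Gamma}$. Passing to inverses through the antitonicity of inversion, and then applying the congruence by $\mathbf{H}$ (which preserves the order, since $\mathbf{H}^T\mathbf{M}\mathbf{H}\ge0$ whenever $\mathbf{M}\ge0$), transports this covariance ordering into the corresponding ordering of the measurement information matrices $\mathbf{H}^T\mathbf{R}_\bullet^{-1}\mathbf{H}$, which is exactly the inequality asserted in Corollary \ref{opt_inf_cont}. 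Since the optimal prolongation $\mathbf{\Gamma}_{opt}^\star=\mathbf{B}\mathbf{\Gamma}^T(\mathbf{\Gamma}\mathbf{B}\mathbf{\Gamma}^T)^{-1}$ and its existence are already supplied by Theorem \ref{thm:best_prolong}, no separate construction is needed, and the corollary reduces to a short composition of these two order-transformation lemmas.

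I expect the main obstacle to be the careful bookkeeping of the inequality direction through this composition: inversion and congruence interact with the L\"owner order in different ways, so the final direction has to be obtained by chaining the two steps from the Theorem's conclusion rather than read off by inspection. A secondary point worth making explicit is that, because $\mathbf{H}$ is in general rectangular and not injective, the congruence step yields only a non-strict L\"owner inequality --- consistent with the partial-ordering formulation of the corollary --- and one should confirm that the antitonicity lemma is applied to genuinely positive-definite arguments, which is precisely what the first step guarantees.
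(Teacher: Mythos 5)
Your proposal is correct and follows essentially the same route as the paper, whose one-line proof applies Lemma (\ref{eq:ineq_inv}) (antitonicity of inversion) and Lemma (\ref{eq:ineq_proj}) (preservation under congruence by $\mathbf{H}$) consecutively to the conclusion $\mathbf{R}_{\Pi_\text{opt}} \le \mathbf{R}_{\Pi}$ of Theorem \ref{thm:best_prolong}. Your preliminary verification that $\mathbf{R}_{\Pi}$ and $\mathbf{R}_{\Pi_\text{opt}}$ are positive definite, via the factorization $\mathbf{R}_{\Pi}=\mathbf{R}+\mathbf{H}(\mathbf{Id}-\mathbf{\Pi})\mathbf{B}(\mathbf{Id}-\mathbf{\Pi})^T\mathbf{H}^T\ge\mathbf{R}>0$, is a welcome detail the paper leaves implicit, since Lemma (\ref{eq:ineq_inv}) requires invertible (positive-definite) arguments.
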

\begin{proof}
This follows by applying consecutively Lemmas (\ref{eq:ineq_inv}) and (\ref{eq:ineq_proj}) to (\ref{best_prolongation}).
\end{proof}
 Theorem \ref{thm:best_prolong} and the optimal prolongation (\ref{best_prolong_eq}) yield different interpretations depending on the application. In the context of aggregation (see \ref{rem:aggregation_proj}), once an aggregation operator $\mathbf{\Gamma}$ (e.g., a weighted average of model grid-cells) has been chosen, the optimal prolongation (\ref{best_prolong_eq}) should be constructed and used together with the (reduced) forward model $\mathbf{H\Gamma}^\star$. On the other hand, in the context of a low-rank projection, once a subspace for the range of the projector has been chosen, Eq. (\ref{best_prolong_eq}) imposes an optimal direction for the projection. More precisely, if the columns of the matrix $\mathbf{I}$ represent a basis for the range of the projector $\mathbf{\Pi}$, then an optimal direction is defined by $\mathbf{D} = \mathbf{I} - \mathbf{B}^{-1}\mathbf{I}$ (i.e., $\mathbf{O}= \mathbf{B}^{-1}\mathbf{I}$ in (\ref{def_proj})). As an example of application of those concepts, we note that, in \citet{turner2015balancing}, the computation of the Gaussian Mixture Model (GMM) basis defining the range of the projection is performed simultaneously with the computation of the direction of the projection, or equivalently, the operators $\mathbf{\Gamma^\star}$ and $\mathbf{\Gamma}$ are constructed all at once. The fact that the resulting projection does not belong to the class of optimal projection defined in Prop. \ref{canon_form} is revealed by the presence of suboptimal features, such as a non-trivial minimum in the total posterior error variance for a rank $k<n$ (see interactive discussion of \citet{turner2015balancing}). Based on our results, one approach to improve the method proposed by \citet{turner2015balancing} would be to use the computed Gaussian Mixture Model (GMM) basis as range for the projection, but to replace their reduction operator (i.e., their weight matrix $\mathbf{W}$) by one that corresponds to an optimal direction for the projection.

 The following Proposition provides another useful optimality result for interpreting the optimal prolongation $\mathbf{\Gamma}_{opt}^\star$:
 \begin{prop}
 \label{prop:geom_interp_prolong}
 For any given reduction operator $\mathbf{\Gamma}$, the associated optimal prolongation operator $\mathbf{\Gamma}_{opt}^\star$ defines a rank-$k$ projection $\mathbf{\Pi}_{opt}$ that minimizes, over all rank-$k$ projections, the Frobenius distance between any square-root of the prior error covariance and its projection, i.e.:
 \begin{align}
  \| \mathbf{L}-\mathbf{\Pi}_{opt}\mathbf{L} \|_F=\min_{\mathbf{\Pi} \in \mathcal{P} }\| \mathbf{L}-\mathbf{\Pi}\mathbf{L}\|_F,
 \end{align}
 where $\mathbf{\Pi}_{opt}=\mathbf{\Gamma}_{opt}^\star\mathbf{\Gamma}$, $\mathcal{P}=\{\mathbf{\Pi}\in\mathcal{M}_n |\,\mathbf{\Pi}^2=\mathbf{\Pi},\, \mathrm{rank}(\mathbf{\Pi})=k\}$ and $\mathbf{B}=\mathbf{L}\mathbf{L}^T$
 \end{prop}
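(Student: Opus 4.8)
The plan is to reduce the stated Frobenius minimization to the Löwner minimization already established in Theorem~\ref{thm:best_prolong}. First I would rewrite the objective purely in terms of $\mathbf{B}$. For any projector $\mathbf{\Pi}$ and any square root $\mathbf{L}$ of $\mathbf{B}=\mathbf{L}\mathbf{L}^T$,
\begin{align*}
\| \mathbf{L}-\mathbf{\Pi}\mathbf{L} \|_F^2 = \mathrm{Tr}\!\left[(\mathbf{Id}-\mathbf{\Pi})\mathbf{L}\mathbf{L}^T(\mathbf{Id}-\mathbf{\Pi})^T\right] = \mathrm{Tr}\!\left[(\mathbf{Id}-\mathbf{\Pi})\mathbf{B}(\mathbf{Id}-\mathbf{\Pi})^T\right].
\end{align*}
Expanding the last expression shows it equals $\mathrm{Tr}[\mathbf{\Delta B}]$, with $\mathbf{\Delta B}=\mathbf{B}+\mathbf{\Pi}\mathbf{B}\mathbf{\Pi}^T-\mathbf{B}\mathbf{\Pi}^T-\mathbf{\Pi}\mathbf{B}$ the very matrix whose Löwner minimization is the content of Theorem~\ref{thm:best_prolong}. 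This step also disposes of the phrase ``any square-root'' at once: the objective depends on $\mathbf{L}$ only through $\mathbf{B}=\mathbf{L}\mathbf{L}^T$, so the minimizer is independent of the chosen factorization.

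Second, I would invoke Theorem~\ref{thm:best_prolong}: for the given reduction operator $\mathbf{\Gamma}$, the prolongation $\mathbf{\Gamma}_{opt}^\star=\mathbf{B}\mathbf{\Gamma}^T(\mathbf{\Gamma}\mathbf{B}\mathbf{\Gamma}^T)^{-1}$ minimizes $\mathbf{\Delta B}$ in the Löwner order over all admissible prolongations. Since the trace is monotone with respect to the Löwner ordering (if $\mathbf{M}_1\le\mathbf{M}_2$ then $\mathrm{Tr}[\mathbf{M}_1]\le\mathrm{Tr}[\mathbf{M}_2]$, because the trace of a positive semidefinite matrix is nonnegative), Löwner-minimality of $\mathbf{\Delta B}$ immediately yields minimality of $\mathrm{Tr}[\mathbf{\Delta B}]$, hence of $\| \mathbf{L}-\mathbf{\Pi}\mathbf{L} \|_F$, for $\mathbf{\Pi}_{opt}=\mathbf{\Gamma}_{opt}^\star\mathbf{\Gamma}$. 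These two steps are routine and constitute the bulk of the argument.

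The hard part will be making precise the family over which the minimum is taken. Theorem~\ref{thm:best_prolong} fixes $\mathbf{\Gamma}$ and optimizes only over $\mathbf{\Gamma}^\star$; since $\mathrm{null}(\mathbf{\Gamma}^\star\mathbf{\Gamma})=\mathrm{null}(\mathbf{\Gamma})$ for every admissible $\mathbf{\Gamma}^\star$, the competitors it actually controls are exactly the rank-$k$ projections sharing the null space of $\mathbf{\Gamma}$. The argument above therefore delivers minimality of $\mathbf{\Pi}_{opt}$ over that subfamily, which is the geometric content I would emphasize: among all projections with the prescribed direction of projection, $\mathbf{\Pi}_{opt}$ is the one whose residual $(\mathbf{Id}-\mathbf{\Pi})\mathbf{L}$ has least Frobenius norm. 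Claiming the minimum over the full set $\mathcal{P}$ of \emph{all} rank-$k$ projections requires care, because for a fixed range the unconstrained best approximation is the \emph{orthogonal} projection onto that range; consequently the global minimizer over $\mathcal{P}$ is the orthogonal projection onto the dominant $k$-dimensional eigenspace of $\mathbf{B}$, with value $\sqrt{\sum_{i>k}\lambda_i(\mathbf{B})}$, and this coincides with $\mathbf{\Pi}_{opt}$ only when $\mathrm{range}(\mathbf{B}\mathbf{\Gamma}^T)$ is precisely that eigenspace. I would therefore close the proof either by restricting $\mathcal{P}$ to the projections compatible with the given $\mathbf{\Gamma}$ (equivalently, sharing its null space), or by reading the statement jointly with the subsequent optimization over $\mathbf{\Gamma}$; in both cases the reduction to $\mathrm{Tr}[\mathbf{\Delta B}]$ together with Theorem~\ref{thm:best_prolong} finishes the argument.
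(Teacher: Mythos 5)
Your argument is essentially the paper's own: both identify $\|\mathbf{L}-\mathbf{\Pi}\mathbf{L}\|_F^2$ with $\mathrm{Tr}[\mathbf{\Delta B}]$ via $\mathbf{\Delta B}=(\mathbf{L}-\mathbf{\Pi}\mathbf{L})(\mathbf{L}-\mathbf{\Pi}\mathbf{L})^T$ and then pass the L\"owner minimality of Theorem~\ref{thm:best_prolong} through the trace. Your closing caveat is warranted rather than optional: Theorem~\ref{thm:best_prolong} fixes $\mathbf{\Gamma}$ and varies only $\mathbf{\Gamma}^\star$, so both your proof and the paper's establish minimality only over the rank-$k$ projections whose null space is $\mathrm{ker}(\mathbf{\Gamma})$, not over all of $\mathcal{P}$; for instance, with $\mathbf{B}=\mathrm{diag}(4,1)$, $\mathbf{L}=\mathrm{diag}(2,1)$ and $\mathbf{\Gamma}=(0\;\;1)$ one gets $\|\mathbf{L}-\mathbf{\Pi}_{opt}\mathbf{L}\|_F=2$, whereas the orthogonal projection onto the leading eigenvector of $\mathbf{B}$ achieves $1=\sqrt{\sum_{i>k}\lambda_i(\mathbf{B})}$. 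The statement should therefore be read with $\mathcal{P}$ restricted to projections compatible with the given $\mathbf{\Gamma}$, exactly as you propose.
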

 \begin{proof}
The proof is simply obtained by choosing a square-root $\mathbf{L}$ of $\mathbf{B}$ (i.e., $\mathbf{B}=\mathbf{LL}^T$ and noting that:
\begin{align*}
\Delta \mathbf{B}&=\mathbf{B}+\mathbf{\Pi} \mathbf{B} \mathbf{\Pi}^T - \mathbf{B} \mathbf{\Pi}^T - \mathbf{\Pi} \mathbf{B} \\
&=(\mathbf{L}-\mathbf{\Pi}\mathbf{L} )(\mathbf{L}-\mathbf{\Pi}\mathbf{L} )^T \\
 \end{align*}
Since $\Delta \mathbf{B}$ is minimum for the  L\"owner partial ordering, one has in particular:
\begin{align*}
\mathrm{Tr}\left [ (\mathbf{L}-\mathbf{\Pi}_{opt}\mathbf{L} )(\mathbf{L}-\mathbf{\Pi}_{opt}\mathbf{L} )^T \right ] &= \min_\Pi \mathrm{Tr}\left [ (\mathbf{L}-\mathbf{\Pi}\mathbf{L} )(\mathbf{L}-\mathbf{\Pi}\mathbf{L} )^T \right ] \\ 
\Longleftrightarrow \\ 
\| \mathbf{L}-\mathbf{\Pi}_{opt}\mathbf{L}  \|_F& = \min_\Pi \| \mathbf{L}-\mathbf{\Pi}\mathbf{L}  \|_F
\end{align*}

 \end{proof}
   Finally, a simple interpretation for the optimal couple $\left ( \mathbf{\Gamma}, \mathbf{\Gamma}_{opt}^\star \right )$ is possible based on the following results:

\begin{prop}[\textbf{Posterior Solution of the Reduced Bayesian Problem}]
\label{post_sol_reduc}
Let us define a Bayesian problem $\mathcal{B}=(E,F,\mathbf{H},\mathbf{B},\mathbf{R})$. Let us consider a reduction $\mathbf{\Gamma}$ and its optimal prolongation $\mathbf{\Gamma}^\star_{opt}$, and $\mathcal{B}_\omega=(E_\omega,F,\mathbf{H\Gamma}^\star_{opt},\mathbf{\Gamma}^T\mathbf{B}\mathbf{\Gamma},\mathbf{R}_{\Pi_{opt}})$ the associated reduced Bayesian problem. One has:
\begin{eqnarray}
\label{eq:omega_mean}
\mathbf{x}^{a}_\omega&=&\mathbf{\Gamma} {\mathbf{x}^a} \\
\label{eq:omega_post}
\mathbf{P}^{a}_\omega &=& \mathbf{\Gamma} {\mathbf{P}^{a}} {\mathbf{\Gamma}}^T\\
\label{eq:omega_model_res}
\mathbf{A}_\omega&=&\mathbf{\Gamma} {\mathbf{A}} {\mathbf{\Gamma}^\star_{opt}},
\end{eqnarray}
where $\mathbf{x}^{a}$, $\mathbf{P}^{a}$ and $\mathbf{A}$ are the posterior mean, posterior error covariance and model resolution matrix (respectively) of $\mathcal{B}$, and $\mathbf{x}^{a}_\omega$, $\mathbf{P}^{a}_\omega$ and $\mathbf{A}_\omega$ are the posterior mean, posterior error covariance and model resolution matrix (respectively) of $\mathcal{B}_\omega$.
\end{prop}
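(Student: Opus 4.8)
The plan is to work from the Sherman--Morrison--Woodbury (covariance) form of the Bayesian solution, Eqs.~(\ref{eq2}), (\ref{eq5}) and (\ref{eq8}), applied to the reduced problem $\mathcal{B}_\omega$, whose ingredients are the forward operator $\mathbf{H}_\omega=\mathbf{H}\mathbf{\Gamma}^\star_{opt}$, the prior $\mathbf{B}_\omega=\mathbf{\Gamma}\mathbf{B}\mathbf{\Gamma}^T$ and the observation covariance $\mathbf{R}_\omega=\mathbf{R}_{\Pi_{opt}}$. (I would flag here that the prior written $\mathbf{\Gamma}^T\mathbf{B}\mathbf{\Gamma}$ in the statement is a typo for $\mathbf{\Gamma}\mathbf{B}\mathbf{\Gamma}^T$, the $(k\times k)$ form fixed in the definition of the reduced problem, and I would use the latter.) First I would record the elementary identities that follow immediately from $\mathbf{\Gamma}^\star_{opt}=\mathbf{B}\mathbf{\Gamma}^T(\mathbf{\Gamma}\mathbf{B}\mathbf{\Gamma}^T)^{-1}$ by cancelling $(\mathbf{\Gamma}\mathbf{B}\mathbf{\Gamma}^T)^{-1}$ against $\mathbf{\Gamma}\mathbf{B}\mathbf{\Gamma}^T$, namely $\mathbf{B}_\omega\mathbf{H}_\omega^T=\mathbf{\Gamma}\mathbf{B}\mathbf{H}^T$ and its transpose $\mathbf{H}_\omega\mathbf{B}_\omega=\mathbf{H}\mathbf{B}\mathbf{\Gamma}^T$.

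The crux of the argument --- and the step I expect to be the real work --- is to show that the innovation covariance of the reduced problem coincides with that of the full problem, i.e.
\begin{align*}
\mathbf{H}_\omega\mathbf{B}_\omega\mathbf{H}_\omega^T+\mathbf{R}_{\Pi_{opt}}=\mathbf{H}\mathbf{B}\mathbf{H}^T+\mathbf{R}.
\end{align*}
To establish this I would first observe that, with the optimal prolongation, $\mathbf{\Pi}_{opt}\mathbf{B}=\mathbf{B}\mathbf{\Gamma}^T(\mathbf{\Gamma}\mathbf{B}\mathbf{\Gamma}^T)^{-1}\mathbf{\Gamma}\mathbf{B}$ is symmetric, so that $\mathbf{\Pi}_{opt}\mathbf{B}=\mathbf{B}\mathbf{\Pi}_{opt}^T$. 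Combining this with the idempotency $\mathbf{\Pi}_{opt}^2=\mathbf{\Pi}_{opt}$ collapses the representativeness term appearing in Prop.~\ref{agg_error}:
\begin{align*}
\Delta\mathbf{B}=\mathbf{B}+\mathbf{\Pi}_{opt}\mathbf{B}\mathbf{\Pi}_{opt}^T-\mathbf{B}\mathbf{\Pi}_{opt}^T-\mathbf{\Pi}_{opt}\mathbf{B}=(\mathbf{Id}-\mathbf{\Pi}_{opt})\mathbf{B},
\end{align*}
while the identities above give $\mathbf{H}_\omega\mathbf{B}_\omega\mathbf{H}_\omega^T=\mathbf{H}\mathbf{\Pi}_{opt}\mathbf{B}\mathbf{H}^T$. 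Adding $\mathbf{R}_{\Pi_{opt}}=\mathbf{R}+\mathbf{H}\,\Delta\mathbf{B}\,\mathbf{H}^T$ then cancels the two $\mathbf{H}\mathbf{\Pi}_{opt}\mathbf{B}\mathbf{H}^T$ contributions and leaves exactly $\mathbf{H}\mathbf{B}\mathbf{H}^T+\mathbf{R}$. This is precisely the ``remarkable property'' anticipated in the remark following Prop.~\ref{proper:corresp}: under the optimal direction the representativeness error is invisible to the innovation statistics, which is what makes the reduced solution inherit the full-problem structure.

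Writing $\mathbf{S}\equiv\mathbf{H}\mathbf{B}\mathbf{H}^T+\mathbf{R}$, the three claimed identities then drop out by direct substitution. For the mean, $\mathbf{x}^a_\omega=\mathbf{B}_\omega\mathbf{H}_\omega^T\mathbf{S}^{-1}\mathbf{d}=\mathbf{\Gamma}\mathbf{B}\mathbf{H}^T\mathbf{S}^{-1}\mathbf{d}=\mathbf{\Gamma}\mathbf{x}^a$, giving (\ref{eq:omega_mean}). For the covariance, $\mathbf{P}^a_\omega=\mathbf{B}_\omega-\mathbf{B}_\omega\mathbf{H}_\omega^T\mathbf{S}^{-1}\mathbf{H}_\omega\mathbf{B}_\omega=\mathbf{\Gamma}\mathbf{B}\mathbf{\Gamma}^T-\mathbf{\Gamma}\mathbf{B}\mathbf{H}^T\mathbf{S}^{-1}\mathbf{H}\mathbf{B}\mathbf{\Gamma}^T=\mathbf{\Gamma}\mathbf{P}^a\mathbf{\Gamma}^T$, giving (\ref{eq:omega_post}). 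For the resolution matrix, $\mathbf{A}_\omega=\mathbf{B}_\omega\mathbf{H}_\omega^T\mathbf{S}^{-1}\mathbf{H}_\omega=\mathbf{\Gamma}\mathbf{B}\mathbf{H}^T\mathbf{S}^{-1}\mathbf{H}\mathbf{\Gamma}^\star_{opt}=\mathbf{\Gamma}\mathbf{A}\mathbf{\Gamma}^\star_{opt}$, giving (\ref{eq:omega_model_res}). Every line here is routine matrix algebra once the innovation-covariance identity is in hand; the entire weight of the proof rests on the symmetry of $\mathbf{\Pi}_{opt}\mathbf{B}$ and the consequent cancellation of $\Delta\mathbf{B}$, so I would present that computation carefully and treat the final substitutions as verification.
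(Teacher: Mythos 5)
Your proof is correct and follows essentially the same route as the paper's (much terser) argument: both rest on the optimality identities $\mathbf{\Pi}_{opt}\mathbf{B}\mathbf{\Pi}_{opt}^T=\mathbf{\Pi}_{opt}\mathbf{B}=\mathbf{B}\mathbf{\Pi}_{opt}^T$ and the resulting invariance of the innovation statistics $\mathbf{H}_\omega\mathbf{B}_\omega\mathbf{H}_\omega^T+\mathbf{R}_{\Pi_{opt}}=\mathbf{HBH}^T+\mathbf{R}$, followed by direct substitution into the Sherman--Morrison--Woodbury forms of the posterior solution. Your write-up simply makes explicit the cancellation of the representativeness term and the intermediate identities that the paper leaves implicit, and your flag of the $\mathbf{\Gamma}^T\mathbf{B}\mathbf{\Gamma}$ typo is justified.
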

\begin{proof}
Formulas (\ref{eq:omega_mean})-(\ref{eq:omega_model_res}) are obtained by using the optimality properties $\mathbf{\Pi}_{opt}\mathbf{B}\mathbf{\Pi}_{opt}^T=\mathbf{\Pi}_{opt}\mathbf{B}=\mathbf{B}\mathbf{\Pi}_{opt}^T$ and the (resulting) invariance of the innovation statistics $(\mathbf{HBH}^T+\mathbf{R})$ in formulas (\ref{eq:xa_proj1a})-(\ref{eq:A_proj1a}).
\end{proof}
Similar formulas can be obtained for the solution of the projected Bayesian problem in the canonical basis of $E$:
\begin{cor}[\textbf{Posterior Solution of the Projected Bayesian Problem }]
\label{post_sol_projec}
Let us define a Bayesian problem $\mathcal{B}=(E,F,\mathbf{H},\mathbf{B},\mathbf{R})$. Let us consider a reduction $\mathbf{\Gamma}$ and its optimal prolongation $\mathbf{\Gamma}^\star_{opt}$, and the associated projector $\mathbf{\Pi}_{opt}=\mathbf{\Gamma}^\star_{opt}\mathbf{\Gamma}$. One has:
\begin{eqnarray}
\label{eq:opt_reduc_post1}
\mathbf{x}^{a}_{\Pi_{opt}}&=&\mathbf{\Pi}_{opt}\mathbf{x}^{a} \\
\label{eq:opt_reduc_post2}
\mathbf{P}^{a}_{\Pi_{opt}}&=&\mathbf{\Pi}_{opt}  \mathbf{P}^{a}\mathbf{\Pi}_{opt}^T\\
\label{eq:opt_reduc_post3}
\mathbf{A}_{\Pi_{opt}}&=&\mathbf{\Pi}_{opt}  {\mathbf{A}}\mathbf{\Pi}_{opt},
\end{eqnarray}
where $\mathbf{x}^{a}$, $\mathbf{P}^{a}$ and $\mathbf{A}$ are the posterior mean, posterior error covariance and model resolution matrix (respectively) of $\mathcal{B}$, and $\mathbf{x}^{a}_{\Pi_{opt}}$, $\mathbf{P}^{a}_{\Pi_{opt}}$ and $\mathbf{A}_{\Pi_{opt}}$ are the posterior mean, posterior error covariance and model resolution matrix (respectively) of the projected Bayesian problem $\mathcal{B}_{\Pi_{opt}}\equiv(E_{\Pi_{opt}},F,\mathbf{H}_{\Pi_{opt}},\mathbf{B}_{\Pi_{opt}},\mathbf{R}_{\Pi_{opt}})$ in the canonical basis of $E$.
\end{cor}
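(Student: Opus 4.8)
The plan is to derive this corollary purely by composing the two preceding results. Proposition~\ref{proper:corresp} relates the solution of the projected problem $\mathcal{B}_\Pi$ to the solution of the reduced problem $\mathcal{B}_\omega$ through the prolongation $\mathbf{\Gamma}^\star$, while Proposition~\ref{post_sol_reduc} relates the solution of the reduced problem to that of the original problem $\mathcal{B}$ through the reduction $\mathbf{\Gamma}$. Chaining these two maps (a reduction by $\mathbf{\Gamma}$ followed by a prolongation by $\mathbf{\Gamma}^\star_{opt}$) should produce the claimed identities directly, with no additional analytical input, since the optimality properties that guarantee the invariance of the innovation statistics have already been absorbed into Proposition~\ref{post_sol_reduc}.

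Concretely, I would first specialize Proposition~\ref{proper:corresp} to the optimal prolongation by setting $\mathbf{\Gamma}^\star=\mathbf{\Gamma}^\star_{opt}$ in (\ref{eq:xa_proj})--(\ref{eq:A_proj}), obtaining $\mathbf{x}^a_{\Pi_{opt}}=\mathbf{\Gamma}^\star_{opt}\mathbf{x}^a_\omega$, $\mathbf{P}^a_{\Pi_{opt}}=\mathbf{\Gamma}^\star_{opt}\mathbf{P}^a_\omega(\mathbf{\Gamma}^\star_{opt})^T$, and $\mathbf{A}_{\Pi_{opt}}=\mathbf{\Gamma}^\star_{opt}\mathbf{A}_\omega\mathbf{\Gamma}$. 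Then I would substitute the reduced-problem identities (\ref{eq:omega_mean})--(\ref{eq:omega_model_res}), namely $\mathbf{x}^a_\omega=\mathbf{\Gamma}\mathbf{x}^a$, $\mathbf{P}^a_\omega=\mathbf{\Gamma}\mathbf{P}^a\mathbf{\Gamma}^T$, and $\mathbf{A}_\omega=\mathbf{\Gamma}\mathbf{A}\mathbf{\Gamma}^\star_{opt}$. For the mean this gives $\mathbf{x}^a_{\Pi_{opt}}=\mathbf{\Gamma}^\star_{opt}\mathbf{\Gamma}\mathbf{x}^a=\mathbf{\Pi}_{opt}\mathbf{x}^a$ upon invoking the definition $\mathbf{\Pi}_{opt}=\mathbf{\Gamma}^\star_{opt}\mathbf{\Gamma}$; for the covariance, regrouping the outer pairs yields $\mathbf{P}^a_{\Pi_{opt}}=(\mathbf{\Gamma}^\star_{opt}\mathbf{\Gamma})\mathbf{P}^a(\mathbf{\Gamma}^\star_{opt}\mathbf{\Gamma})^T=\mathbf{\Pi}_{opt}\mathbf{P}^a\mathbf{\Pi}_{opt}^T$.

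The only step that warrants a moment of care is the model resolution matrix, because of its asymmetric form. Substituting $\mathbf{A}_\omega=\mathbf{\Gamma}\mathbf{A}\mathbf{\Gamma}^\star_{opt}$ gives $\mathbf{A}_{\Pi_{opt}}=\mathbf{\Gamma}^\star_{opt}(\mathbf{\Gamma}\mathbf{A}\mathbf{\Gamma}^\star_{opt})\mathbf{\Gamma}$, and the factors must be grouped as $(\mathbf{\Gamma}^\star_{opt}\mathbf{\Gamma})\mathbf{A}(\mathbf{\Gamma}^\star_{opt}\mathbf{\Gamma})=\mathbf{\Pi}_{opt}\mathbf{A}\mathbf{\Pi}_{opt}$, which is exactly (\ref{eq:opt_reduc_post3}). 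The whole argument is therefore a routine composition; the one thing to verify is that $\mathbf{\Gamma}$ and $\mathbf{\Gamma}^\star_{opt}$ sit on the correct sides in each formula so that every product collapses to a copy of $\mathbf{\Pi}_{opt}$, which it does precisely because Propositions~\ref{proper:corresp} and \ref{post_sol_reduc} are stated with the reduction and prolongation operators in complementary positions.
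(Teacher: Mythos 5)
Your proposal is correct and follows exactly the paper's own argument: the paper's proof likewise obtains (\ref{eq:opt_reduc_post1})--(\ref{eq:opt_reduc_post3}) by applying (\ref{eq:xa_proj})--(\ref{eq:A_proj}) to (\ref{eq:omega_mean})--(\ref{eq:omega_model_res}) and regrouping the factors into copies of $\mathbf{\Pi}_{opt}=\mathbf{\Gamma}^\star_{opt}\mathbf{\Gamma}$. Your explicit verification of the asymmetric grouping for the model resolution matrix is a useful elaboration of a step the paper leaves implicit.
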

\begin{proof}
Formulas (\ref{eq:opt_reduc_post1})-(\ref{eq:opt_reduc_post3}) are obtained simply by applying (\ref{eq:xa_proj})-(\ref{eq:A_proj}) to (\ref{eq:omega_mean})-(\ref{eq:omega_model_res}).
\end{proof}
In other words, if the projector is of the form $\mathbf{\Pi}_{opt}=\mathbf{B}\mathbf{\Gamma}^T(\mathbf{\Gamma}^T\mathbf{B}\mathbf{\Gamma})^{-1}\mathbf{\Gamma} $, the solution of the projected Bayesian problem is simply the projection of the solution of the initial Bayesian problem. From (\ref{post_sol_reduc}), it is clear that the couple $(\mathbf{\Gamma},\mathbf{\Gamma}^\star_{opt})$ can be interpreted as a generalized change of variable for the solutions of linear Bayesian problems, where the transformation $\mathbf{\Gamma}$ can be non-invertible and $\mathbf{\Gamma}^\star_{opt}$ defines a right inverse for $\mathbf{\Gamma}$. It is straightforward to verify that in the case where $\mathbf{\Gamma}$ is invertible $\mathbf{\Gamma}^\star_{opt}=\mathbf{\Gamma}^{-1}$.

\begin{rem}
\label{rem:proj_pb}
It is interesting to note that the posterior solution of the projected Bayesian problem $\mathcal{B}_{\Pi_{opt}}\equiv(E_{\Pi_{opt}},F,\mathbf{H}_{\Pi_{opt}},\mathbf{B}_{\Pi_{opt}},\mathbf{R}_{\Pi_{opt}})$ is also the posterior solution of the Bayesian problem $\mathcal{B}_{\mathbf{H}_{\Pi_\textrm{opt}}}\equiv(E,F,\mathbf{H}\mathbf{\Pi}_\text{opt},\mathbf{B},\mathbf{R}_{\mathbf{\Pi}_\text{opt}})$, as one can see by considering an optimal prolongation in formulas (\ref{eq:xa_proj1a})-(\ref{eq:A_proj1a}), and using $\mathbf{H}\mathbf{\Pi}\mathbf{B}\mathbf{\Pi}^T\mathbf{H}^T+\mathbf{R}_\Pi=\mathbf{HBH}^T+\mathbf{R}$ and $\mathbf{\Pi B \Pi}^T=\mathbf{\Pi B}=\mathbf{B\Pi}^T$. The Bayesian problem $\mathcal{B}_{\mathbf{H}_{\Pi_\textrm{opt}}}$ has similar prior error covariance as the original problem $\mathcal{B}$, but corresponds to a forward model for which the modes are filtered by the projection $\mathbf{\Pi}$. Note that $\mathcal{B}_{\Pi_{opt}}$ and $\mathcal{B}_{\mathbf{H}_{\Pi_\textrm{opt}}}$ do not in general define the same Bayesian problem. The equality between the posterior solutions of those two problems only holds when an optimal prolongation is used.
\end{rem}

 \subsubsection{An Optimal Projection}
 \label{opt_proj}
The results established in Section \ref{gen_change_var}, and in particular Theorem \ref{thm:best_prolong}, can be used to simplify the problem of defining an optimal low-rank projection for the Bayesian problem $\mathcal{B}_{\mathbf{\Pi}}\equiv(E_\Pi,F,\mathbf{H}_\Pi,\mathbf{B}_\Pi,\mathbf{R}_{\mathbf{\Pi}})$. Indeed, one can restrict our search to the class of projections associated with optimal prolongations, using (\ref{best_prolong_eq}). We first note that this class of projections can be redefined in a simpler form:
\begin{prop}[\textbf{Canonical Form of Projections}]
 \label{canon_form}
Let us define the class of projections of rank $k$ associated with optimal prolongations $\mathcal{P}_{opt}\equiv\{\mathbf{B}\mathbf{\Gamma}^T(\mathbf{\Gamma}\mathbf{B}\mathbf{\Gamma}^T)^{-1}\mathbf{\Gamma},\,\mathbf{\Gamma}\in \mathcal{M}_{k,n} \}$. One has: 
\begin{align}
\label{canon_form2}
\mathcal{P}_{opt}=\{\mathbf{B}^{1/2}\mathbf{U}\mathbf{U}^T\mathbf{B}^{-1/2},\, \mathbf{U}^T\mathbf{U}=\mathbf{Id}_k \text{ and } \mathbf{U}\in \mathcal{M}_{k,n} \}
\end{align}
\end{prop}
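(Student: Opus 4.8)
The plan is to establish the set equality by proving the two inclusions separately, with the whole argument hinging on the single change of variables $\mathbf{M}=\mathbf{B}^{1/2}\mathbf{\Gamma}^T$, which turns the data-dependent oblique projector into an orthogonal projector conjugated by $\mathbf{B}^{1/2}$. First I would record the elementary identities obtained from this substitution: since $\mathbf{B}$ is symmetric positive definite, $\mathbf{B}^{1/2}$ is invertible and symmetric, so that $\mathbf{B}\mathbf{\Gamma}^T=\mathbf{B}^{1/2}\mathbf{M}$, $\mathbf{\Gamma}\mathbf{B}\mathbf{\Gamma}^T=\mathbf{M}^T\mathbf{M}$, and $\mathbf{\Gamma}=\mathbf{M}^T\mathbf{B}^{-1/2}$. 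Substituting these into the generic element of $\mathcal{P}_{opt}$ gives $\mathbf{B}\mathbf{\Gamma}^T(\mathbf{\Gamma}\mathbf{B}\mathbf{\Gamma}^T)^{-1}\mathbf{\Gamma}=\mathbf{B}^{1/2}\,\mathbf{M}(\mathbf{M}^T\mathbf{M})^{-1}\mathbf{M}^T\,\mathbf{B}^{-1/2}$. Note that $\mathbf{M}$ has rank $k$ because $\mathbf{B}^{1/2}$ is nonsingular and $\mathbf{\Gamma}$ has maximal rank $k$, so $\mathbf{M}^T\mathbf{M}$ is invertible and the expression is well defined.

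For the inclusion $\mathcal{P}_{opt}\subseteq\{\mathbf{B}^{1/2}\mathbf{U}\mathbf{U}^T\mathbf{B}^{-1/2}\}$, the key observation is that $\mathbf{M}(\mathbf{M}^T\mathbf{M})^{-1}\mathbf{M}^T$ is exactly the orthogonal projector onto the column space of $\mathbf{M}$. Any such rank-$k$ orthogonal projector can be written as $\mathbf{U}\mathbf{U}^T$, where the $(n\times k)$ matrix $\mathbf{U}$ has for its columns an orthonormal basis of that column space (obtained, e.g., from a thin QR factorization or an SVD of $\mathbf{M}$), so that $\mathbf{U}^T\mathbf{U}=\mathbf{Id}_k$. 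This immediately yields $\mathbf{\Pi}_{opt}=\mathbf{B}^{1/2}\mathbf{U}\mathbf{U}^T\mathbf{B}^{-1/2}$, establishing the first inclusion.

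For the reverse inclusion I would run the substitution backwards: given $\mathbf{U}$ with $\mathbf{U}^T\mathbf{U}=\mathbf{Id}_k$, set $\mathbf{\Gamma}=\mathbf{U}^T\mathbf{B}^{-1/2}$, which is a $(k\times n)$ matrix of rank $k$. A direct computation then gives $\mathbf{B}\mathbf{\Gamma}^T=\mathbf{B}^{1/2}\mathbf{U}$ and, crucially, $\mathbf{\Gamma}\mathbf{B}\mathbf{\Gamma}^T=\mathbf{U}^T\mathbf{U}=\mathbf{Id}_k$, so the inverse factor collapses to the identity and $\mathbf{B}\mathbf{\Gamma}^T(\mathbf{\Gamma}\mathbf{B}\mathbf{\Gamma}^T)^{-1}\mathbf{\Gamma}=\mathbf{B}^{1/2}\mathbf{U}\mathbf{U}^T\mathbf{B}^{-1/2}$. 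This shows every element of the right-hand set lies in $\mathcal{P}_{opt}$, completing the proof.

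The computations are routine once the substitution is in place; the only genuine step is the first inclusion, where one must recognize $\mathbf{M}(\mathbf{M}^T\mathbf{M})^{-1}\mathbf{M}^T$ as an orthogonal projector and invoke the existence of an orthonormal basis $\mathbf{U}$ of its range. I would also verify in passing that the resulting $\mathbf{B}^{1/2}\mathbf{U}\mathbf{U}^T\mathbf{B}^{-1/2}$ is indeed idempotent of rank $k$ (using $\mathbf{U}^T\mathbf{U}=\mathbf{Id}_k$), so that the canonical form on the right genuinely parametrizes rank-$k$ projectors and the two descriptions coincide.
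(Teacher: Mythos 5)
Your proof is correct and follows essentially the same route as the paper: both inclusions are established by the explicit correspondences between $\mathbf{\Gamma}$ and $\mathbf{U}$, and your identification of $\mathbf{M}(\mathbf{M}^T\mathbf{M})^{-1}\mathbf{M}^T$ (with $\mathbf{M}=\mathbf{B}^{1/2}\mathbf{\Gamma}^T$) as the orthogonal projector onto $\mathrm{Im}(\mathbf{B}^{1/2}\mathbf{\Gamma}^T)$ is just a slightly less explicit version of the paper's concrete choice $\mathbf{U}=\mathbf{B}^{1/2}\mathbf{\Gamma}^T(\mathbf{\Gamma}\mathbf{B}\mathbf{\Gamma}^T)^{-1/2}$. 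Note that your choice $\mathbf{\Gamma}=\mathbf{U}^T\mathbf{B}^{-1/2}$ in the reverse inclusion is the correct one (it yields $\mathbf{\Gamma}\mathbf{B}\mathbf{\Gamma}^T=\mathbf{Id}_k$ so the inverse collapses), whereas the paper's stated $\mathbf{\Gamma}=\mathbf{U}^T\mathbf{B}^{1/2}$ appears to carry a sign error in the exponent.
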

\begin{proof}
To prove that $ \{\mathbf{B}\mathbf{\Gamma}^T(\mathbf{\Gamma}\mathbf{B}\mathbf{\Gamma}^T)^{-1}\mathbf{\Gamma},\,\mathbf{\Gamma}\in \mathcal{M}_{k,n} \}\subset\{\mathbf{B}^{1/2}\mathbf{U}\mathbf{U}^T\mathbf{B}^{-1/2},\, \mathbf{U}^T\mathbf{U}=\mathbf{Id}_k\} $, we define $\mathbf{U}=\mathbf{B}^{1/2}\mathbf{\Gamma}^T(\mathbf{\Gamma}\mathbf{B}\mathbf{\Gamma}^T)^{-1/2}$, and note that $\mathbf{B}^{1/2}\mathbf{U}\mathbf{U}^T\mathbf{B}^{-1/2}=\mathbf{B}\mathbf{\Gamma}^T(\mathbf{\Gamma}\mathbf{B}\mathbf{\Gamma}^T)^{-1}\mathbf{\Gamma}$ and $\mathbf{U}\mathbf{U}^T=\mathbf{Id}_k$. To prove that $\{ \mathbf{B}^{1/2}\mathbf{U}\mathbf{U}^T\mathbf{B}^{-1/2},\, \mathbf{U}^T\mathbf{U}=\mathbf{Id}_k\} \subset  \{\mathbf{B}\mathbf{\Gamma}^T(\mathbf{\Gamma}\mathbf{B}\mathbf{\Gamma}^T)^{-1}\mathbf{\Gamma},\,\mathbf{\Gamma}\in \mathcal{M}_{k,n} \} $, we define $\mathbf{\Gamma}=\mathbf{U}^T\mathbf{B}^{1/2}$, which verifies $\mathbf{B}\mathbf{\Gamma}^T(\mathbf{\Gamma}\mathbf{B}\mathbf{\Gamma}^T)^{-1}\mathbf{\Gamma}=\mathbf{B}^{1/2}\mathbf{U}\mathbf{U}^T\mathbf{B}^{-1/2}$.
\end{proof}
\begin{rem}
Prop. \ref{canon_form} allows a simple (statistical) interpretation for the optimal projections, that is, they correspond to a whitening transformation ($\mathbf{B}^{-1/2}$) followed by a orthogonal projection ($\mathbf{U}\mathbf{U}^T$) onto a rank-$k$ subspace of $E$, and a coloring transformation ($\mathbf{B}^{1/2}$) that recovers the prior error covariances.
\end{rem}
We can now state one of the main results of this paper, which provides an optimal low-rank projection for the linear Bayesian problem. The following Lemma provides a basis of eigenvectors for the model resolution matrix, which, together with the canonical form of Prop. \ref{canon_form}, allows for construction of a projected Bayesian problem $\mathcal{B}_{\mathbf{\Pi}_{opt}}$ with maximum DOFS. Note that in this paper eigenvectors shall always be presented in descending order of their corresponding eigenvalues. 

\begin{lem}[\textbf{Diagonalization of the Model Resolution Matrix}]
\label{diag_avk}
 Let us consider the following eigenvalue decomposition:
 \begin{align}
  \label{eq:SVD_max_dof}
\mathbf{Q}_\mathrm{dof}\equiv\mathbf{B}^{1/2}\mathbf{H}^T(\mathbf{HBH}^T+\mathbf{R})^{-1}\mathbf{H}\mathbf{B}^{1/2}= \mathbf{V}^T\mathbf{\Sigma}\mathbf{V},
 \end{align}
where $\mathbf{V}$ is the matrix whose columns are the eigenvectors of $\mathbf{Q}_\mathrm{dof}$ $\{ \mathbf{v}_i,\, i=1,...,n \}$, and $\mathbf{\Sigma}$ is a diagonal matrix whose elements are the eigenvalues of $\mathbf{Q}_\mathrm{dof}$  $\{ \sigma_i,\, i=1,...,n \}$. The vectors $\{\mathbf{B}^{1/2}\mathbf{v}_i,\, i=1,...,n \}$ form a basis of eigenvectors for the model resolution matrix $\mathbf{A}$.
\end{lem}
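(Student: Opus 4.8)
The plan is to exhibit an explicit similarity transform relating the (generally non-symmetric) model resolution matrix $\mathbf{A}$ to the symmetric matrix $\mathbf{Q}_\mathrm{dof}$, and then to transport the eigenbasis of $\mathbf{Q}_\mathrm{dof}$ through $\mathbf{B}^{1/2}$. First I would recall the expression (\ref{eq8}) for the model resolution matrix, $\mathbf{A}=\mathbf{B}\mathbf{H}^T(\mathbf{H}\mathbf{B}\mathbf{H}^T+\mathbf{R})^{-1}\mathbf{H}$, and write $\mathbf{B}=\mathbf{B}^{1/2}\mathbf{B}^{1/2}$ using the symmetric positive-definite square root. Substituting this into the product $\mathbf{A}\mathbf{B}^{1/2}$ and comparing with the definition (\ref{eq:SVD_max_dof}) of $\mathbf{Q}_\mathrm{dof}$ yields the intertwining relation
\begin{align}
\mathbf{A}\mathbf{B}^{1/2}=\mathbf{B}^{1/2}\mathbf{Q}_\mathrm{dof},
\end{align}
or equivalently $\mathbf{B}^{-1/2}\mathbf{A}\mathbf{B}^{1/2}=\mathbf{Q}_\mathrm{dof}$, which is legitimate since $\mathbf{B}^{1/2}$ is invertible. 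This single computation is the crux of the argument: it shows that $\mathbf{A}$ is similar to the symmetric matrix $\mathbf{Q}_\mathrm{dof}$.

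Next I would apply the intertwining relation eigenvector by eigenvector. For each eigenpair $(\sigma_i,\mathbf{v}_i)$ of $\mathbf{Q}_\mathrm{dof}$, one obtains $\mathbf{A}(\mathbf{B}^{1/2}\mathbf{v}_i)=\mathbf{B}^{1/2}\mathbf{Q}_\mathrm{dof}\mathbf{v}_i=\sigma_i(\mathbf{B}^{1/2}\mathbf{v}_i)$, so $\mathbf{B}^{1/2}\mathbf{v}_i$ is an eigenvector of $\mathbf{A}$ with the same eigenvalue $\sigma_i$. Because $\mathbf{Q}_\mathrm{dof}$ is symmetric, the family $\{\mathbf{v}_i\}$ can be taken orthonormal and spans the full control space; since $\mathbf{B}^{1/2}$ is an isomorphism, the images $\{\mathbf{B}^{1/2}\mathbf{v}_i\}$ remain linearly independent and therefore constitute a basis consisting of eigenvectors of $\mathbf{A}$, which is exactly the claim.

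I do not anticipate a genuine obstacle; the only point requiring care is to keep in mind that $\mathbf{A}$ is \emph{not} itself symmetric, so it cannot be diagonalized directly by an orthogonal change of basis. The whole purpose of the similarity through $\mathbf{B}^{1/2}$ is precisely to guarantee that $\mathbf{A}$ is nonetheless diagonalizable, with a real spectrum inherited from $\mathbf{Q}_\mathrm{dof}$ and eigenvalues coinciding with the $\sigma_i$ (equivalently, with those of the prior-preconditioned Hessian). One should also make explicit which symmetric square root is meant, and note that $\mathbf{B}^{-1/2}$ exists by positive-definiteness of $\mathbf{B}$; beyond these bookkeeping remarks the proof is a one-line verification followed by the standard similarity argument.
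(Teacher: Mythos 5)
Your proof is correct and is essentially the paper's own argument: both establish the similarity $\mathbf{A}=\mathbf{B}^{1/2}\mathbf{Q}_\mathrm{dof}\mathbf{B}^{-1/2}$ from expression (\ref{eq8}) and then transport the eigenbasis of the symmetric matrix $\mathbf{Q}_\mathrm{dof}$ through the invertible map $\mathbf{B}^{1/2}$. Your eigenvector-by-eigenvector verification is just a slightly more explicit rendering of the paper's one-line diagonalization.
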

\begin{proof}
Using (\ref{eq8}), one can write $\mathbf{A}=\mathbf{B}^{1/2}\left ( \mathbf{B}^{1/2}\mathbf{H}^T(\mathbf{HBH}^T+\mathbf{R})^{-1}\mathbf{H}\mathbf{B}^{1/2}\right )\mathbf{B}^{-1/2}$. Therefore, $\mathbf{A}=\mathbf{B}^{1/2}\mathbf{V}^T\mathbf{\Sigma}\mathbf{V}\mathbf{B}^{-1/2}=\mathbf{B}^{1/2}\mathbf{V}^T\mathbf{\Sigma}(\mathbf{B}^{1/2}\mathbf{V}^T)^{-1}$, and the vectors $\{\mathbf{B}^{1/2}\mathbf{v}_i,\, i=1,...,n \}$ diagonalize $\mathbf{A}$.
\end{proof}
\begin{thm}[\textbf{Maximum-DOFS Projection}] 
\label{thm:opt_proj}
 Let us define $\mathbf{V}_k$ the matrix whose columns are the first $k$ eigenvectors of $\mathbf{Q}_\mathrm{dof}$ $\{ \mathbf{v}_i,\, i=1,...,k \}$, and 
 let us define the projector:
\begin{align}
\label{opt_proj_dof}
 \mathbf{\Pi}_\mathrm{dof}=\mathbf{B}^{1/2}\mathbf{V}_k\mathbf{V}_k^T\mathbf{B}^{-1/2}
 \end{align}
 The projector $ \mathbf{\Pi}_\mathrm{dof}$ maximizes the DOFS of the projected Bayesian problem $\mathcal{B}_{\Pi}=(E,F,\mathbf{H}\mathbf{\Pi},\mathbf{B},\mathbf{R}_{\Pi})$ among all projectors $\mathbf{\Pi}$ of maximum rank $k$, i.e.:

\begin{align}
\label{eq:max_dof}
&\forall \mathbf{\Pi} \in \mathcal{P}, \, \mathrm{Tr}(\mathbf{A}_{\Pi_\mathrm{dof}}) \ge   \mathrm{Tr}(\mathbf{A}_{\Pi}),
\end{align}
where $\mathbf{A}_{\mathbf{\Pi}}$ is the model resolution matrix associated with the problem $\mathcal{B}_{\Pi}$. 
\end{thm}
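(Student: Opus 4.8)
The plan is to establish the maximum in two stages: first reduce the search to the subclass $\mathcal{P}_{opt}$ of projections associated with optimal prolongations, i.e.\ those of the canonical form of Prop.~\ref{canon_form}; and then solve the resulting trace maximization over $\mathcal{P}_{opt}$ in closed form by a Ky Fan eigenvalue argument.

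For the second stage (the routine part) I would proceed as follows. On $\mathcal{P}_{opt}$, Corollary~\ref{post_sol_projec} gives $\mathbf{A}_{\Pi}=\mathbf{\Pi}\mathbf{A}\mathbf{\Pi}$, so by idempotency of $\mathbf{\Pi}$ and cyclicity of the trace, $\mathrm{Tr}(\mathbf{A}_{\Pi})=\mathrm{Tr}(\mathbf{A}\mathbf{\Pi})$. Writing any $\mathbf{\Pi}\in\mathcal{P}_{opt}$ as $\mathbf{\Pi}=\mathbf{B}^{1/2}\mathbf{U}\mathbf{U}^{T}\mathbf{B}^{-1/2}$ with $\mathbf{U}^{T}\mathbf{U}=\mathbf{Id}_{k}$, and inserting $\mathbf{A}=\mathbf{B}^{1/2}\mathbf{Q}_\mathrm{dof}\mathbf{B}^{-1/2}$ from Lemma~\ref{diag_avk}, the factors $\mathbf{B}^{\pm1/2}$ cancel under the trace and leave
\begin{align*}
\mathrm{Tr}(\mathbf{A}_{\Pi})=\mathrm{Tr}(\mathbf{Q}_\mathrm{dof}\mathbf{U}\mathbf{U}^{T})=\mathrm{Tr}(\mathbf{U}^{T}\mathbf{Q}_\mathrm{dof}\mathbf{U}).
\end{align*}
Since $\mathbf{Q}_\mathrm{dof}$ is symmetric positive semidefinite with eigenvalues $\sigma_{1}\ge\cdots\ge\sigma_{n}\ge0$, the Ky Fan maximum principle gives $\mathrm{Tr}(\mathbf{U}^{T}\mathbf{Q}_\mathrm{dof}\mathbf{U})\le\sum_{i=1}^{k}\sigma_{i}$ over all $\mathbf{U}$ with orthonormal columns, with equality exactly when the columns of $\mathbf{U}$ span the leading $k$-dimensional eigenspace of $\mathbf{Q}_\mathrm{dof}$. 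The choice $\mathbf{U}=\mathbf{V}_{k}$ attains the bound and recovers $\mathbf{\Pi}_\mathrm{dof}=\mathbf{B}^{1/2}\mathbf{V}_{k}\mathbf{V}_{k}^{T}\mathbf{B}^{-1/2}$, which proves (\ref{eq:max_dof}) within $\mathcal{P}_{opt}$.

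The first stage is where I expect the real difficulty. The subtlety is that $\mathrm{Tr}(\mathbf{A}_{\Pi})$ computed from the projected-problem formulas is a faithful measure of information content only when $\mathcal{B}_{\Pi}$ is a genuinely consistent Bayesian problem, i.e.\ when the representativeness error is uncorrelated with the retained reduced state $\mathbf{\Gamma}\mathbf{x}^{t}$. A direct calculation shows this cross-covariance is $\mathbf{H}(\mathbf{Id}-\mathbf{\Pi})\mathbf{B}\mathbf{\Gamma}^{T}$, which vanishes precisely for the optimal prolongation $\mathbf{\Gamma}^{\star}_{opt}=\mathbf{B}\mathbf{\Gamma}^{T}(\mathbf{\Gamma}\mathbf{B}\mathbf{\Gamma}^{T})^{-1}$ of Theorem~\ref{thm:best_prolong}; for a suboptimal prolongation the formula can overstate the attainable trace. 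I would therefore justify the restriction to $\mathcal{P}_{opt}$ through the genuine posterior error covariance: for a fixed reduction $\mathbf{\Gamma}$, the optimal prolongation yields the actual minimum-variance (Bayesian) reduced estimator and hence minimizes $\mathbf{P}^{a}$ in the L\"owner order; since $\mathbf{A}=\mathbf{Id}-\mathbf{P}^{a}\mathbf{B}^{-1}$ and $\mathrm{Tr}(\mathbf{P}^{a}\mathbf{B}^{-1})$ is monotone in $\mathbf{P}^{a}$ (as $\mathbf{B}^{-1}$ is positive definite), this maximizes $\mathrm{Tr}(\mathbf{A})$ among all prolongations sharing the reduction $\mathbf{\Gamma}$.

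The main obstacle is making this last monotonicity rigorous, because varying the prolongation simultaneously changes the range of the estimator, the reduced forward operator $\mathbf{H}\mathbf{\Gamma}^{\star}$, and the observation-error covariance $\mathbf{R}_{\Pi}$; I would handle it by reusing the BLUE/orthogonality structure already invoked in the proof of Theorem~\ref{thm:best_prolong}, which identifies the optimal prolongation as the estimator whose error is orthogonal to the retained subspace. Once the reduction to $\mathcal{P}_{opt}$ is secured, combining it with the Ky Fan computation of the second stage delivers the global maximum over all rank-$k$ projectors and identifies $\mathbf{\Pi}_\mathrm{dof}$ as a maximizer.
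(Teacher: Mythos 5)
Your proposal follows essentially the same route as the paper: restrict to the canonical class $\mathcal{P}_{opt}$ of Prop.~\ref{canon_form}, use the invariance properties of the optimal prolongation to reduce the objective to $\mathrm{Tr}(\mathbf{U}^T\mathbf{Q}_\mathrm{dof}\mathbf{U})$ by trace cyclicity, and maximize via the Ky Fan/Poincar\'e principle (the paper's Lemma~\ref{max_tr}). The reduction to $\mathcal{P}_{opt}$ that you flag as the genuinely hard step is precisely the step the paper leaves implicit --- its proof likewise only carries out the maximization within $\mathcal{P}_{opt}$, appealing to Theorem~\ref{thm:best_prolong} for the restriction --- so your explicit attention to the representativeness error and the accompanying monotonicity argument is, if anything, more careful than the published argument.
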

\begin{proof}
Let us consider a projection associated with an optimal prolongation, i.e., of the form $\mathbf{\Pi}=\mathbf{B}^{1/2}\mathbf{U}\mathbf{U}^T\mathbf{B}^{-1/2}$, with $\mathbf{U}$ an orthogonal matrix (see (\ref{canon_form2})). Replacing $\mathbf{\Pi}$ by this expression in (\ref{eq:A_proj1a}) yields $\mathbf{A}_\Pi=\mathbf{B}^{1/2}\mathbf{U}\mathbf{U}^T\mathbf{B}^{1/2}\mathbf{H}^T ( \mathbf{H}\mathbf{B}^T\mathbf{H}^T+\mathbf{R})^{-1}\mathbf{H}\mathbf{B}^{1/2}\mathbf{U}\mathbf{U}^T\mathbf{B}^{-1/2}$. Using the property of invariance of the trace under matrix permutation and the fact that $\mathbf{U}^T\mathbf{U}=\mathbf{Id}_k$, one obtains $\mathrm{Tr}(\mathbf{A}_\Pi)=\mathrm{Tr}(\mathbf{U}^T\mathbf{B}^{1/2}\mathbf{H}^T ( \mathbf{H}\mathbf{B}^T\mathbf{H}^T+\mathbf{R})^{-1}\mathbf{H}\mathbf{B}^{1/2}\mathbf{U})$. By Lemma \ref{max_tr}, the maximum of $\mathrm{Tr}(\mathbf{A}_\Pi)$ is obtained for $\mathbf{U}=\mathbf{V}$, where $\mathbf{V}$ is the matrix whose columns are the first $k$ eigenvectors of $\mathbf{Q}_\mathrm{dof}$, $\{ \mathbf{v}_i,\, i=1,...,k \}$, which proves (\ref{eq:max_dof}).
\end{proof}
 \begin{rem}
As suggested in \ref{info_content}, another criteria to optimize the projection is to minimize the total error variance (i.e., $\mathrm{Tr}(\mathbf{P}^a_{\Pi}$)). However, unlike the maximum-DOFS projection, this minimum-error projection does not have a simple analytical expression, which prevents its efficient computation. In Section \ref{link_lr_approx}, we present alternative optimal approximations of the posterior error covariance matrix whose Frobenius distance to the true posterior error covariance matrix is minimal and whose total error variance is closest to true total error variance.
 \end{rem}
 \begin{rem}
The optimal projection defined in Thm. \ref{thm:opt_proj} has been proposed by \citet{spantini2015optimal}. We note that in their study the projected problem is defined as $\mathcal{B}_{\mathbf{\Pi}_{opt}}=(E,F,\mathbf{H\Pi}_{opt},\mathbf{B},\mathbf{R})$. However, as discussed in the present study, it is necessary to include a representativeness error, i.e., to use $\mathbf{R}_{\mathbf{\Pi}_{opt}}$ instead of $\mathbf{R}$ when defining the projected Bayesian problem. \citet{spantini2015optimal} overlooked this issue in their analysis, which is taken into account in our proofs.
 \end{rem}
 Once the truncated eigendecomposition of $\mathbf{Q}_\mathrm{dof}$ is available, the posterior mean and posterior error covariance of the projected problem can be explicitly expressed as a function of the first $k$ eigenvectors and eigenvalues. Note that in its current form $\mathbf{Q}_\mathrm{dof}$ requires the inversion of a potentially high-dimensional $p\times p$ matrix. In fact, one can circumvent this difficulty by noting that the eigendecomposition of $\mathbf{Q}_\mathrm{dof}$ can be efficiently obtained from the eigendecomposition of an auxiliary matrix called the \textit{prior-preconditioned Hessian}. The following properties establish the formulas to compute the maximum-DOFS solution based on that improved implementation:
  \begin{prop}[\textbf{Posterior Solution of the maximum-DOFS Projection}] 
  \label{simp_dof}
  Let us define the \textit{prior-preconditioned Hessian} $\widehat{\mathbf{H}_p} \equiv \mathbf{B}^{1/2}\mathbf{H}^T\mathbf{R}^{-1}\mathbf{H}\mathbf{B}^{1/2}$ and its eigenvalue decomposition $\widehat{\mathbf{H}_p}=\mathbf{V'}^T\mathbf{\Lambda}\mathbf{V'}$. One has:
\begin{align}
\label{v_eq1}
&\mathbf{V}=\mathbf{V}'  \\
\label{v_eq2}
&\mathbf{\Sigma}=\mathbf{\Lambda}\left( \mathbf{Id}+\mathbf{\Lambda}\right)^{-1},
\end{align}
  where $\mathbf{Q}_\mathrm{dof}=\mathbf{V}^T\mathbf{\Sigma}\mathbf{V}$ is the eigendecomposition of $\mathbf{Q}_\mathrm{dof}$. Moreover, the solution of the maximum-DOFS projection can be expressed as:
  \begin{align}
  \label{eq:update_max_dof_mean_new}
&\mathbf{x}^a_{\mathbf{\Pi}_\mathrm{dof}}=\mathbf{B}^{1/2} \left(\sum_{i=1}^{k} \lambda_i^{1/2}(1+\lambda_i)^{-1}\mathbf{v}_i\mathbf{w}^T_i\right)\mathbf{R}^{-1/2}\mathbf{d} \\
\label{eq:update_max_dof_cov_new}
&\mathbf{P}^a_{\mathbf{\Pi}_\mathrm{dof}}=\mathbf{B}^{1/2}\sum_{i=1}^{k}\left( 1+\lambda_i \right)^{-1}\mathbf{v}_i\mathbf{v}_i^T\mathbf{B}^{1/2} \\
\label{eq:update_max_dof_mod_res_new}
&\mathbf{A}_{\mathbf{\Pi}_\mathrm{dof}}=\mathbf{B}^{1/2}\left(\sum_{i=1}^{k} \lambda_i(1+\lambda_i)^{-1}\mathbf{v}_i\mathbf{v}^T_i\right)\mathbf{B}^{-1/2} ,
\end{align}
where $\mathbf{w}_i=\mathbf{R}^{-1/2}\mathbf{H}\mathbf{B}^{1/2}\mathbf{v}_i$ and the $\{\lambda_i,\, i=1,...,n\}$ are the diagonal elements of $\mathbf{\Lambda}$.
\end{prop}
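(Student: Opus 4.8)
The plan is to prove the spectral correspondence (\ref{v_eq1})--(\ref{v_eq2}) first and then deduce the closed-form expressions (\ref{eq:update_max_dof_mean_new})--(\ref{eq:update_max_dof_mod_res_new}). Writing $\mathbf{M}\equiv\mathbf{H}\mathbf{B}^{1/2}$, we have $\widehat{\mathbf{H}_p}=\mathbf{M}^T\mathbf{R}^{-1}\mathbf{M}$ and $\mathbf{Q}_\mathrm{dof}=\mathbf{M}^T(\mathbf{M}\mathbf{M}^T+\mathbf{R})^{-1}\mathbf{M}$. The first step is to apply the push-through (Sherman--Morrison--Woodbury) identity $\mathbf{M}^T(\mathbf{M}\mathbf{M}^T+\mathbf{R})^{-1}=(\mathbf{Id}+\mathbf{M}^T\mathbf{R}^{-1}\mathbf{M})^{-1}\mathbf{M}^T\mathbf{R}^{-1}$, which is checked by right-multiplying both sides by $(\mathbf{M}\mathbf{M}^T+\mathbf{R})$. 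Right-multiplying by $\mathbf{M}$ then gives the key identity $\mathbf{Q}_\mathrm{dof}=(\mathbf{Id}+\widehat{\mathbf{H}_p})^{-1}\widehat{\mathbf{H}_p}$. Since the right-hand side is the scalar matrix function $f(\widehat{\mathbf{H}_p})$ with $f(x)=x/(1+x)$, it is diagonal in any orthonormal eigenbasis of $\widehat{\mathbf{H}_p}$; hence $\mathbf{Q}_\mathrm{dof}$ and $\widehat{\mathbf{H}_p}$ share eigenvectors, giving $\mathbf{V}=\mathbf{V}'$, and their eigenvalues satisfy $\sigma_i=\lambda_i/(1+\lambda_i)$, i.e.\ $\mathbf{\Sigma}=\mathbf{\Lambda}(\mathbf{Id}+\mathbf{\Lambda})^{-1}$, which is (\ref{v_eq1})--(\ref{v_eq2}).

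For the closed forms I would start from Corollary \ref{post_sol_projec}, which expresses the projected solution as the plain projection of the full solution: $\mathbf{x}^a_{\Pi_\mathrm{dof}}=\mathbf{\Pi}_\mathrm{dof}\mathbf{x}^a$, $\mathbf{P}^a_{\Pi_\mathrm{dof}}=\mathbf{\Pi}_\mathrm{dof}\mathbf{P}^a\mathbf{\Pi}_\mathrm{dof}^T$, and $\mathbf{A}_{\Pi_\mathrm{dof}}=\mathbf{\Pi}_\mathrm{dof}\mathbf{A}\mathbf{\Pi}_\mathrm{dof}$, with $\mathbf{\Pi}_\mathrm{dof}=\mathbf{B}^{1/2}\mathbf{V}_k\mathbf{V}_k^T\mathbf{B}^{-1/2}$ from (\ref{opt_proj_dof}). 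The idea is then to write each full-problem quantity in prior-preconditioned coordinates: from (\ref{eq4}), $\mathbf{P}^a=\mathbf{B}^{1/2}(\mathbf{Id}+\widehat{\mathbf{H}_p})^{-1}\mathbf{B}^{1/2}$; from (\ref{eq8}) together with Lemma \ref{diag_avk}, $\mathbf{A}=\mathbf{B}^{1/2}\mathbf{Q}_\mathrm{dof}\mathbf{B}^{-1/2}$; and from (\ref{eq2}) with the same push-through identity, $\mathbf{x}^a=\mathbf{B}^{1/2}(\mathbf{Id}+\widehat{\mathbf{H}_p})^{-1}\mathbf{M}^T\mathbf{R}^{-1}\mathbf{d}$. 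Substituting these into the three projection formulas, the $\mathbf{B}^{-1/2}$ and $\mathbf{B}^{1/2}$ factors flanking $\mathbf{\Pi}_\mathrm{dof}$ cancel, leaving objects of the form $\mathbf{B}^{1/2}\mathbf{V}_k\mathbf{V}_k^T\,g(\widehat{\mathbf{H}_p})\,(\cdots)$. Because the $\mathbf{v}_i$ are orthonormal eigenvectors of $\widehat{\mathbf{H}_p}$, the sandwich $\mathbf{V}_k\mathbf{V}_k^T g(\widehat{\mathbf{H}_p})$ collapses to the truncated spectral sum $\sum_{i=1}^k g(\lambda_i)\mathbf{v}_i\mathbf{v}_i^T$, which immediately produces (\ref{eq:update_max_dof_cov_new}) with $g(x)=(1+x)^{-1}$ and (\ref{eq:update_max_dof_mod_res_new}) with $g(x)=x/(1+x)$.

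The mean formula (\ref{eq:update_max_dof_mean_new}) is where the bookkeeping is most delicate, and I expect it to be the main obstacle. After the cancellation above one is left with $\mathbf{x}^a_{\Pi_\mathrm{dof}}=\mathbf{B}^{1/2}\sum_{i=1}^k(1+\lambda_i)^{-1}\mathbf{v}_i\mathbf{v}_i^T\mathbf{M}^T\mathbf{R}^{-1}\mathbf{d}$, so the remaining task is to rewrite $\mathbf{v}_i^T\mathbf{M}^T\mathbf{R}^{-1}$ in terms of the $\mathbf{w}_i$. Introducing the SVD $\mathbf{R}^{-1/2}\mathbf{M}=\sum_i\lambda_i^{1/2}\mathbf{w}_i\mathbf{v}_i^T$ (so that the $\mathbf{w}_i$ are \emph{unit} left singular vectors and the $\lambda_i$ are the eigenvalues of $\widehat{\mathbf{H}_p}=(\mathbf{R}^{-1/2}\mathbf{M})^T(\mathbf{R}^{-1/2}\mathbf{M})$), one has $\mathbf{R}^{-1/2}\mathbf{M}\mathbf{v}_i=\lambda_i^{1/2}\mathbf{w}_i$, hence $\mathbf{v}_i^T\mathbf{M}^T\mathbf{R}^{-1}=\lambda_i^{1/2}\mathbf{w}_i^T\mathbf{R}^{-1/2}$. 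This supplies the factor $\lambda_i^{1/2}$ and yields (\ref{eq:update_max_dof_mean_new}). The subtlety to flag is that the stated relation $\mathbf{w}_i=\mathbf{R}^{-1/2}\mathbf{H}\mathbf{B}^{1/2}\mathbf{v}_i$ must be read with the unit normalization of $\mathbf{w}_i$ in mind, since $\|\mathbf{R}^{-1/2}\mathbf{H}\mathbf{B}^{1/2}\mathbf{v}_i\|=\lambda_i^{1/2}$; it is precisely this normalization that makes $\lambda_i^{1/2}$ appear explicitly in the sum rather than being absorbed into $\mathbf{w}_i$.
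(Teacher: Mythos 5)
Your proof is correct, and the second half --- deriving the closed forms from Corollary \ref{post_sol_projec} by writing $\mathbf{x}^a$, $\mathbf{P}^a$, $\mathbf{A}$ in prior-preconditioned coordinates and letting $\mathbf{V}_k\mathbf{V}_k^T$ collapse the spectral sums --- is essentially the paper's own argument. Where you genuinely depart is in establishing (\ref{v_eq1})--(\ref{v_eq2}): the paper substitutes the SVD $\mathbf{B}^{1/2}\mathbf{H}^T\mathbf{R}^{-1/2}=\mathbf{V}^T\mathbf{\Lambda}^{1/2}\mathbf{W}$ and applies Sherman--Morrison--Woodbury to the $p\times p$ matrix $\left(\mathbf{R}^{-1/2}\mathbf{HBH}^T\mathbf{R}^{-1/2}+\mathbf{Id}\right)^{-1}$, which forces it to manipulate the left singular vectors and to invoke $\mathbf{W}\mathbf{W}^T=\mathbf{Id}$; your push-through identity $\mathbf{M}^T(\mathbf{M}\mathbf{M}^T+\mathbf{R})^{-1}=(\mathbf{Id}+\mathbf{M}^T\mathbf{R}^{-1}\mathbf{M})^{-1}\mathbf{M}^T\mathbf{R}^{-1}$ gives $\mathbf{Q}_\mathrm{dof}=(\mathbf{Id}+\widehat{\mathbf{H}_p})^{-1}\widehat{\mathbf{H}_p}$ directly in control space, after which the shared eigenbasis and the eigenvalue map $\lambda\mapsto\lambda/(1+\lambda)$ are immediate from the functional calculus. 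That is cleaner and sidesteps the observation-space bookkeeping entirely. Your final remark on normalization is also well taken: since $\|\mathbf{R}^{-1/2}\mathbf{H}\mathbf{B}^{1/2}\mathbf{v}_i\|=\lambda_i^{1/2}$, the explicit factor $\lambda_i^{1/2}$ in (\ref{eq:update_max_dof_mean_new}) is only consistent with $\mathbf{w}_i$ being the \emph{unit} left singular vector (as it is in the paper's SVD convention $\mathbf{B}^{1/2}\mathbf{H}^T\mathbf{R}^{-1/2}=\mathbf{V}^T\mathbf{\Lambda}^{1/2}\mathbf{W}$), so the relation $\mathbf{w}_i=\mathbf{R}^{-1/2}\mathbf{H}\mathbf{B}^{1/2}\mathbf{v}_i$ stated after the proposition is indeed off by that factor; you are right to read it as defining $\mathbf{w}_i$ up to normalization.
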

\begin{proof}
Let us first prove (\ref{v_eq1})-(\ref{v_eq2}). The matrix $\mathbf{Q}_\mathrm{dof}$ can be rewritten:
\begin{align}
\mathbf{Q}_\mathrm{dof}&=\mathbf{B}^{1/2}\mathbf{H}^T\mathbf{R}^{-1/2}(\mathbf{R}^{-1/2}\mathbf{HBH}^T\mathbf{R}^{-1/2}+\mathbf{Id})^{-1}\mathbf{R}^{-1/2}\mathbf{HB}^{1/2} \\ 
\label{eq_Q2}
                                        &=\mathbf{B}^{1/2}\mathbf{H}^T\mathbf{R}^{-1/2} \mathbf{W}^T ( \mathbf{Id}-\mathbf{\Lambda}(\mathbf{Id}+\mathbf{\Lambda})^{-1})\mathbf{W}\mathbf{R}^{-1/2}\mathbf{HB}^{1/2} ,
\end{align}
where $\mathbf{B}^{1/2}\mathbf{H}^T\mathbf{R}^{-1/2}=\mathbf{V}^T\mathbf{\Lambda}^{1/2}\mathbf{}\mathbf{W}$ is the SVD of the square-root of the prior-preconditioned matrix $\widehat{\mathbf{H}_p}$ and the Shermann-Morrison-Woodbury formula was applied to derive $(\mathbf{R}^{-1/2}\mathbf{HBH}^T\mathbf{R}^{-1/2}+\mathbf{Id})^{-1}=\mathbf{W}^T ( \mathbf{Id}-\mathbf{\Lambda}(\mathbf{Id}+\mathbf{\Lambda})^{-1})\mathbf{W}$. Replacing the square-root of $\mathbf{B}^{1/2}\mathbf{H}^T\mathbf{R}^{-1/2}$ by its SVD in (\ref{eq_Q2}) and using the fact that $\mathbf{W}\mathbf{W}^T=\mathbf{Id}$, one obtains: 
\begin{align*}
\mathbf{Q}_\mathrm{dof}&=\mathbf{V}^T\mathbf{\Lambda}\left( \mathbf{I+\Lambda}\right)^{-1}\mathbf{V}
\end{align*}

To prove (\ref{eq:update_max_dof_mean_new})-(\ref{eq:update_max_dof_mod_res_new}), we first use formulas (\ref{post_update2}), (\ref{eq5}), and (\ref{eq8}) for $\mathbf{x}^a$, $\mathbf{P}^a$ and $\mathbf{A}$, respectively, and substitute $\mathbf{Q}_\mathrm{dof}$ in them to obtain the following expressions:
\begin{align}
\mathbf{x}^{a}&=\mathbf{B}^{1/2}(\mathbf{Id}-\mathbf{Q}_\mathrm{dof})\mathbf{B}^{1/2}\mathbf{H}^T\mathbf{R}^{-1}\mathbf{d} \\
\mathbf{P}^{a}&=\mathbf{B}^{1/2}(\mathbf{Id}-\mathbf{Q}_\mathrm{dof})\mathbf{B}^{1/2} \\
\mathbf{A}&=\mathbf{B}^{1/2}\mathbf{Q}_\mathrm{dof}\mathbf{B}^{-1/2}
\end{align}
We then substitute those expressions in formulas (\ref{eq:opt_reduc_post1})-(\ref{eq:opt_reduc_post3}) and replace $\mathbf{\Pi}$ by its optimal solution $\mathbf{\Pi}_\mathrm{dof}=\mathbf{B}^{1/2}\mathbf{V}_k\mathbf{V}_k^T\mathbf{B}^{-1/2}$, which yields:
\begin{align}
\label{xa_pi_dof}
\mathbf{x}^{a}_{\Pi_\mathrm{dof}}&=\mathbf{B}^{1/2}\mathbf{V}_k\mathbf{V}_k^T(\mathbf{Id}-\mathbf{Q}_\mathrm{dof})\mathbf{B}^{1/2}\mathbf{H}^T\mathbf{R}^{-1}\mathbf{d} \\
\label{pa_pi_dof}
\mathbf{P}^{a}_{\Pi_\mathrm{dof}}&=\mathbf{B}^{1/2}\mathbf{V}_k\mathbf{V}_k^T(\mathbf{Id}-\mathbf{Q}_\mathrm{dof})\mathbf{V}_k\mathbf{V}_k^T\mathbf{B}^{1/2} \\
\label{A_pi_dof}
\mathbf{A}_{\Pi_\mathrm{dof}}&=\mathbf{B}^{1/2}\mathbf{V}_k\mathbf{V}_k^T\mathbf{Q}_\mathrm{dof}\mathbf{V}_k\mathbf{V}_k^T\mathbf{B}^{-1/2}
\end{align}
Noting $\mathbf{B}^{1/2}\mathbf{H}^T\mathbf{R}^{-1}=\mathbf{V}^T\mathbf{\Lambda}^{1/2}\mathbf{}\mathbf{W}\mathbf{R}^{-1/2}$ in (\ref{xa_pi_dof}), and replacing $\mathbf{Q}_\mathrm{dof}$ by its SVD in (\ref{xa_pi_dof})-(\ref{A_pi_dof}), one obtains the desired formulas (\ref{eq:update_max_dof_mean_new})-(\ref{eq:update_max_dof_mod_res_new}).
\end{proof}
Note that an alternative formula can be derived for Eq. (\ref{eq:update_max_dof_mean_new}), which has the advantage that it does not require computation of the singular vectors $\{\mathbf{w}_i\}$:

\begin{prop}[\textbf{Alternative Formulation for Posterior Mean of Maximum-DOFS Projection}]
\label{lem_CG_optimal_equiv}
Using the previous notations, let $\{(\mathbf{v}_i,\lambda_i),\, i=1,...,k\}$ be the first $k$ eigenpairs of the prior-preconditioned Hessian $\widehat{\mathbf{H}}_p$. The posterior mean of the rank-$k$ maximum-DOFS projection can be expressed as:
\begin{align}
\label{pg_equiv}
\mathbf{x}^a_{\mathbf{\Pi}_\mathrm{dof}}=\mathbf{B}^{1/2}\left [ \sum_{i=1}^{k}\left( 1+\lambda_i \right)^{-1}\mathbf{v}_i\mathbf{v}_i^T\right ]\mathbf{B}^{1/2}\mathbf{H}^T\mathbf{R}^{-1} \mathbf{d} 
\end{align}
\end{prop}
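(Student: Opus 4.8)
The plan is to reduce the already-established formula (\ref{eq:update_max_dof_mean_new}) for $\mathbf{x}^a_{\mathbf{\Pi}_\mathrm{dof}}$ to the target (\ref{pg_equiv}) by eliminating the right singular vectors $\mathbf{w}_i$ in favour of the eigenvectors $\mathbf{v}_i$ together with the raw operator $\mathbf{B}^{1/2}\mathbf{H}^T\mathbf{R}^{-1}$. The only extra ingredient needed is the relationship between the left and right singular vectors of the square-root of the prior-preconditioned Hessian, which is already produced within the proof of Proposition \ref{simp_dof}.

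First I would recall the singular value decomposition $\mathbf{B}^{1/2}\mathbf{H}^T\mathbf{R}^{-1/2}=\mathbf{V}^T\mathbf{\Lambda}^{1/2}\mathbf{W}$ and transpose it to obtain $\mathbf{R}^{-1/2}\mathbf{H}\mathbf{B}^{1/2}=\mathbf{W}^T\mathbf{\Lambda}^{1/2}\mathbf{V}$. Applying this to the $i$-th eigenvector and using $\mathbf{V}\mathbf{v}_i=\mathbf{e}_i$ gives $\mathbf{R}^{-1/2}\mathbf{H}\mathbf{B}^{1/2}\mathbf{v}_i=\lambda_i^{1/2}\mathbf{w}_i$, so each unit right singular vector satisfies $\mathbf{w}_i=\lambda_i^{-1/2}\mathbf{R}^{-1/2}\mathbf{H}\mathbf{B}^{1/2}\mathbf{v}_i$. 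Transposing and multiplying on the right by $\mathbf{R}^{-1/2}$ yields the key identity $\mathbf{w}_i^T\mathbf{R}^{-1/2}=\lambda_i^{-1/2}\,\mathbf{v}_i^T\mathbf{B}^{1/2}\mathbf{H}^T\mathbf{R}^{-1}$.

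Then I would substitute this identity summand by summand into (\ref{eq:update_max_dof_mean_new}). In the $i$-th term the prefactor $\lambda_i^{1/2}(1+\lambda_i)^{-1}$ multiplies $\mathbf{w}_i^T\mathbf{R}^{-1/2}$, which carries a compensating $\lambda_i^{-1/2}$; the two powers of $\lambda_i$ cancel and each term collapses to $(1+\lambda_i)^{-1}\mathbf{v}_i\mathbf{v}_i^T\mathbf{B}^{1/2}\mathbf{H}^T\mathbf{R}^{-1}$. Factoring the common $\mathbf{B}^{1/2}$ on the left and $\mathbf{B}^{1/2}\mathbf{H}^T\mathbf{R}^{-1}\mathbf{d}$ on the right out of the sum produces exactly (\ref{pg_equiv}).

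An equally short alternative is to begin one step earlier, from (\ref{xa_pi_dof}), which already carries the factor $\mathbf{B}^{1/2}\mathbf{H}^T\mathbf{R}^{-1}\mathbf{d}$ on the right. There it suffices to write $\mathbf{V}_k\mathbf{V}_k^T=\sum_{i=1}^k\mathbf{v}_i\mathbf{v}_i^T$ and $\mathbf{Id}-\mathbf{Q}_\mathrm{dof}=\sum_j(1+\lambda_j)^{-1}\mathbf{v}_j\mathbf{v}_j^T$, using $\mathbf{\Sigma}=\mathbf{\Lambda}(\mathbf{Id}+\mathbf{\Lambda})^{-1}$ from (\ref{v_eq2}), and then collapse the product by orthonormality of the $\mathbf{v}_i$. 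The only point requiring care in either route is the bookkeeping of the $\lambda_i^{1/2}$ scaling that distinguishes the unit right singular vectors $\mathbf{w}_i$ from $\mathbf{R}^{-1/2}\mathbf{H}\mathbf{B}^{1/2}\mathbf{v}_i$; once that factor is tracked correctly the cancellation is immediate and there is no genuine analytic obstacle.
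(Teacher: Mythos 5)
Your argument is correct and is essentially the paper's own proof run in the opposite direction: the paper starts from the right-hand side of (\ref{pg_equiv}), inserts the SVD of $\mathbf{B}^{1/2}\mathbf{H}^T\mathbf{R}^{-1/2}$ and collapses the product by orthonormality of the $\mathbf{v}_i$ to recover (\ref{eq:update_max_dof_mean_new}), whereas you substitute the same relation $\mathbf{w}_i=\lambda_i^{-1/2}\mathbf{R}^{-1/2}\mathbf{H}\mathbf{B}^{1/2}\mathbf{v}_i$ into (\ref{eq:update_max_dof_mean_new}) term by term and cancel the $\lambda_i^{1/2}$ factors. Your remark about the normalization of the $\mathbf{w}_i$ is well taken: the unit-vector convention is the one under which (\ref{eq:update_max_dof_mean_new}) and (\ref{pg_equiv}) are mutually consistent, even though Prop.~\ref{simp_dof} states $\mathbf{w}_i=\mathbf{R}^{-1/2}\mathbf{H}\mathbf{B}^{1/2}\mathbf{v}_i$ without the $\lambda_i^{-1/2}$ factor.
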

\begin{proof}
One has:
\begin{align}
\mathbf{B}^{1/2}\left [ \sum_{i=1}^{k}\left( 1+\lambda_i \right)^{-1}\mathbf{v}_i\mathbf{v}_i^T\right ]\mathbf{B}^{1/2}\mathbf{H}^T\mathbf{R}^{-1} \mathbf{d}& = \mathbf{B}^{1/2}\left [ \sum_{i=1}^{k}\left( 1+\lambda_i \right)^{-1}\mathbf{v}_i\mathbf{v}_i^T\right ]\mathbf{B}^{1/2}\mathbf{H}^T\mathbf{R}^{-1/2}\mathbf{R}^{-1/2} \mathbf{d} \\
&= \mathbf{B}^{1/2}\left [ \sum_{i=1}^{k}\left( 1+\lambda_i \right)^{-1}\mathbf{v}_i\mathbf{v}_i^T\right ]  \mathbf{V}\mathbf{\Lambda}^{1/2} \mathbf{W}^T \mathbf{R}^{-1/2}  \mathbf{d} \\
&= \mathbf{B}^{1/2} \left(\sum_{i=1}^{k}\lambda_i^{1/2}(1+\lambda_i)^{-1}\mathbf{v}_i\mathbf{w}_i^T\right)\mathbf{R}^{-1/2}\mathbf{d} ,
\end{align}
with: 
\begin{align*}
\mathbf{w}_i=\mathbf{R}^{-1/2}\mathbf{H}\mathbf{B}^{1/2}\mathbf{v}_i
\end{align*}
From (\ref{eq:update_max_dof_mean_new}), we obtain equality (\ref{pg_equiv}).

\end{proof}

Finally, from Prop. \ref{simp_dof}, one also obtained the following useful result:
\begin{cor} 
\label{cor:dofs}
The DOFS of the rank-$k$ projected Bayesian problem $\mathcal{B}_{\mathbf{\Pi}_{opt}}$ with maximum DOFS is the sum of the first $k$ eigenvalues of the model resolution matrix, i.e.:
\begin{align}
 \mathrm{Tr}(\mathbf{A}_{\Pi_\mathrm{dof}})=\sum_{i=1}^{k} \lambda_i(1+\lambda_i)^{-1}
\end{align}
 
\end{cor}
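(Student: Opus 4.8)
The plan is to read the result directly off the expression for the model resolution matrix of the maximum-DOFS projection established in Proposition~\ref{simp_dof}, namely Eq.~(\ref{eq:update_max_dof_mod_res_new}),
\begin{align*}
\mathbf{A}_{\mathbf{\Pi}_\mathrm{dof}}=\mathbf{B}^{1/2}\left(\sum_{i=1}^{k} \lambda_i(1+\lambda_i)^{-1}\mathbf{v}_i\mathbf{v}^T_i\right)\mathbf{B}^{-1/2},
\end{align*}
and to compute its trace, since by definition (\ref{eq7}) the DOFS is exactly $\mathrm{Tr}(\mathbf{A}_{\mathbf{\Pi}_\mathrm{dof}})$. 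The one point that requires care is that the factors $\mathbf{B}^{1/2}$ and $\mathbf{B}^{-1/2}$ bracketing the sum do \emph{not} cancel termwise inside the parentheses; instead I would invoke the cyclic invariance of the trace (the same permutation property already used to show $\mathrm{Tr}(\mathbf{A}_\Pi)=\mathrm{Tr}(\mathbf{A}_\omega)$ in the discussion of aggregations and projections) to bring $\mathbf{B}^{-1/2}$ to the front, so that $\mathbf{B}^{-1/2}\mathbf{B}^{1/2}=\mathbf{Id}$ and the prior preconditioning drops out entirely.

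After this cancellation one is left with $\mathrm{Tr}\!\left(\sum_{i=1}^{k}\lambda_i(1+\lambda_i)^{-1}\mathbf{v}_i\mathbf{v}_i^T\right)$, which by linearity of the trace equals $\sum_{i=1}^{k}\lambda_i(1+\lambda_i)^{-1}\,\mathrm{Tr}(\mathbf{v}_i\mathbf{v}_i^T)$. Because $\mathbf{Q}_\mathrm{dof}$ is symmetric, its eigenvectors $\{\mathbf{v}_i\}$ form an orthonormal family, so $\mathrm{Tr}(\mathbf{v}_i\mathbf{v}_i^T)=\mathbf{v}_i^T\mathbf{v}_i=1$ and each term contributes exactly $\lambda_i(1+\lambda_i)^{-1}$. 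This gives the claimed identity $\mathrm{Tr}(\mathbf{A}_{\Pi_\mathrm{dof}})=\sum_{i=1}^{k}\lambda_i(1+\lambda_i)^{-1}$.

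Finally, to justify the statement's phrasing that this quantity is the sum of the first $k$ \emph{eigenvalues of the model resolution matrix}, I would combine Lemma~\ref{diag_avk}, which shows that the eigenvalues of $\mathbf{A}$ coincide with those of $\mathbf{Q}_\mathrm{dof}$ (the $\{\sigma_i\}$), with relation (\ref{v_eq2}), $\mathbf{\Sigma}=\mathbf{\Lambda}(\mathbf{Id}+\mathbf{\Lambda})^{-1}$, which yields $\sigma_i=\lambda_i(1+\lambda_i)^{-1}$. Hence the summands are precisely the $k$ largest eigenvalues of $\mathbf{A}$, consistent with the convention that eigenvectors are ordered by descending eigenvalue. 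I do not anticipate any genuine obstacle: the argument reduces to a one-line trace computation together with orthonormality of the eigenbasis, the only subtle step being the cyclic-invariance manipulation that prevents an (incorrect) premature cancellation of $\mathbf{B}^{\pm 1/2}$.
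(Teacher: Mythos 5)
Your proposal is correct and follows essentially the same route as the paper's own proof: both start from Eq.~(\ref{eq:update_max_dof_mod_res_new}), use cyclic invariance of the trace to cancel $\mathbf{B}^{1/2}$ against $\mathbf{B}^{-1/2}$, and conclude via orthonormality of the $\mathbf{v}_i$. The additional remark identifying the summands $\lambda_i(1+\lambda_i)^{-1}$ with the leading eigenvalues of $\mathbf{A}$ through Lemma~\ref{diag_avk} and (\ref{v_eq2}) is a harmless (and accurate) elaboration the paper leaves implicit.
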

 \begin{proof}
 \begin{align*}
 \mathrm{Tr}(\mathbf{A}_{\Pi_\mathrm{dof}})&=\mathrm{Tr}(\mathbf{B}^{1/2}\left(\sum_{i=1}^{k} \lambda_i(1+\lambda_i)^{-1}\mathbf{v}_i\mathbf{v}^T_i\right)\mathbf{B}^{-1/2}) \\
 &=\mathrm{Tr}(\mathbf{B}^{-1/2}\mathbf{B}^{1/2}\left(\sum_{i=1}^{k} \lambda_i(1+\lambda_i)^{-1}\mathbf{v}_i\mathbf{v}^T_i\right) \\
 &=\sum_{i=1}^{k} \lambda_i(1+\lambda_i)^{-1}
 \end{align*}
 \end{proof}

\subsubsection{Interpretation}

\paragraph{Information Content of Subspaces}
Using our previous analysis, a natural generalization of the concept of information content to subspaces of a linear Bayesian problem can be derived. Given a subspace of dimension $k$ defined by the basis $\{\mathbf{r}_i, i = 1, ..., k\}$ and its associated matrix column $\mathbf{R}$, let us define the projection $\mathbf{\Pi}_R$ with range $\mathbf{R}$ and direction $\mathbf{D} = \mathbf{Id} - \mathbf{B}^{-1}\mathbf{R}$, that is, $\mathbf{\Pi}_R = \mathbf{R}(\mathbf{R}^T\mathbf{B}^{-1}\mathbf{R})^{-1}\mathbf{R}^T\mathbf{B}^{-1}$. One can verify that $\mathbf{\Pi}_R$ belongs to the class of optimal projections $\mathcal{P}_{opt}$ defined in Prop. \ref{canon_form} by defining $\mathbf{U}=\mathbf{B}^{-1/2}\mathbf{R}(\mathbf{R}^T\mathbf{B}^{-1}\mathbf{R})^{-1/2}$, and noting that $\mathbf{U}^T\mathbf{U}=\mathbf{Id}$ and $\mathbf{\Pi}_R=\mathbf{B}^{1/2}\mathbf{U}\mathbf{U}^T\mathbf{B}^{-1/2}$. With this particular choice for the direction of the projection $\mathbf{\Pi}_R$, the information content of the subspace $\{\mathbf{r}_i, i = 1, ..., k\}$ can be defined as the DOFS of the projected Bayesian problem $\mathcal{B}_{\Pi_R}$, that is, the DOFS of the Bayesian problem projected onto that subspace along the direction that maximizes the DOFS (see Thm \ref{thm:best_prolong}).

\paragraph{Most Informed Subspaces}
Thm. \ref{thm:opt_proj} shows that the maximum-DOFS projection is constructed incrementally using  $\mathbf{\Pi}^k=\mathbf{\Pi}^{k-1}+\mathbf{B}^{1/2}\mathbf{v}_i\mathbf{v}_i^T\mathbf{B}^{-1/2}$. Therefore, the subspace defined by the basis $\{\mathbf{B}^{1/2}\mathbf{v}_i,\, i=1,...,k\}$, which corresponds to the range of $\mathbf{\Pi}^k$, can be interpreted as the \textit{most informed subspace} of dimension $k$, while the vector $\mathbf{B}^{1/2}\mathbf{v}_j$ defines the $j$th most constrained direction.

\paragraph{Independently Constrained Modes}
The vectors $\mathbf{B}^{1/2}\mathbf{v}_j$ are the eigenvectors of the model resolution matrix $\mathbf{A}\equiv \frac{\partial \mathbf{x}^a}{\partial \mathbf{x}^t}$. They can therefore be interpreted as the modes that are independently constrained by the observations, since one has (Eq. (\ref{eq:update_max_dof_mod_res_new})) $\frac{\partial(\mathbf{B}^{1/2}\mathbf{v}_i)^a}{\partial(\mathbf{B}^{1/2}\mathbf{v}_j)^t}=\lambda_i(1+\lambda_i)^{-1} \delta_{ij}$ (where $\delta_{ij}$ represents the Kronecker delta).
 
\paragraph{Projected Forward Model}
Based on Rem. \ref{rem:proj_pb}, one can establish a link between the posterior solutions of the maximum-DOFS projection and the posterior solutions of the  Bayesian problem $\mathcal{B}_{\mathbf{H}_{\Pi_\mathrm{dof}}}\equiv(E,F,\mathbf{H}{\mathbf{\Pi}_\mathrm{dof}},\mathbf{B},\mathbf{R}_{\Pi_\mathrm{dof}})$, which corresponds to the initial Bayesian problem, $\mathcal{B}$, with a projected forward model. Indeed, one can verify that the posterior solutions of $\mathcal{B}_{\mathbf{H}_{\Pi_\mathrm{dof}}}$ can be expressed as:
\begin{align}
\label{H_pi_1}
 \mathbf{x}^a_{\mathbf{H}_{\Pi_\mathrm{dof}}} &= \mathbf{x}^a_{\Pi_\mathrm{dof}}\\
 \label{H_pi_2}
 \mathbf{P}^a_{\mathbf{H}_{\Pi_\mathrm{dof}}}&= \mathbf{P}^a_{\Pi_\mathrm{dof}} + \mathbf{B}^{1/2}\left(\sum_{i=k+1}^{n} \mathbf{v}_i\mathbf{v}^T_i\right)\mathbf{B}^{1/2} =\mathbf{B}-\mathbf{B}^{1/2}\left(\sum_{i=1}^{k} \lambda_i(1+\lambda_i)^{-1}\mathbf{v}_i\mathbf{v}^T_i\right)\mathbf{B}^{1/2}  \\
 \label{H_pi_3}
 \mathbf{A}_{\mathbf{H}_{\Pi_\mathrm{dof}}}&=\mathbf{A}_{\Pi_\mathrm{dof}}
\end{align}
The optimality properties of the low-rank update approximation (\ref{H_pi_2}) will be described and exploited in the following Section.
 
 \subsection{Link With Low-Rank Approximations}
\label{link_lr_approx}
 
The maximum-DOFS projection constructed in Section \ref{opt_proj} defines a rank-$k$ Bayesian inverse problem whose information content is maximal among all rank-$k$ projections of the initial Bayesian problem.  Furthermore, the posterior mean and posterior error covariance matrix of the maximum-DOFS projection are also low-rank approximations to the initial full-dimensional posterior mean and posterior error covariance matrix, respectively. In this Section, we provide important optimality results associated with those low-rank approximations to the posterior solution. Additionally, useful optimality results associated with low-rank approximations to the posterior solution that do not correspond to projections are also provided, which can be alternatively used when only best approximations to the solution of the original Bayesian problem are sought and consistency between the approximated posterior mean and posterior errors is not required (see Section \ref{diff_proj_approx}).

\subsubsection{Optimal Low-Rank Approximations}
Before establishing optimality results associated with low-rank approximations to the posterior solution, one needs to define an appropriate class of approximations. As discussed in Section \ref{info_content} , a natural class of approximations for the posterior error covariance matrix is the one that corresponds to negative updates to the prior error covariance matrix. This particular class is central to our analysis, since the negative update can be interpreted as the information content of the inversion (see Section \ref{info_content}). We note that previous studies have already demonstrated the importance of this approximation class for metrics useful in the Bayesian framework (e.g., \citet{spantini2015optimal,cui2014likelihood}).
\begin{defn}[\textbf{Classes of Approximations}]
\label{def_class}
  Let us define the following classes of matrices:
      \begin{align}
&\mathcal{A}_k\equiv \{  \mathbf{M}\in\mathcal{M}_n \,|\, \mathrm{rank}(\mathbf{M})\le k  \}   \nonumber \\
  &    \hat{\mathcal{A}}_k\equiv  \{  \mathbf{M}\in\mathcal{M}_n \,|\, \mathbf{M}=\mathbf{B}- \mathbf{QQ}^T\ge 0,\, \mathrm{rank}(\mathbf{Q})\le k  \} \nonumber \\
&   \hat{\mathcal{O}_k}\equiv \{  \mathbf{M}=\mathbf{P}\mathbf{H}^T\mathbf{R}^{-1}\,|\,\mathbf{P}\in  \hat{\mathcal{A}}_k  \},   \nonumber 
    \end{align}
     where $\hat{\mathcal{A}}_k$ defines the class of negative semidefinite updates to the prior error covariance matrix $\mathbf{B}$. 
          \end{defn}

   The approximations that belong to the class $\mathcal{A}_k$ are referred to as low-rank approximations, while the approximations that belong to the class $ \hat{\mathcal{A}}_k$ are low-rank update approximations. Therefore, the class $\hat{\mathcal{O}_k}$ is associated with full-rank approximations to the posterior mean update.

In the following Proposition, optimality results for several posterior error covariance matrix approximations are provided. All approximations are based on the classes defined in Def. \ref{def_class} and on the truncated eigendecomposition of either $\mathbf{Q}_\textrm{dof}$ (see Section \ref{opt_proj}) or $\mathbf{Q}_\textrm{var}\equiv\mathbf{B}-\mathbf{P}^a$, which are both related to the information content of the inversion. 

 \begin{prop}[\textbf{Optimal Approximations of the Posterior Error Covariance}]
 \label{thm:opt_approx_scov}
   \text{   }  
\newline   
 
     Using the previous notations, let us define: 
      \begin{align}
      \label{LR_dof_post_err}
&  \mathbf{P}^a_{\mathbf{H}_{\Pi_\mathrm{dof}}}=\mathbf{B}-\mathbf{B}^{1/2}\left(\sum_{i=1}^{k} \lambda_i(1+\lambda_i)^{-1}\mathbf{v}_i\mathbf{v}^T_i\right)\mathbf{B}^{1/2} \\
           \label{var_post_err}
     &\mathbf{P}^a_{\mathrm{var}}\equiv\mathbf{B}-\sum_{i=1}^{k} \delta_i\mathbf{u}_i\mathbf{u}^T_i ,
           \end{align}
    where $\mathbf{u}_i$ and $\delta_i$ are the $i$th eigenvector and eigenvalue, respectively, of $\mathbf{Q}_\mathrm{var}=\mathbf{B}\mathbf{H}^T(\mathbf{HBH}^T+\mathbf{R})^{-1}\mathbf{H}\mathbf{B}$, and where $  \mathbf{P}^a_{\mathbf{H}_{\Pi_\mathrm{dof}}}$ is the posterior error covariance of the projected Bayesian problem $\mathcal{B}_{\mathbf{H}_{\Pi_\mathrm{opt}}}$.     \newline
    
     One has the following optimality properties:

 \begin{align}
  \label{F_B_var}
 &\| \mathbf{P}^a_{\mathrm{var}}-\mathbf{P}^a\|_F =  \min_{\mathbf{\tilde{\mathbf{P}}}\in\hat{\mathcal{A}}_k} \|\tilde{\mathbf{P}} -\mathbf{P}^a\|_F=\sqrt{\sum_{i>k} \delta_i^2}  \\ 
 \label{F_B_dof}
 &\|\mathbf{P}^a_{\mathbf{H}_{\Pi_\mathrm{dof}}}-\mathbf{P}^a\|_{F,\mathbf{B}^{-1}} =  \min_{\mathbf{\tilde{\mathbf{P}}}\in\hat{\mathcal{A}}_k} \|\tilde{\mathbf{P}} -\mathbf{P}^a \|_{F,\mathbf{B}^{-1}}=\sqrt{\sum_{i>k} \left (\frac{\lambda_i}{1+\lambda_i}\right )^2} \\
  \label{F_Pa_dof}
 &\|\mathbf{P}^a_{\mathbf{H}_{\Pi_\mathrm{dof}}}-\mathbf{P}^a\|_{F,({\mathbf{P}^a})^{-1}} =  \min_{\mathbf{\tilde{\mathbf{P}}}\in\hat{\mathcal{A}}_k} \|\tilde{\mathbf{P}} -\mathbf{P}^a \|_{F,({\mathbf{P}^a})^{-1}} =\sqrt{\sum_{i>k} \lambda_i^2} \\
   \label{F_B_LRdof}
 &\|  \mathbf{P}^{a}_{\mathbf{\Pi}_\mathrm{dof}}-\mathbf{P}^a\|_{F,\mathbf{B}^{-1}} =  \min_{\mathbf{\tilde{\mathbf{P}}}\in\mathcal{A}_k} \|\tilde{\mathbf{P}} -\mathbf{P}^a \|_{F,\mathbf{B}^{-1}}=\sqrt{\sum_{i>k} \left (\frac{1}{1+\lambda_i}\right )^2} \\
    \label{F_Pa_LRdof}
 &\| \mathbf{P}^{a}_{\mathbf{\Pi}_\mathrm{dof}}-\mathbf{P}^a\|_{F,({\mathbf{P}^a})^{-1}} =  \min_{\mathbf{\tilde{\mathbf{P}}}\in\mathcal{A}_k} \|\tilde{\mathbf{P}} -\mathbf{P}^a \|_{F,({\mathbf{P}^a})^{-1}} =\sqrt{n-k}
 \end{align}
where:
\begin{itemize} 
\item $\|.\|_F$ represents the Frobenius norm.
 \item $\|.\|_{F,\mathbf{W}}$ is the weighted Frobenius norm defined by $\|\mathbf{M}\|_{F,\mathbf{W}}=\|\mathbf{B}^{1/2}\mathbf{M}\mathbf{B}^{1/2}\|_F$, where $\mathbf{B}^{1/2}$ is a square-root of $\mathbf{W}$ $(\mathbf{W}=\mathbf{LL}^T)$.
\end{itemize}

\end{prop}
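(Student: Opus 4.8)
The plan is to treat all five identities by a single mechanism: reduce each weighted-Frobenius minimization to a standard Frobenius best-rank-$k$ problem by a congruence (whitening) transformation, and then invoke the Eckart--Young--Mirsky theorem together with its positive-semidefinite refinement. First I would record the spectral data needed throughout. Using \eqref{eq5} and Prop.~\ref{simp_dof} together with the Sherman--Morrison--Woodbury identity, one has $\mathbf{Q}_\mathrm{var}=\mathbf{B}^{1/2}\mathbf{Q}_\mathrm{dof}\mathbf{B}^{1/2}$ and $\mathbf{B}^{-1/2}\mathbf{P}^a\mathbf{B}^{-1/2}=\mathbf{Id}-\mathbf{Q}_\mathrm{dof}=(\mathbf{Id}+\widehat{\mathbf{H}_p})^{-1}$, so the $\mathbf{B}^{-1/2}$-whitened posterior is diagonalized in the basis $\{\mathbf{v}_i\}$ with eigenvalues $(1+\lambda_i)^{-1}$, while $\mathbf{Q}_\mathrm{dof}$ has eigenvalues $\sigma_i=\lambda_i(1+\lambda_i)^{-1}\le 1$. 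The vectors $\mathbf{g}_i=\mathbf{B}^{1/2}\mathbf{v}_i$ are the eigenvectors of $\mathbf{A}$ (Lemma~\ref{diag_avk}) and form a generalized eigenbasis for the pencil $(\mathbf{P}^a,\mathbf{B})$; this is the basis in which every matrix below is simultaneously diagonal.

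Next I would handle the three identities over $\hat{\mathcal{A}}_k$. Writing an admissible $\tilde{\mathbf{P}}=\mathbf{B}-\mathbf{Q}\mathbf{Q}^T$, the error becomes $\tilde{\mathbf{P}}-\mathbf{P}^a=\mathbf{Q}_\mathrm{var}-\mathbf{Q}\mathbf{Q}^T$, so each minimization asks for the best rank-$\le k$ positive-semidefinite approximation of a fixed positive-semidefinite matrix in the relevant norm. For \eqref{F_B_var} the norm is unweighted and the target is $\mathbf{Q}_\mathrm{var}$ itself, so the minimizer is the spectral truncation $\sum_{i\le k}\delta_i\mathbf{u}_i\mathbf{u}_i^T$, giving $\mathbf{P}^a_\mathrm{var}$ and residual $\sqrt{\sum_{i>k}\delta_i^2}$. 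For \eqref{F_B_dof} I conjugate by $\mathbf{B}^{-1/2}$: the error becomes $\mathbf{Q}_\mathrm{dof}-\mathbf{S}\mathbf{S}^T$ with $\mathbf{S}=\mathbf{B}^{-1/2}\mathbf{Q}$ of rank $\le k$, whose best semidefinite truncation retains the leading $k$ eigenpairs of $\mathbf{Q}_\mathrm{dof}$, yielding $\mathbf{P}^a_{\mathbf{H}_{\Pi_\mathrm{dof}}}$ and residual $\sqrt{\sum_{i>k}(\lambda_i/(1+\lambda_i))^2}$. Identity \eqref{F_Pa_dof} is the same computation after conjugation by $(\mathbf{P}^a)^{-1/2}$, where $(\mathbf{P}^a)^{-1/2}\mathbf{Q}_\mathrm{var}(\mathbf{P}^a)^{-1/2}$ has eigenvalues $\lambda_i$, giving residual $\sqrt{\sum_{i>k}\lambda_i^2}$. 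In each case I must check feasibility, i.e. that the subtracted semidefinite matrix stays below the whitened prior; this holds because the retained eigenvalues ($\delta_i$, $\sigma_i\le1$, $\lambda_i$) are dominated by the corresponding prior eigenvalues.

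For the two identities over $\mathcal{A}_k$ the only constraint is $\mathrm{rank}\le k$, so plain Eckart--Young applies to the whitened posterior. Conjugating \eqref{F_B_LRdof} by $\mathbf{B}^{-1/2}$ turns the target into $\mathbf{Id}-\mathbf{Q}_\mathrm{dof}=\sum_i(1+\lambda_i)^{-1}\mathbf{v}_i\mathbf{v}_i^T$, and reading the residual off the discarded part of the spectrum gives $\sqrt{\sum_{i>k}(1+\lambda_i)^{-2}}$ (here one must track the descending-$\lambda_i$ ordering to identify the retained modes with the maximum-DOFS truncation $\mathbf{P}^a_{\mathbf{\Pi}_\mathrm{dof}}$). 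Conjugating \eqref{F_Pa_LRdof} by $(\mathbf{P}^a)^{-1/2}$ turns the target into $\mathbf{Id}_n$; the best rank-$k$ approximation of the identity is any rank-$k$ orthogonal projector, with residual $\sqrt{n-k}$, and I would verify that $(\mathbf{P}^a)^{-1/2}\mathbf{P}^a_{\mathbf{\Pi}_\mathrm{dof}}(\mathbf{P}^a)^{-1/2}=\sum_{i\le k}\hat{\mathbf{h}}_i\hat{\mathbf{h}}_i^T$ is exactly such a projector, using $\mathbf{g}_i^T(\mathbf{P}^a)^{-1}\mathbf{g}_j=(1+\lambda_i)\delta_{ij}$.

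The main obstacle is the optimality within the negative-update class $\hat{\mathcal{A}}_k$ rather than over all rank-$k$ matrices: the admissible differences $\mathbf{Q}_\mathrm{var}-\mathbf{Q}\mathbf{Q}^T$ range over an affine image of a constrained semidefinite cone, so the bare Eckart--Young theorem does not immediately apply and one needs its Mirsky-type semidefinite refinement to rule out indefinite competitors and guarantee that the spectral truncation is optimal inside the smaller class. A secondary difficulty, specific to the $(\mathbf{P}^a)^{-1}$-weighted identities \eqref{F_Pa_dof} and \eqref{F_Pa_LRdof}, is that conjugation by $(\mathbf{P}^a)^{-1/2}$ does not commute with the $\mathbf{B}^{1/2}$ that diagonalizes the update; before any truncation argument I would establish that both congruences are simultaneously diagonalized in the generalized eigenbasis $\mathbf{g}_i=\mathbf{B}^{1/2}\mathbf{v}_i$, which is precisely what collapses the relevant spectra to $\lambda_i$ and to $1$ and lets the residuals be read off cleanly.
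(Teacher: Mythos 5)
Your proposal is correct and follows essentially the same route as the paper: conjugate each difference by $\mathbf{B}^{\pm1/2}$ or a square-root of $(\mathbf{P}^a)^{-1}$ so that it becomes diagonal in the $\{\mathbf{v}_i\}$ basis, and then read off the optimal truncation and residual from the Eckart--Young theorem. Your added feasibility check for the constrained class $\hat{\mathcal{A}}_k$ (that the spectral truncation is itself a negative semidefinite update dominated by $\mathbf{B}$, so the unconstrained Eckart--Young lower bound is attained inside the smaller class) is a point the paper leaves implicit, and it suffices on its own --- no separate ``semidefinite refinement'' of Eckart--Young is actually needed, since restricting the competitor set can only increase the minimum.
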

\begin{proof}
The proof of (\ref{F_B_var}) follows immediately from the Eckart-Young theorem [Eckart and Young, 1936], since one has $\| \mathbf{P}^a_\mathrm{var} -\mathbf{P}^a\|_F=\| \sum_{i=1}^{k} \delta_i\mathbf{u}_i\mathbf{u}^T_i-\mathbf{B}\mathbf{H}^T(\mathbf{HBH}^T+\mathbf{R})^{-1}\mathbf{H}\mathbf{B}\|_F=\min_{\mathbf{\tilde{\mathbf{P}}}\in\hat{\mathcal{A}}_k}  \|\tilde{\mathbf{P}} -\mathbf{Q}_\mathrm{var}\|_F$.\newline
The proofs for formulas (\ref{F_B_dof})-(\ref{F_Pa_LRdof}) are obtained by writing:
\begin{align} 
\label{diff_pafr_pa}
\mathbf{P}^a_{\mathbf{H}_{\Pi_\mathrm{dof}}}-\mathbf{P}^a&=\mathbf{B}^{1/2}\left(\sum_{i=1}^{k} \lambda_i(1+\lambda_i)^{-1}\mathbf{v}_i\mathbf{v}^T_i - \sum_{i=1}^{n} \lambda_i(1+\lambda_i)^{-1}\mathbf{v}_i\mathbf{v}^T_i\right)\mathbf{B}^{1/2} \\
\label{diff_padof_pa}
\mathbf{P}^a_{\mathbf{\Pi}_\mathrm{dof}}-\mathbf{P}^a&=\mathbf{B}^{1/2}\left(\sum_{i=1}^{k}( 1+\lambda_i )^{-1}\mathbf{v}_i\mathbf{v}_i^T-\sum_{i=1}^{n}( 1+\lambda_i )^{-1}\mathbf{v}_i\mathbf{v}_i^T\right )\mathbf{B}^{1/2}
\end{align}
Multiplying on the left and on the right by the inverse of the square-root of $\mathbf{B}$ in (\ref{diff_pafr_pa}) and (\ref{diff_padof_pa}), we obtain:
\begin{align*}
 \|\mathbf{P}^a_{\mathbf{H}_{\Pi_\mathrm{dof}}}-\mathbf{P}^a\|_{F,\mathbf{B}^{-1}}& =\| \left(\sum_{i=1}^{k} \lambda_i(1+\lambda_i)^{-1}\mathbf{v}_i\mathbf{v}^T_i - \sum_{i=1}^{n} \lambda_i(1+\lambda_i)^{-1}\mathbf{v}_i\mathbf{v}^T_i\right) \|_F \\
 &=\min_{\mathbf{\tilde{\mathbf{M}}}\in\hat{\mathcal{A}}_k}  \|\tilde{\mathbf{M}} -\sum_{i=1}^{n} \lambda_i(1+\lambda_i)^{-1}\mathbf{v}_i\mathbf{v}^T_i\|_F \\
 &=\sqrt{\sum_{i>k} \left (\frac{\lambda_i}{1+\lambda_i}\right )^2}\\ 
 \|\mathbf{P}^a_{\mathbf{\Pi}_\mathrm{dof}}-\mathbf{P}^a\|_{F,\mathbf{B}^{-1}}& =\| \left(\sum_{i=1}^{k}(1+\lambda_i)^{-1}\mathbf{v}_i\mathbf{v}^T_i - \sum_{i=1}^{n} (1+\lambda_i)^{-1}\mathbf{v}_i\mathbf{v}^T_i\right) \|_F \\
 &=\min_{\mathbf{\tilde{\mathbf{M}}}\in\hat{\mathcal{A}}_k}  \|\tilde{\mathbf{M}} -\sum_{i=1}^{n} (1+\lambda_i)^{-1}\mathbf{v}_i\mathbf{v}^T_i\|_F \\
 &=\sqrt{\sum_{i>k} \left (\frac{1}{1+\lambda_i}\right )^2}
 \end{align*}
Using the previous notations, a square-root of $(\mathbf{P}^a)^{-1}=(\mathbf{B}^{-1}+\mathbf{H}^T\mathbf{R}^{-1}\mathbf{H})=\mathbf{B}^{-1/2}(\mathbf{Id}+\mathbf{B}^{1/2}\mathbf{H}^T\mathbf{R}^{-1}\mathbf{H}\mathbf{B}^{1/2})\mathbf{B}^{-1/2}$ is given by $\mathbf{L}_{{\mathbf{P}^a}^{-1}}=\mathbf{B}^{-1/2}\mathbf{V}(\mathbf{Id}+\mathbf{\Lambda})^{1/2}\mathbf{V}^T$. Multiplying on the left and on the right by ${\mathbf{L}_{\mathbf{P}^a}}^T$ and ${\mathbf{L}_{\mathbf{P}^a}}$, respectively, in (\ref{diff_pafr_pa}) and (\ref{diff_padof_pa}), we obtain:
\begin{align*}
 \|\mathbf{P}^a_{\mathbf{H}_{\Pi_\mathrm{dof}}}-\mathbf{P}^a\|_{F,(\mathbf{P}^a)^{-1}}& =\| \left(\sum_{i=1}^{k} \lambda_i\mathbf{v}_i\mathbf{v}^T_i - \sum_{i=1}^{n} \lambda_i\mathbf{v}_i\mathbf{v}^T_i\right) \|_F \\
 &=\min_{\mathbf{\tilde{\mathbf{M}}}\in\hat{\mathcal{A}}_k}  \|\tilde{\mathbf{M}} -\sum_{i=1}^{n} \lambda_i\mathbf{v}_i\mathbf{v}^T_i\|_F \\
 &=\sqrt{\sum_{i>k}  \lambda_i^2}\\ 
 \|\mathbf{P}^a_{\mathbf{\Pi}_\mathrm{dof}}-\mathbf{P}^a\|_{F(\mathbf{P}^a)^{-1}}& =\| \left(\sum_{i=1}^{k}\mathbf{v}_i\mathbf{v}^T_i - \sum_{i=1}^{n} \mathbf{v}_i\mathbf{v}^T_i\right) \|_F \\
 &=\min_{\mathbf{\tilde{\mathbf{M}}}\in\hat{\mathcal{A}}_k}  \|\tilde{\mathbf{M}} -\sum_{i=1}^{n}\mathbf{v}_i\mathbf{v}^T_i\|_F \\
 &=\sqrt{n-k}
 \end{align*}
\end{proof}
\begin{rem}
 It has been shown in \citet{spantini2015optimal} that $\mathbf{P}^a_{\mathbf{H}_{\Pi_\mathrm{dof}}}$ also verifies: $d_{\mathcal{F}}( \mathbf{P}^a_{\mathbf{H}_{\Pi_\mathrm{dof}}},\mathbf{P}^a) =  \min_{ \tilde{\mathbf{P}}\in\hat{\mathcal{A}}_k} d_{\mathcal{F}}( \tilde{\mathbf{P}} ,\mathbf{P}^a) $, where $d_{\mathcal{F}}$ is the \textit{F\"orstner distance}, defined by  $d_{\mathcal{F}}(\mathbf{P},\mathbf{N})=\sum_i(\ln{\sigma_i})^2$, where ($\sigma_i$) is the sequence of generalized eigenvalues of the pencil $(\mathbf{P},\mathbf{N})$.
\end{rem}

We now establish optimality results for several posterior mean approximations. In addition to the maximum-DOFS solution ($\mathbf{x}^a_{\mathbf{\Pi}_\mathrm{dof}}$), a full-rank posterior mean approximation ($\mathbf{x}^{a}_{\mathrm{FR}_\mathrm{dof}}$) is considered. It is obtained by replacing the posterior error covariance implicit in Eq. (\ref{post_update2}) by the low-rank update approximation $\mathbf{P}^a_{\mathbf{H}_{\Pi_\mathrm{dof}}}$. The truncated eigendecomposition of $\mathbf{Q}_\textrm{var}$ is also exploited to define another optimal approximation of the posterior mean ($\mathbf{x}^a_{\mathrm{var}}$).  The optimality results (\ref{eq:opt_dof_Pa_mean}) and (\ref{eq:opt_dof_B_mean_FR}) below can be found in \citet{spantini2015optimal}. We recall the proofs here since the same principles can be applied to prove the other optimality results.

 \begin{prop}[\textbf{Optimal Approximations of the Posterior Mean}]
 \label{thm:opt_approx_mean}
  \text{   }  
\newline   

 Using the previous notations, let us define the following posterior mean approximations:
   \begin{align} 
    \label{eq:FR_postupdate}
     &  \mathbf{x}^{a}_{\mathrm{FR}_\mathrm{dof}}\equiv\mathbf{P}^a_{\mathbf{H}_{\Pi_\mathrm{dof}}}\mathbf{H}^T\mathbf{R}^{-1} \mathbf{d} \\
     \label{eq:update_min_var_mean}
   &\mathbf{x}^a_{\mathrm{var}}=\sum_{i=1}^{k} \delta_i^{1/2}\mathbf{u}_i\mathbf{z}^T_i\left( \mathbf{HBH}^T+\mathbf{R}\right)^{-1/2}\mathbf{d} ,
\end{align}
   where $\mathbf{z}_i=(\mathbf{HBH}^T+\mathbf{R})^{-1/2}\mathbf{H}\mathbf{B} \mathbf{u}_i$.   \newline
   
  One has the following optimality properties:
  
\begin{align}
\label{eq:opt_var_mean}
 & \mathbb{E}  \| \mathbf{x}^a_{\mathrm{var}}-\mathbf{x}^a  \|^2 = \min_{\tilde{\mathbf{K}}\in\mathcal{A}_k}  \mathbb{E}  \| (\tilde{\mathbf{K}}-\mathbf{K}) \mathbf{d}\|^2 =\sum_{i>k} \delta_i \\
\label{eq:opt_dof_B_mean}
  & \mathbb{E}  \| \mathbf{x}^a_{\mathbf{\Pi}_\mathrm{dof}}-\mathbf{x}^a  \|_{\mathbf{B}^{-1}}^2 = \min_{\tilde{\mathbf{K}}\in\mathcal{A}_k}  \mathbb{E}  \| (\tilde{\mathbf{K}}-\mathbf{K}) \mathbf{d}\|^2_{\mathbf{B}^{-1}} =\sum_{i>k}\frac{\lambda_i}{1+\lambda_i} \\
  \label{eq:opt_dof_Pa_mean}
 & \mathbb{E} \| \mathbf{x}^a_{\mathbf{\Pi}_\mathrm{dof}}-\mathbf{x}^a  \|_{(\mathbf{P}^{a})^{-1}}^2 = \min_{\tilde{\mathbf{K}}\in\mathcal{A}_k}  \mathbb{E}  \| (\tilde{\mathbf{K}}-\mathbf{K}) \mathbf{d}\|^2_{(\mathbf{P}^{a})^{-1}} =\sum_{i>k} \lambda_i\\
   \label{eq:opt_FRdof_B_mean}
  & \mathbb{E}  \|  \mathbf{x}^{a}_{\mathrm{FR}_\mathrm{dof}}-\mathbf{x}^a  \|_{\mathbf{B}^{-1}}^2 = \min_{\tilde{\mathbf{P}}\in\hat{\mathcal{O}_k} }  \mathbb{E}  \|(\tilde{\mathbf{P}}-\mathbf{P}^a)\mathbf{H}^T\mathbf{R}^{-1} \mathbf{d}\|^2_{\mathbf{B}^{-1}} =\sum_{i>k}\frac{\lambda_i^3}{1+\lambda_i} \\
  \label{eq:opt_dof_B_mean_FR}
 & \mathbb{E} \| \mathbf{x}^{a}_{\mathrm{FR}_\mathrm{dof}}-\mathbf{x}^a  \|_{(\mathbf{P}^{a})^{-1}}^2 = \min_{\tilde{\mathbf{P}}\in\hat{\mathcal{O}_k} } \mathbb{E}  \| (\tilde{\mathbf{P}}-\mathbf{P}^a)\mathbf{H}^T\mathbf{R}^{-1} \mathbf{d}\|^2_{(\mathbf{P}^{a})^{-1}}=\sum_{i>k} \lambda_i^3,
\end{align}
where:
\begin{itemize} 
\item$\| .\|$ represent the Euclidian norm.
\item $\|\mathbf{x}\|_{\mathbf{W}}=\sqrt{\mathbf{x}^T\mathbf{W}\mathbf{x}}$ is the weighted Euclidian norm.
 \item$\mathbb{E}$ is the average operator.
 \item$\mathbf{K}=\mathbf{BH}^T(\mathbf{HBH}^T+\mathbf{R})^{-1}$ is the gain matrix of the initial Bayesian problem.
\end{itemize}
 \end{prop}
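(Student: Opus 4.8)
The plan is to exploit that every approximation in the statement is a linear map applied to the innovation, $\tilde{\mathbf{x}}^a=\tilde{\mathbf{K}}\mathbf{d}$, and that the innovation has covariance $\overline{\mathbf{d}\mathbf{d}^T}=\mathbf{S}$ with $\mathbf{S}\equiv\mathbf{HBH}^T+\mathbf{R}$. First I would record the two equivalent forms of the exact gain, $\mathbf{K}=\mathbf{BH}^T\mathbf{S}^{-1}=\mathbf{P}^a\mathbf{H}^T\mathbf{R}^{-1}$, together with the identities $\mathbf{Q}_\mathrm{var}=\mathbf{B}-\mathbf{P}^a=\mathbf{KSK}^T$. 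Then, for any positive-definite weight $\mathbf{W}$, the Bayes risk becomes a weighted Frobenius norm,
\begin{align*}
\mathbb{E}\,\| (\tilde{\mathbf{K}}-\mathbf{K})\mathbf{d}\|_{\mathbf{W}}^2=\mathrm{Tr}\!\left((\tilde{\mathbf{K}}-\mathbf{K})^T\mathbf{W}(\tilde{\mathbf{K}}-\mathbf{K})\mathbf{S}\right)=\| \mathbf{W}^{1/2}(\tilde{\mathbf{K}}-\mathbf{K})\mathbf{S}^{1/2}\|_F^2 ,
\end{align*}
which is the object to minimize over the relevant class.

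For the three results constrained to $\mathcal{A}_k$ (equations (\ref{eq:opt_var_mean})--(\ref{eq:opt_dof_Pa_mean})), I would observe that $\mathbf{W}^{1/2}$ and $\mathbf{S}^{1/2}$ are invertible, so $\mathrm{rank}(\mathbf{W}^{1/2}\tilde{\mathbf{K}}\mathbf{S}^{1/2})=\mathrm{rank}(\tilde{\mathbf{K}})$ and the minimization reduces, by the Eckart--Young theorem, to a best rank-$k$ approximation of $\mathbf{W}^{1/2}\mathbf{K}\mathbf{S}^{1/2}$ in Frobenius norm. The minimal value is the tail sum of the squared singular values, i.e.\ of the eigenvalues of $\mathbf{W}\,\mathbf{K}\mathbf{S}\mathbf{K}^T=\mathbf{W}\,\mathbf{Q}_\mathrm{var}$. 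A short computation gives these eigenvalues in each case: $\delta_i$ for $\mathbf{W}=\mathbf{Id}$; the eigenvalues $\lambda_i/(1+\lambda_i)$ of $\mathbf{Q}_\mathrm{dof}$ for $\mathbf{W}=\mathbf{B}^{-1}$ (using $\mathbf{B}^{-1/2}\mathbf{Q}_\mathrm{var}\mathbf{B}^{-1/2}=\mathbf{Q}_\mathrm{dof}$ and Prop.~\ref{simp_dof}); and $\lambda_i$ for $\mathbf{W}=(\mathbf{P}^a)^{-1}$ (since $(\mathbf{P}^a)^{-1/2}\mathbf{B}(\mathbf{P}^a)^{-1/2}$ has eigenvalues $1+\lambda_i$). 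To close each case I would verify that the proposed estimator is exactly this truncated SVD; $\mathbf{x}^a_{\mathbf{\Pi}_\mathrm{dof}}$ serves simultaneously for both weights because, after whitening by $\mathbf{B}^{1/2}$, all the relevant operators are diagonalized in the single eigenbasis $\{\mathbf{v}_i\}$ of $\widehat{\mathbf{H}}_p$, so the same $k$-dimensional subspace is optimal.

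The two full-rank results (equations (\ref{eq:opt_FRdof_B_mean})--(\ref{eq:opt_dof_B_mean_FR})) are treated in the same spirit but require more care, since the class is $\hat{\mathcal{A}}_k$ and the error carries the extra factor $\mathbf{H}^T\mathbf{R}^{-1}$. Writing $\tilde{\mathbf{P}}=\mathbf{B}-\mathbf{QQ}^T$ and $\mathbf{G}=\mathbf{H}^T\mathbf{R}^{-1}\mathbf{H}$, I would use $\mathbf{H}^T\mathbf{R}^{-1}\mathbf{S}\mathbf{R}^{-1}\mathbf{H}=\mathbf{GBG}+\mathbf{G}$ and pass to the whitened eigenbasis, in which $\mathbf{B}^{1/2}\mathbf{G}\mathbf{B}^{1/2}=\widehat{\mathbf{H}}_p=\mathbf{V}\mathbf{\Lambda}\mathbf{V}^T$. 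Setting $\mathbf{C}=\mathbf{V}^T\mathbf{B}^{-1/2}(\tilde{\mathbf{P}}-\mathbf{P}^a)\mathbf{B}^{-1/2}\mathbf{V}=\mathbf{D}_\lambda-\mathbf{P}_k$, with $\mathbf{D}_\lambda=\mathrm{diag}(\lambda_i/(1+\lambda_i))$ and $\mathbf{P}_k$ a rank-$k$ positive-semidefinite matrix bounded by $\mathbf{Id}$, and writing $\mathbf{D}_+=\mathbf{Id}+\mathbf{\Lambda}$, the risk collapses to $\| \mathbf{D}_+^{1/2}\mathbf{C}\,\mathbf{\Lambda}^{1/2}\mathbf{D}_+^{1/2}\|_F^2$ for $\mathbf{W}=(\mathbf{P}^a)^{-1}$ and to $\| \mathbf{C}\,\mathbf{\Lambda}^{1/2}\mathbf{D}_+^{1/2}\|_F^2$ for $\mathbf{W}=\mathbf{B}^{-1}$. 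Since $\mathbf{D}_\lambda=\mathbf{\Lambda}\mathbf{D}_+^{-1}$, the fixed part becomes the diagonal matrix $\mathbf{\Lambda}^{3/2}$ (resp.\ $\mathbf{\Lambda}^{3/2}\mathbf{D}_+^{-1/2}$), so each risk is $\| \mathbf{N}-\mathbf{Z}\|_F^2$ with $\mathbf{N}$ a fixed diagonal matrix whose entries decrease in $i$ and $\mathbf{Z}$ of rank at most $k$; Eckart--Young yields the tail sums $\sum_{i>k}\lambda_i^3$ and $\sum_{i>k}\lambda_i^3/(1+\lambda_i)$, and the optimal diagonal truncation is realized by the feasible choice $\mathbf{P}_k=\mathrm{diag}(\lambda_i/(1+\lambda_i))_{i\le k}$, i.e.\ by $\mathbf{P}^a_{\mathbf{H}_{\Pi_\mathrm{dof}}}$.

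The main obstacle is precisely this last reduction. Written naively as $\sum_{i,j}(1+\lambda_i)\lambda_j(1+\lambda_j)\,C_{ij}^2$, the full-rank risk is a non-separably weighted sum of squared entries of a symmetric matrix, to which Eckart--Young does not apply directly; the crucial manoeuvre is to absorb the (diagonal) weights onto one side of $\mathbf{C}$ so that the problem becomes an unconstrained best rank-$k$ approximation of a fixed diagonal target, and then to check that the resulting optimum lies inside $\hat{\mathcal{A}}_k$ (the positivity $\mathbf{B}-\mathbf{QQ}^T\ge 0$ holds because $\lambda_i/(1+\lambda_i)<1$). The remaining ingredients---the gain identities, the relation $\mathbf{\Sigma}=\mathbf{\Lambda}(\mathbf{Id}+\mathbf{\Lambda})^{-1}$ of Prop.~\ref{simp_dof}, and the singular-vector bookkeeping confirming achievability of the $\mathcal{A}_k$ bounds---are routine.
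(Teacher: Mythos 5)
Your proposal is correct, and its first half is essentially the paper's own argument: the paper likewise converts the Bayes risk into a weighted Frobenius norm of $\tilde{\mathbf{K}}-\mathbf{K}$ (its Lemma~\ref{equiv_average_F} is exactly your identity $\mathbb{E}\|(\tilde{\mathbf{K}}-\mathbf{K})\mathbf{d}\|_{\mathbf{W}}^2=\|\mathbf{L}_{\mathbf{W}}^T(\tilde{\mathbf{K}}-\mathbf{K})\mathbf{L}_{\mathbf{S}}\|_F^2$) and then solves the rank-constrained problem. The difference is in the key lemma and in the handling of the class $\hat{\mathcal{O}}_k$. For~(\ref{eq:opt_var_mean})--(\ref{eq:opt_dof_Pa_mean}) the paper invokes Theorem~2.1 of Friedland and Torokhti, which hands over the minimizer in the closed form $\mathbf{L}_{\mathbf{W}}^{-T}[\mathbf{L}_{\mathbf{W}}^T\mathbf{K}\mathbf{L}_{\mathbf{S}}]_k\mathbf{L}_{\mathbf{S}}^{-1}$, and then grinds through the SVD bookkeeping to identify it with $\mathbf{x}^a_{\mathbf{\Pi}_\mathrm{dof}}$; you instead absorb the invertible weights and apply plain Eckart--Young, observing that all three weighted targets are simultaneously diagonalized by $\{\mathbf{v}_i\}$ with singular values monotone in $\lambda_i$, which is more elementary and explains at once why a single estimator is optimal for both weights. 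For the full-rank results~(\ref{eq:opt_FRdof_B_mean})--(\ref{eq:opt_dof_B_mean_FR}) the paper's route is the more awkward one: because $\mathbf{H}^T\mathbf{R}^{-1}\mathbf{L}_{\mathbf{Y}}$ is not invertible it resorts to a Moore--Penrose pseudoinverse and then asserts that ``another minimizer'' (the one it actually wants) also works; your reduction --- relax $\hat{\mathcal{O}}_k$ to an unconstrained rank-$\le k$ perturbation of the fixed diagonal target $\mathbf{\Lambda}^{3/2}$ (resp.\ $\mathbf{\Lambda}^{3/2}(\mathbf{Id}+\mathbf{\Lambda})^{-1/2}$), invoke Eckart--Young for the lower bound, and exhibit the feasible diagonal choice $\mathbf{P}_k=\mathrm{diag}(\lambda_i/(1+\lambda_i))_{i\le k}$ attaining it, noting that feasibility in $\hat{\mathcal{A}}_k$ holds since $\lambda_i/(1+\lambda_i)<1$ --- is cleaner and makes explicit the relaxation-plus-feasibility step that the paper leaves implicit when it equates the minimum over $\hat{\mathcal{O}}_k$ with a minimum over $\mathcal{A}_k$. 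Both routes buy the same tail sums; yours is more self-contained, the paper's defers the optimization to a citable general theorem.
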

  
 \begin{proof}
 To prove (\ref{eq:opt_var_mean}), we use Lemma \ref{equiv_average_F} and the fact that $ \mathbb{E} (\mathbf{d}\mathbf{d}^T)=\mathbf{HBH}^T+\mathbf{R}$:  
 \begin{align}
 \mathbb{E}  \| (\tilde{\mathbf{K}}-\mathbf{K}) \mathbf{d}\|^2 &=\| (\tilde{\mathbf{K}}-\mathbf{K})(\mathbf{HBH}^T+\mathbf{R})^{1/2}\|^2_F \\
 \label{min_var_mean}
 &=\| (\tilde{\mathbf{K}}(\mathbf{HBH}^T+\mathbf{R})^{1/2}-\mathbf{BH}^T(\mathbf{HBH}^T+\mathbf{R})^{-1/2})\|^2_F
  \end{align} 
 Now, using Theorem 2.1 of \citet{friedland2007generalized}, a solution of $ \min_{\tilde{\mathbf{K}}\in\mathcal{A}_k}  \mathbb{E}  \| (\tilde{\mathbf{K}}-\mathbf{K}) \mathbf{d}\|^2$ is given by:
 \begin{align}
 \label{Kopt}
\tilde{\mathbf{K}}_\mathrm{opt}=\sum_{i=1}^{k} \delta_i^{1/2}\mathbf{u}_i\mathbf{z}^T_i\left( \mathbf{HBH}^T+\mathbf{R}\right)^{-1/2},
 \end{align}
 where $\sum_{i=1}^{k} \delta_i^{1/2}\mathbf{u}_i\mathbf{z}^T_i$ is the truncated SVD of rank $k$ of $\mathbf{BH}^T(\mathbf{HBH}^T+\mathbf{R})^{-1/2}$ and $(\mathbf{HBH}^T+\mathbf{R})^{1/2}$ is a non-singular square-root of $\mathbf{HBH}^T+\mathbf{R}$. Replacing $\tilde{\mathbf{K}}$ by (\ref{Kopt}) and $\mathbf{BH}^T(\mathbf{HBH}^T+\mathbf{R})^{-1/2}$ by $\sum_{i=1}^{n} \delta_i^{1/2}\mathbf{u}_i\mathbf{z}^T_i$ in (\ref{min_var_mean}) yields (\ref{eq:opt_var_mean}).\newline
 
 Below we prove (\ref{eq:opt_dof_Pa_mean}). The proof for (\ref{eq:opt_dof_B_mean}) is obtained similarly. To prove (\ref{eq:opt_dof_Pa_mean}), we also use Lemma \ref{equiv_average_F} and the square-roots ${\mathbf{L}_{({\mathbf{P}^a})^{-1}}}=\mathbf{B}^{-1/2}\mathbf{V}(\mathbf{I+\Lambda})^{1/2}\mathbf{V}^T$ and ${\mathbf{L}_{\mathbf{Y}}}=\mathbf{R}^{1/2}\mathbf{W}(\mathbf{Id}+\mathbf{\Lambda})^{1/2}\mathbf{W}^T$ of $(\mathbf{P}^a)^{-1}$ and  $\mathbf{Y}=\mathbf{HBH}^T+\mathbf{R}$, respectively. One has:
 \begin{align}
 \label{eq_K_Pa}
  \mathbb{E}  \| (\tilde{\mathbf{K}}-\mathbf{K}) \mathbf{d}\|^2_{(\mathbf{P}^{a})^{-1}}  &=\|{\mathbf{L}_{({\mathbf{P}^a})^{-1}}}^T(\tilde{\mathbf{K}}-\mathbf{K}){\mathbf{L}_{\mathbf{Y}}}\|^2_F
  \end{align} 
Using Theorem 2.1 of \citet{friedland2007generalized}, a solution of $\min_{\tilde{\mathbf{K}}\in\mathcal{A}_k}  \mathbb{E}  \| (\tilde{\mathbf{K}}-\mathbf{K}) \mathbf{d}\|^2_{(\mathbf{P}^a)^{-1}}$ is therefore given by:
 \begin{align}
 \label{Kopt}
\tilde{\mathbf{K}}_\mathrm{opt}={\mathbf{L}_{(\mathbf{P}^a)^{-1}}}^{-T} \left[ {\mathbf{L}_{(\mathbf{P}^a)^{-1}}}^T\mathbf{K}  {\mathbf{L}_{\mathbf{Y}}}\right ]_k {\mathbf{L}_{\mathbf{Y}}}^{-1 },
 \end{align}
  where $\left[ \mathbf{M}\right ]_k$ is the rank-$k$ truncated SVD of the matrix $\mathbf{M}$. Further developing (\ref{Kopt}), one obtains:
  \begin{align*}
  \tilde{\mathbf{K}}_\mathrm{opt}=&\mathbf{B}^{1/2} \mathbf{V}(\mathbf{I+\Lambda})^{-1/2}\mathbf{V}^T \\
  & \left[ \mathbf{V}(\mathbf{Id}+\mathbf{\Lambda})^{1/2}\mathbf{V}^{T}\mathbf{B}^{1/2}\mathbf{H}^T\mathbf{R}^{-1/2}(\mathbf{Id}+\mathbf{R}^{-1/2}\mathbf{HBH}^T\mathbf{R}^{-1/2})^{-1}\mathbf{W}(\mathbf{Id}+\mathbf{\Lambda})^{1/2}\mathbf{W}^T \right ]_k \\
 &  \mathbf{W}(\mathbf{Id}+\mathbf{\Lambda})^{-1/2}\mathbf{W}^T \mathbf{R}^{-1/2}
  \end{align*}
  Using $\mathbf{B}^{1/2}\mathbf{H}^T\mathbf{R}^{-1/2}=\mathbf{V}\mathbf{\Lambda}^{1/2}\mathbf{W}^T$ in the expression above one obtains:
   \begin{align*}
  \tilde{\mathbf{K}}_\mathrm{opt}=&\mathbf{B}^{1/2} \mathbf{V}(\mathbf{I+\Lambda})^{-1/2}\mathbf{V}^T \\
  & \left[ \mathbf{V}(\mathbf{Id}+\mathbf{\Lambda})^{1/2}\mathbf{V}^{T}\mathbf{V}\mathbf{\Lambda}^{1/2}\mathbf{W}^T\mathbf{W}(\mathbf{Id}-\mathbf{\Lambda}(\mathbf{Id}+\mathbf{\Lambda})^{-1})\mathbf{W}^T\mathbf{W}(\mathbf{Id}+\mathbf{\Lambda})^{1/2}\mathbf{W}^T \right ]_k \\
 &  \mathbf{W}(\mathbf{Id}+\mathbf{\Lambda})^{-1/2}\mathbf{W}^T \mathbf{R}^{-1/2} \\
 =&\mathbf{B}^{1/2} \mathbf{V}(\mathbf{I+\Lambda})^{-1/2}\mathbf{V}^T \\
  & \left[ \sum_{i=1}^{k} \mathbf{v}_i \mathbf{w}^T_i(1+\lambda_i)^{1/2}\lambda_i^{1/2}(1-\lambda_i(1+\lambda_i)^{-1})(1+\lambda_i)^{1/2} \right ] \\
 &  \mathbf{W}(\mathbf{Id}+\mathbf{\Lambda})^{-1/2}\mathbf{W}^T \mathbf{R}^{-1/2} \\
 =&\mathbf{B}^{1/2} \left(\sum_{i=1}^{k} \lambda_i^{1/2}(1+\lambda_i)^{-1}\mathbf{v}_i\mathbf{w}^T_i\right)\mathbf{R}^{-1/2},
  \end{align*} 
  where we used the orthogonality properties of $\mathbf{W}$ and $\mathbf{V}$. Finally, using \\
  $\tilde{\mathbf{K}}_\mathrm{opt}=\mathbf{B}^{1/2} \left(\sum_{i=1}^{k} \lambda_i^{1/2}(1+\lambda_i)^{-1}\mathbf{v}_i\mathbf{w}^T_i\right)\mathbf{R}^{-1/2}$ and $\mathbf{K}=\mathbf{B}^{1/2} \left(\sum_{i=1}^{n} \lambda_i^{1/2}(1+\lambda_i)^{-1}\mathbf{v}_i\mathbf{w}^T_i\right)\mathbf{R}^{-1/2}$ in the right-hand side of (\ref{eq_K_Pa}) leads to $ \mathbb{E}  \| (\tilde{\mathbf{K}}_\mathrm{opt}-\mathbf{K}) \mathbf{d}\|^2_{(\mathbf{P}^{a})^{-1}}  =\sum_{i>k} \lambda_i$, which proves (\ref{eq:opt_dof_Pa_mean}).\newline
  Below we prove (\ref{eq:opt_dof_B_mean_FR}), (\ref{eq:opt_FRdof_B_mean}) being obtained similarly. Using expression (\ref{post_update2}) for the posterior mean and Lemma \ref{equiv_average_F}, we obtain:
  \begin{align}
 \label{eq_P_Pa1}
 \mathbb{E} \| \mathbf{x}^{a}_{\mathrm{FR}_\mathrm{dof}}-\mathbf{x}^a  \|_{(\mathbf{P}^{a})^{-1}}^2 &= \mathbb{E}  \| (\tilde{\mathbf{P}}-\mathbf{P}^a)\mathbf{H}^T\mathbf{R}^{-1} \mathbf{d}\|^2_{(\mathbf{P}^{a})^{-1}}  \\
  \label{eq_P_Pa2}
 &= \| {\mathbf{L}_{(\mathbf{P}^a)^{-1}}}^T(\tilde{\mathbf{P}}-\mathbf{P}^a)\mathbf{H}^T\mathbf{R}^{-1}{\mathbf{L}_{\mathbf{Y}}}\|^2_F
 \end{align}
 We first note that:
 \begin{align*}
  \min_{\tilde{\mathbf{P}}\in\hat{\mathcal{O}_k} } \| {\mathbf{L}_{(\mathbf{P}^a)^{-1}}}^T(\tilde{\mathbf{P}}-\mathbf{P}^a)\mathbf{H}^T\mathbf{R}^{-1}{\mathbf{L}_{\mathbf{Y}}}\|^2_F =\min_{\tilde{\mathbf{F}}\in\mathcal{A}_k} \| {\mathbf{L}_{(\mathbf{P}^a)^{-1}}}^T(\tilde{\mathbf{F}}-(\mathbf{P}^a-\mathbf{B}))\mathbf{H}^T\mathbf{R}^{-1}{\mathbf{L}_{\mathbf{Y}}}\|^2_F
  \end{align*}
  Using Theorem 2.1 of \citet{friedland2007generalized}, a solution of \\
 $\min_{\tilde{\mathbf{F}}\in\mathcal{A}_k} \| {\mathbf{L}_{(\mathbf{P}^a)^{-1}}}^T(\tilde{\mathbf{F}}-(\mathbf{P}^a-\mathbf{B}))\mathbf{H}^T\mathbf{R}^{-1}{\mathbf{L}_{\mathbf{Y}}}\|^2_F$ is given by:
 \begin{align*}
\tilde{\mathbf{F}}_\mathrm{opt}= {\mathbf{L}_{(\mathbf{P}^a)^{-1}}}^{-T} \left[ {\mathbf{L}_{(\mathbf{P}^a)^{-1}}}^T(\mathbf{P}^a-\mathbf{B})  \mathbf{H}^T\mathbf{R}^{-1}{\mathbf{L}_{\mathbf{Y}}}\right ]_k( { \mathbf{H}^T\mathbf{R}^{-1}\mathbf{L}_{\mathbf{Y}}})^{+},
 \end{align*}
 where $^{+}$ denotes the Moore-Penrose pseudoinverse.
 One can verify that another minimizer of (\ref{eq_P_Pa2}) is:
  \begin{align}
\tilde{\mathbf{F}}'_\mathrm{opt}= {\mathbf{L}_{(\mathbf{P}^a)^{-1}}}^{-T} \left[ {\mathbf{L}_{(\mathbf{P}^a)^{-1}}}^T(\mathbf{P}^a-\mathbf{B})\mathbf{H}^T\mathbf{R}^{-1}{\mathbf{L}_{\mathbf{Y}}}\right ]_k( {\mathbf{B}^{1/2} \mathbf{H}^T\mathbf{R}^{-1}\mathbf{L}_{\mathbf{Y}}})^{+} \mathbf{B}^{1/2}
 \end{align}
Factorizing the expression above using the square-root $\mathbf{B}^{1/2} \mathbf{H}^T\mathbf{R}^{-1/2}$ and its SVD like before, we obtain:
 \begin{align}
\tilde{\mathbf{F}}'_\mathrm{opt}=& \mathbf{B}^{1/2} \mathbf{V}(\mathbf{I+\Lambda})^{-1/2}\mathbf{V}^T \\
& \left[ \sum_{i=1}^{k} \mathbf{v}_i \mathbf{w}^T_i(1+\lambda_i)^{1/2}(\lambda_i(1+\lambda_i)^{-1})\lambda_i^{1/2} (1+\lambda_i)^{1/2}\right ] \\ 
&\mathbf{W}(\mathbf{Id}+\mathbf{\Lambda})^{-1/2}\mathbf{\Lambda}^{-1/2} \mathbf{V}^T \mathbf{B}^{1/2}  \\
=&\mathbf{B}^{1/2}\left(\sum_{i=1}^{k} \lambda_i(1+\lambda_i)^{-1}\mathbf{v}_i\mathbf{v}^T_i\right)\mathbf{B}^{1/2}
 \end{align}
Therefore:
\begin{align}
\tilde{\mathbf{P}}'_\mathrm{opt}=\mathbf{B}-\tilde{\mathbf{F}}'_\mathrm{opt}
&=\mathbf{B}-\mathbf{B}^{1/2}\left(\sum_{i=1}^{k} \lambda_i(1+\lambda_i)^{-1}\mathbf{v}_i\mathbf{v}^T_i\right)\mathbf{B}^{1/2},
\end{align}
which proves the first equality of (\ref{eq:opt_dof_B_mean_FR}). Finally, using $\tilde{\mathbf{P}}'_\mathrm{opt}=\mathbf{B}-\mathbf{B}^{1/2}\left(\sum_{i=1}^{k} \lambda_i(1+\lambda_i)^{-1}\mathbf{v}_i\mathbf{v}^T_i\right)\mathbf{B}^{1/2}$ and $\mathbf{P}^a=\mathbf{B}-\mathbf{B}^{1/2}\left(\sum_{i=1}^{n} \lambda_i(1+\lambda_i)^{-1}\mathbf{v}_i\mathbf{v}^T_i\right)\mathbf{B}^{1/2}$ in (\ref{eq_P_Pa2}) we obtain the equality with the right-hand side of (\ref{eq:opt_dof_B_mean_FR}).
\end{proof}

\begin{rem}
 The approximations associated with the eigendecomposition of $\mathbf{Q}_\textrm{var}$, i.e., $\mathbf{x}^a_{\mathrm{var}}$ and $\mathbf{P}^a_{\mathrm{var}}$, both correspond to optimal total (non-normalized) error variance approximations. Indeed, $\mathbf{P}^a_{\mathrm{var}}$ is also the negative low-rank update to the prior error that best approximates the total error variance, i.e., $ |\mathrm{Tr}(\mathbf{P}^a_{\mathrm{var}}-\mathbf{P}^a)|= \min_{\mathbf{P} \in\hat{\mathcal{A}}_k} | \mathrm{Tr}(\mathbf{P}-\mathbf{P}^a)|$. 
   \end{rem}
\subsubsection{Interpretation and Application}
\label{interpretation_optimal}
\paragraph{Interpretation of the Norms}
The norms considered for the posterior error covariance approximations in Prop. \ref{thm:opt_approx_scov} are all based on the Frobenius norm, which is defined as $\|\mathbf{A}\|_F=\sqrt{\sum_i | a_{ij} |^2}$ (where $a_{ij}$ is the element of $\mathbf{A}$ associated with the $i$th row and $j$th column), or alternatively, as $\|\mathbf{A}\|_F=\sqrt{\sum_i \sigma_i^2}$ (where $\sigma_i$ represents the $i$th singular value of $\mathbf{A}$). Therefore, this norm accounts for all elements of the matrix in the approximation, or equivalently in the context of covariances matrices, accounts for variances in all directions (this is not the case of, e.g., the spectral norm $\| \mathbf{A} \|_S = \max_i \sigma_i$). Several norms in Prop. \ref{thm:opt_approx_scov} are weighted Frobenius norms. In the case of covariance matrix approximations such as Eq. (\ref{F_B_dof}) to (\ref{F_Pa_LRdof}), those norms can be interpreted as total approximation errors normalized by the variances of the principal modes associated with the weight matrices. Indeed, one has $\|\mathbf{A}\|_{F,\mathbf{Q}^{-1}}=\| \mathbf{V}^T\mathbf{D}^{-1/2} \mathbf{V}\mathbf{A}\mathbf{V}^T\mathbf{D}^{-1/2} \mathbf{V}\|_F$, where $\mathbf{Q}= \mathbf{V}^T\mathbf{D} \mathbf{V}$ is the eigendecomposition of the Hermitian matrix $\mathbf{Q}$. The matrix $\mathbf{D}^{-1/2} \mathbf{V}^T\mathbf{A}\mathbf{V}\mathbf{D}^{-1/2}$ can be interpreted as the covariance matrix $\mathbf{A}$ expressed in the basis of the principal components of $\mathbf{Q}$ (i.e., the eigenvectors $\mathbf{V}$), and whose variances are normalized by the variance of the principal modes (defined by the diagonal elements of $\mathbf{D}$). The left and right products by $\mathbf{V}^T$ and $\mathbf{V}$, respectively, transform the resulting matrix back into the original canonical basis. Therefore, the $\mathbf{B}^{-1}$-weighted Frobenius norm in Prop. \ref{thm:opt_approx_scov} measures the relative approximation error in the posterior error covariance matrix with respect to the prior errors, while the $(\mathbf{P}^a)^{-1}$-weighted Frobenius norm measures the relative approximation error in the posterior error covariance matrix with respect to the posterior errors.

A similar analysis can be performed to interpret the statistical approximation error in the posterior mean in Prop. \ref{thm:opt_approx_mean}. Indeed, one has:
\begin{align*}
 \mathbb{E} \| \mathbf{x}^a_\textrm{approx} -\mathbf{x}^a  \|_{\mathbf{Q}^{-1}}^2&=\mathbb{E}  \left [ (\mathbf{x}^a_\textrm{approx} -\mathbf{x}^a)^T\mathbf{Q}^{-1} (\mathbf{x}^a_\textrm{approx} -\mathbf{x}^a)\right ] \\ 
 &=\mathbb{E}  \left [ (\mathbf{V}(\mathbf{x}^a_\textrm{approx} -\mathbf{x}^a))^T\mathbf{D}^{-1} (\mathbf{V}(\mathbf{x}^a_\textrm{approx} -\mathbf{x}^a))\right ]\\  
  &= \mathbb{E}\left ( \mathrm{Tr}  \left [ (\mathbf{V}(\mathbf{x}^a_\textrm{approx} -\mathbf{x}^a))^T\mathbf{D}^{-1} (\mathbf{V}(\mathbf{x}^a_\textrm{approx} -\mathbf{x}^a)) \right ]  \right ) \\
 &= \mathbb{E}\left ( \mathrm{Tr}  \left [  (\mathbf{D}^{-1/2} \mathbf{V}(\mathbf{x}^a_\textrm{approx} -\mathbf{x}^a))(\mathbf{D}^{-1/2} \mathbf{V}(\mathbf{x}^a_\textrm{approx} -\mathbf{x}^a))^T \right ]  \right ) \\
 &= \mathbb{E}\left ( \mathrm{Tr}  \left [  ( \mathbf{V}^T\mathbf{D}^{-1/2} \mathbf{V}(\mathbf{x}^a_\textrm{approx} -\mathbf{x}^a))( \mathbf{V}^T\mathbf{D}^{-1/2} \mathbf{V}(\mathbf{x}^a_\textrm{approx} -\mathbf{x}^a))^T \right ]  \right ) \\
 &= \mathbb{E} \|  \mathbf{V}^T\mathbf{D}^{-1/2}\mathbf{V}(\mathbf{x}^a_\textrm{approx} -\mathbf{x}^a) \|_F
\end{align*}
Therefore, $\mathbb{E} \| \mathbf{x}^a_\textrm{approx} -\mathbf{x}^a  \|_{\mathbf{Q}^{-1}}^2$ measures the average total error in the posterior mean approximation normalized by the standard deviation of the error covariance $\mathbf{Q}$ in the principal mode directions. 

\paragraph{Adaptive Approximations}
An important consequence of Prop. \ref{thm:opt_approx_scov} and Prop. \ref{thm:opt_approx_mean} is that for a given rank $k$ of the approximations, an optimal strategy can be devised to minimize the normalized error in the posterior error covariance and posterior mean. Indeed, based on Eq. (\ref{F_B_dof})-(\ref{F_Pa_LRdof}) and Eq. (\ref{eq:opt_dof_B_mean})-(\ref{eq:opt_dof_B_mean_FR}), two regimes can be distinguished: if $\lambda_k<1$, then $\mathbf{P}^a_{\mathbf{H}_{\Pi_\mathrm{dof}}}$ and $\mathbf{x}^{a}_{\mathrm{FR}_\mathrm{dof}}$ should be chosen to minimize either the $\mathbf{B}$-normalized or the $\mathbf{P}^a$-normalized errors in $\mathbf{P}^a$ and $\mathbf{x}^a$, respectively; if $\lambda_k$ is significantly greater than $1$, then a sensible strategy would be to use the updates $ \mathbf{P}^{a}_{{\mathrm{\Pi}_\mathrm{dof}}}$ and $\mathbf{x}^{a}_{\mathrm{\Pi}_\mathrm{dof}}$ to approximate $\mathbf{P}^a$ and $\mathbf{x}^a$, respectively. Note that this adaptive update procedure could also be used in the context of non-linear Gauss-Newton methods to improve the convergence rate of the minimization by using an optimal update for the quadratic solution at each linearization step (see Section \ref{improv_4Dvar}).

\section{Practical Implementation}
\label{implementation}
\subsection{Remarks on Eigendecompositions}
\label{rem_svd}
The optimal approximations of the posterior error covariance matrix and the posterior mean described in Section \ref{link_lr_approx} rely on the eigendecompositions of the large $n\times n$ matrices $\mathbf{Q}_\mathrm{var}$ and $\mathbf{Q}_\mathrm{dof}$. In the high-dimensional framework considered in our study, those matrices cannot be formed explicitly, and therefore only matrix-free SVD algorithms can be employed (i.e., algorithms that use only matrix-vector products). In addition to its many theoretical benefits, including its interpretation as the solution of a projected Bayesian problem, the maximum-DOFS approximation associated with $\mathbf{Q}_\mathrm{dof}$ has important computational advantages through the simplification presented in Prop. \ref{simp_dof}.  Indeed, the SVD of $\widehat{\mathbf{H}_p}$ does not involve direct inversions of large matrices \footnote{Although the observation error covariance matrix $\mathbf{R}$ can be high-dimensional and non-diagonal,  in practice covariance matrices and their inverses are constructed implicitly (e.g., \citet{Singh11})}. Assuming the tangent-linear and adjoint model are available, the SVD of $\widehat{\textbf{H}_p}$ can be efficiently performed using matrix-free algorithms such as Lanczos or randomized SVD methods \citep{lanczos50,halko2011finding}. The singular vectors $\mathbf{v}_i$ computed from a truncated SVD of $\widehat{\textbf{H}_p}$ can be used to obtain the singular vectors $\mathbf{w}_i$ used to construct the approximated posterior mean $\mathbf{x}^a_{\mathbf{\Pi}_\mathrm{dof}}$ in Eq. (\ref{eq:update_max_dof_mean_new}), using the relation $\mathbf{w}_i=\mathbf{R}^{-1/2}\mathbf{H}\mathbf{B}^{1/2}\mathbf{v}_i$. Alternatively, a non-symmetric SVD algorithm such as that of Arnoldi \citep{golub2012matrix} or \citet{halko2011finding} (e.g., Alg. 5.1) can be used for direct computation of the SVD of the square-root of $\widehat{\textbf{H}_p}$, $\widehat{\textbf{H}_p}^{1/2}=\mathbf{R}^{-1/2}\mathbf{H}\mathbf{B}^{1/2}$.  As shown in Prop. \ref{lem_CG_optimal_equiv}, computation of the singular vectors $\{\mathbf{w}_i\}$ can be avoided using Eq. (\ref{pg_equiv}). However, in the context of approximated SVD, one has to keep in mind that the equality between Eq. (\ref{eq:update_max_dof_mean_new}) and Eq. (\ref{pg_equiv}) does not strictly hold. Moreover, the singular vectors $\{\mathbf{w}_i\}$ can be useful for information content analysis, as discussed in Section \ref{large_scale}.
The possibility to efficiently compute the optimal approximations associated with the eigendecomposition of $\mathbf{Q}_\mathrm{dof}$ when both the control and the observation spaces are high-dimensional is in contrast with the optimal approximations associated with $\mathbf{Q}_\mathrm{var}$ (i.e., $\mathbf{x}^a_{\mathrm{var}}$ and $\mathbf{P}^a_{\mathrm{var}}$), since algebraic simplifications similar to Prop.  \ref{simp_dof} do not exist for $\mathbf{Q}_\mathrm{var}$. In this case the $p\times p$ matrix of innovation statistics $\mathbf{HBH}^T+\mathbf{R}$ needs to be formed and inverted, and a matrix-free algorithm can then be used to compute the SVD of $\mathbf{Q}_\mathrm{var}$ (see Rem. \ref{rem_Qvar}).

\begin{rem}
\label{rem_Qvar}
 In the context of atmospheric source inversion, a typical case where the number of observations $p$ is usually small enough to allow direct inversion of $\mathbf{HBH}^T+\mathbf{R}$ and compute the SVD of $\mathbf{Q}_\mathrm{var}$ is the inversion of (possibly high-dimensional) sources from a sparse network of \textit{in situ} observations. In contrast, satellite-based inversions, for which $p$ can be very large, may not allow $\mathbf{HBH}^T$ to be explicitly formed, unless the dataset is reduced prior to the inversion (e.g., using an aggregation scheme).
 \end{rem}
\begin{rem}
 
In the case where the matrix of innovation statistics $\mathbf{HBH}^T+\mathbf{R}$ can be inverted explicitly, the full-dimensional analysis $\mathbf{x}_a$ in Eq. (\ref{eq2}) can be computed analytically even for control vectors with very large dimensions $n$ (as long as the tangent-linear and adjoint models are available). However, even in that case, computing optimal approximations (based on either $\mathbf{Q}_\mathrm{var}$ or $\mathbf{Q}_\mathrm{dof}$) is still useful in order to quantify the information content of the inversion, since the posterior error covariance and the model resolution matrices are both of dimension $n\times n$. To this aim, Eq. (\ref{eq:update_max_dof_cov_new}), (\ref{LR_dof_post_err}), (\ref{var_post_err}) and (\ref{eq:update_max_dof_mod_res_new}) can be used to efficiently extract subsets of elements (e.g., the entire diagonal) from the approximated posterior error covariance or model data resolution matrices.
\end{rem}

\subsection{Randomized Singular Value Decomposition}
\label{rand_svd_sec}
The most widely used matrix-free SVD algorithms are based on the Lanczos method, which computes the dominant eigenvectors and eigenvalues of an Hermitian matrix using Krylov subspace iterations \citep{golub2012matrix}. Recently, randomized SVD methods have attracted interest due to their proven accuracy and  high scalability for a large variety of problems. In this Section, we describe a randomized SVD algorithm, some of its theoretical properties, as well as a practical probabilistic error estimate for the approximation. The use of this randomized SVD method allows critical improvement in computational performance that we shall exploit in a numerical experiment in the context of large-scale atmospheric source inversions (see Section \ref{num_exp}).  
\subsubsection{Principle}
\label{rand_svd}
Randomization algorithms are powerful and modern tools to perform matrix decomposition. Some of their key advantages compared to standard Krylov subspace methods are their inherent stability and the possibility to massively parallelize the computations.  Recently, \citet{halko2011finding} presented an extensive analysis of the theoretical and computational properties of randomized methods to compute approximate matrix decomposition, including low-rank SVDs. The approach relies on the ability to efficiently approximate the range of a matrix $\mathbf{A}$ using a relatively small sample of image vectors $\{\mathbf{y}^{(i)}=\mathbf{A}\boldsymbol{\omega}^{(i)},\,i=1,..,k\}$, where the input vectors $\boldsymbol{\omega}^{(i)}$ are independent vectors with i.i.d.\ Gaussian entries. The quality of the approximation for the range of $\mathbf{A}$ can be objectively determined by evaluating the spectral norm of the difference between the original matrix and its projection onto the subspace defined by the random images, i.e., one wants: 
\begin{align}
\label{range_approx}
\| (\mathbf{Id}-\mathbf{QQ}^T)\mathbf{A}\|\le\epsilon,
\end{align}
where $\epsilon$ is some tolerance level, $\|.\|$ is the spectral norm, and $\mathbf{Q}$ is the matrix whose columns form an orthonormal basis of the subspace spanned by $\{\mathbf{y}^{(i)},\, i=1,...,k\}$. Once a satisfactory level of precision for the range has been reached, the SVD can be performed in the reduced space (defined by $\mathbf{Q}$) using dense matrix algebra, and the resulting singular vectors projected back onto the original space. Here we describe an algorithm especially adapted to the treatment of large Hermitian matrices involving expensive PDE solvers (in our case, the transport model $\mathbf{H}$ and its adjoint $\mathbf{H}^T$). The reader is referred to \citet{halko2011finding} for a complete review and explanation of those techniques in other contexts. The following algorithm allows one to compute an approximate truncated SVD of an Hermitian matrix using the randomized approach \citep{halko2011finding}. It uses only matrix-vector products, and is highly parallelizable. 
\begin{algorithm}
\caption{One-Pass Eigenvalue Decomposition}
 \label{alg_one_pass}
 Given $\mathbf{A}$ a $n\times n$ Hermitian matrix and $\boldsymbol{\Omega}$ a $n\times k$ random Gaussian test matrix ($k<<n$):
\begin{algorithmic}[1]
 \STATE Form the $n\times k$ matrix $\mathbf{Y}=\mathbf{A}\boldsymbol{\Omega}$
 \STATE Construct a $n\times k$ matrix $\mathbf{Q}$ whose columns form an orthonormal basis for the range of $\mathbf{Y}$, using, e.g., a QR or SVD factorization 
 \STATE Form the $k\times k$ matrix $\mathbf{B}\equiv(\mathbf{Q}^T\mathbf{Y})(\mathbf{Q}^T\mathbf{\Omega})^{-1}(\approx\mathbf{Q}^T\mathbf{A}\mathbf{Q})$
 \STATE Perform the SVD $\mathbf{B}=\mathbf{Z}\boldsymbol{\Lambda}\mathbf{Z}^T$
 \STATE Form the $n\times k$ matrix column of singular vectors $\mathbf{V}=\mathbf{QZ}$
 \STATE One has the approximation $\mathbf{A}\approx\mathbf{V}\boldsymbol{\Lambda}\mathbf{V}^T$
 \end{algorithmic}
\end{algorithm}

In Algorithm \ref{alg_one_pass}, the products $\mathbf{A}\boldsymbol{\Omega}[:,i]$ in step 1 generating the columns $\mathbf{Y}[:,i]$ can be all performed in parallel, which renders this method highly scalable. In our case, the matrix-vector product $\mathbf{A}\boldsymbol{\Omega}[:,i]$ amounts to integrating a PDE solver, which requires intensive computations (e.g., $\mathbf{A}=\mathbf{B}^{1/2}\mathbf{H}^T\mathbf{R}^{-1}\mathbf{H}\mathbf{B}^{1/2}$ for the maximum DOF approximation). Therefore, assuming $k<<n$, the cost of the one-pass algorithm is largely dominated by step 1, the remaining steps involving dense linear algebra in small dimension.

In practice, the number of input samples $k$ is increased until relation (\ref{range_approx}) is verified. The spectral norm of the estimation error is not directly evaluated (doing so would entail computing the SVD of a large $n\times n$ matrix, which is precisely what we want to approximate), but can be estimated a posteriori using an inexpensive probabilistic approach (see Section \ref{error_analysis}).

\subsubsection{Error Analysis}
\label{error_analysis}
The precision of the approximate SVD generated by Algorithm \ref{alg_one_pass} depends on the error in the estimation of the range in  (\ref{range_approx}), which itself depends (for a given number of samples) on the shape of the singular value spectra (i.e., fast or slow decay) and the dimension $n$. The following result demonstrates this dependence by providing a bound for the average spectral error \citep{halko2011finding}:
\begin{prop}[\textbf{Average Spectral Error}] 
\label{average_error}
Let $\mathbf{A}$ be a $n\times n$ matrix with eigenvalues $\sigma_1\le...\le\sigma_n$. Let $\mathbf{Q}$ be a $n\times(k+q)$ orthonormal basis that approximates the range of $\mathbf{A}$ ($k+ q << n$), generated from steps 1-2 of Algorithm \ref{alg_one_pass}. One has the following error bound:
\begin{align}
\mathbb{E}\left ( \|\mathbf{A}-\mathbf{QQ}^T\mathbf{A} \| \right) \le \left[ 1+\frac{4\sqrt{k+q}}{q-1}\sqrt{n} \right] \sigma_{k+1}
\end{align}
\end{prop}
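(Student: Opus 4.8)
The plan is to follow the two-stage template of \citet{halko2011finding}: first isolate a purely deterministic, algebraic bound that controls the projection error through how the random test matrix $\boldsymbol{\Omega}$ interacts with the dominant and trailing eigenspaces of $\mathbf{A}$, and then take expectations by exploiting the rotational invariance of the Gaussian ensemble. Throughout I adopt the ordering in which the $k$ largest eigenvalues occupy the leading block, so that $\sigma_{k+1}$ denotes the largest of the discarded eigenvalues.

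First I would diagonalize the Hermitian matrix as $\mathbf{A}=\mathbf{U}\mathbf{\Sigma}\mathbf{U}^T$ and partition $\mathbf{\Sigma}=\mathrm{diag}(\mathbf{\Sigma}_1,\mathbf{\Sigma}_2)$, $\mathbf{U}=[\mathbf{U}_1\ \mathbf{U}_2]$, with $\mathbf{U}_1$ collecting the top-$k$ eigenvectors. Writing $\mathbf{\Omega}_1=\mathbf{U}_1^T\boldsymbol{\Omega}$ and $\mathbf{\Omega}_2=\mathbf{U}_2^T\boldsymbol{\Omega}$ for the components of the test matrix in the two eigenspaces, the key deterministic estimate (the projection lemma of \citet{halko2011finding}) states that, whenever $\mathbf{\Omega}_1$ (of size $k\times(k+q)$) has full row rank,
\[
\|(\mathbf{Id}-\mathbf{QQ}^T)\mathbf{A}\|^2 \le \|\mathbf{\Sigma}_2\|^2 + \|\mathbf{\Sigma}_2\mathbf{\Omega}_2\mathbf{\Omega}_1^{+}\|^2 ,
\]
where $^{+}$ is the Moore--Penrose pseudoinverse. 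This is the structural crux: it splits the error into an irreducible term $\|\mathbf{\Sigma}_2\|=\sigma_{k+1}$ plus a term measuring the leakage of the trailing spectrum through the sampled subspace.

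Next I would pass to expectations. Using $\sqrt{a^2+b^2}\le a+b$ on the deterministic bound and taking expectations gives $\mathbb{E}\|(\mathbf{Id}-\mathbf{QQ}^T)\mathbf{A}\| \le \sigma_{k+1} + \mathbb{E}\|\mathbf{\Sigma}_2\mathbf{\Omega}_2\mathbf{\Omega}_1^{+}\|$. Because $\mathbf{U}$ is orthogonal and $\boldsymbol{\Omega}$ has i.i.d.\ Gaussian entries, $\mathbf{\Omega}_1$ and $\mathbf{\Omega}_2$ are independent Gaussian matrices. Conditioning on $\mathbf{\Omega}_1$ and applying the Gaussian norm estimate $\mathbb{E}\|\mathbf{S}\mathbf{G}\mathbf{T}\|\le \|\mathbf{S}\|\,\|\mathbf{T}\|_F + \|\mathbf{S}\|_F\,\|\mathbf{T}\|$ (with $\mathbf{S}=\mathbf{\Sigma}_2$, $\mathbf{G}=\mathbf{\Omega}_2$, $\mathbf{T}=\mathbf{\Omega}_1^{+}$) yields $\mathbb{E}\|\mathbf{\Sigma}_2\mathbf{\Omega}_2\mathbf{\Omega}_1^{+}\| \le \|\mathbf{\Sigma}_2\|\,\mathbb{E}\|\mathbf{\Omega}_1^{+}\|_F + \|\mathbf{\Sigma}_2\|_F\,\mathbb{E}\|\mathbf{\Omega}_1^{+}\|$.

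Finally I would insert the moment bounds for the pseudoinverse of a fat Gaussian matrix, $\mathbb{E}\|\mathbf{\Omega}_1^{+}\|_F^2 = k/(q-1)$ and $\mathbb{E}\|\mathbf{\Omega}_1^{+}\| \le e\sqrt{k+q}/q$, together with $\|\mathbf{\Sigma}_2\|=\sigma_{k+1}$ and the crude tail estimate $\|\mathbf{\Sigma}_2\|_F=\big(\sum_{j>k}\sigma_j^2\big)^{1/2}\le\sqrt{n}\,\sigma_{k+1}$ (using Jensen for $\mathbb{E}\|\mathbf{\Omega}_1^{+}\|_F$). Collecting terms gives a bound of the form $\big(1+\sqrt{k/(q-1)}+(e\sqrt{k+q}/q)\sqrt{n}\big)\sigma_{k+1}$, and a short simplification of constants (absorbing the lower-order terms under $q\ge 2$ and $k+q\ll n$) produces the stated factor $1+4\sqrt{k+q}\,\sqrt{n}/(q-1)$. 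The main obstacle is establishing the deterministic projection lemma cleanly: one must identify $\mathbf{QQ}^T$ as the orthogonal projector onto the range of $\mathbf{A}\boldsymbol{\Omega}$ and bound $(\mathbf{Id}-\mathbf{QQ}^T)\mathbf{A}$ against a convenient suboptimal reconstruction built from $\mathbf{\Omega}_1^{+}$; once that is in hand, the probabilistic steps are routine given the Gaussian moment lemmas.
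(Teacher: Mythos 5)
The paper does not actually prove this proposition; it is quoted directly from \citet{halko2011finding} (their Corollary 10.9), so the ``paper's proof'' is a citation. Your outline correctly reconstructs the argument behind that cited result --- the deterministic projection bound $\|(\mathbf{Id}-\mathbf{QQ}^T)\mathbf{A}\|^2 \le \|\mathbf{\Sigma}_2\|^2 + \|\mathbf{\Sigma}_2\mathbf{\Omega}_2\mathbf{\Omega}_1^{+}\|^2$, the Gaussian norm and pseudoinverse moment estimates, the tail bound $\|\mathbf{\Sigma}_2\|_F\le\sqrt{n}\,\sigma_{k+1}$, and the absorption of constants into $4\sqrt{k+q}/(q-1)$ --- and you also rightly fix the eigenvalue ordering, which the proposition as stated gets backwards (with $\sigma_1\le\dots\le\sigma_n$ the bound should involve the $(k+1)$th \emph{largest} eigenvalue, as in your convention).
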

In Prop. \ref{average_error}, $k$ is the targeted rank for the approximation, and $l$ is an oversampling parameter. Note that the smallest possible spectral error for a rank-$k$ approximation is $\sigma_{k+1}$ [ref]. This formula shows that the average spectral error lies within a small polynomial factor of the theoretical minimum. In particular, one also sees that for very large $n$ the bound is not significantly modified when $n$ is increased (factor $\sqrt{n}$). Moreover, increasing the oversampling parameter $q$ rapidly decreases the amplification factor $\frac{4\sqrt{k+q}}{q-1}$. As a result, in practice using an oversampling parameter $q\approx5$ yields very good results.

In order to efficiently compute an estimate of the spectral error (\ref{range_approx}), one can use the following result, which provides a probabilistic bound based on sample error estimates available for free \citep{halko2011finding}:
\begin{prop}[\textbf{(Cost-Free) Posterior Probabilistic Error Bound For Range Approximation}] 
\label{prob_error}
Let $\mathbf{A}$ be a $n\times n$ matrix. Let $\mathbf{Q}$ be an orthonormal basis that approximates the range of $\mathbf{A}$, generated from steps 1-2 of Algorithm \ref{alg_one_pass}, and a set of $l$ independent vectors with i.i.d. Gaussian entries $\{\boldsymbol{\omega}^{(i)},\, i=1,...,l\}$. One has the following error bound:
\begin{align}
\label{bound_prob}
 &\|\mathbf{A}-\mathbf{QQ}^T\mathbf{A} \| \le 10\sqrt{\frac{2}{\pi}} \max_{i=1,..,l}\|(\mathbf{A}-\mathbf{QQ}^T\mathbf{A})\boldsymbol{\omega}^{(i)} \|, \\ \nonumber
 &\text{with a probability }1-10^{-l} 
 \end{align}
\end{prop}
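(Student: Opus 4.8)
The plan is to reduce this operator-norm bound to an anti-concentration statement for one-dimensional Gaussians. Write $\mathbf{M} = \mathbf{A} - \mathbf{Q}\mathbf{Q}^T\mathbf{A}$ for the projection residual, so the quantity to be controlled is the spectral norm $\|\mathbf{M}\| = \sigma_1$, the largest \emph{singular} value of $\mathbf{M}$ (note that $\mathbf{M}$ need not be Hermitian even when $\mathbf{A}$ is, so the singular value decomposition, not the eigendecomposition, is the right tool). Let $\mathbf{u}_1,\mathbf{v}_1$ be the associated left and right singular vectors, so that $\mathbf{M}\mathbf{v}_1 = \sigma_1\mathbf{u}_1$. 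The first step is the elementary lower bound: for any vector $\boldsymbol{\omega}$, projecting $\mathbf{M}\boldsymbol{\omega}$ onto the unit vector $\mathbf{u}_1$ gives $\|\mathbf{M}\boldsymbol{\omega}\| \ge |\mathbf{u}_1^T\mathbf{M}\boldsymbol{\omega}| = \sigma_1\,|\mathbf{v}_1^T\boldsymbol{\omega}|$.

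Next I would exploit the rotational invariance of the Gaussian distribution. Since each $\boldsymbol{\omega}^{(i)}$ has i.i.d.\ standard normal entries and $\mathbf{v}_1$ is a fixed unit vector, the scalar $g_i \equiv \mathbf{v}_1^T\boldsymbol{\omega}^{(i)}$ is a standard normal variable, and the $g_i$ are independent across $i$. Combining with the previous step gives $\|\mathbf{M}\boldsymbol{\omega}^{(i)}\| \ge \sigma_1\,|g_i|$ for each $i$, hence $\max_i\|\mathbf{M}\boldsymbol{\omega}^{(i)}\| \ge \sigma_1\,\max_i|g_i|$. Consequently the claimed inequality can fail only on the event where $\max_i|g_i| < 1/C$ with $C = 10\sqrt{2/\pi}$, i.e.\ where every $|g_i|$ is that small.

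To finish I would bound the probability of this failure event. By independence it factors as $\prod_i P(|g_i| < 1/C)$, and the anti-concentration estimate $P(|g|<t) = \int_{-t}^{t}(2\pi)^{-1/2}e^{-x^2/2}\,dx \le t\sqrt{2/\pi}$ (obtained by bounding the density by its value at the origin) yields $P(|g_i|<1/C) \le (1/C)\sqrt{2/\pi} = 1/10$. Raising to the $l$th power gives a failure probability at most $10^{-l}$, which is exactly the complement of the stated confidence level, and the bound $\|\mathbf{M}\| \le 10\sqrt{2/\pi}\,\max_i\|\mathbf{M}\boldsymbol{\omega}^{(i)}\|$ holds on the remaining event.

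The only delicate point is orienting the inequality correctly: an a posteriori estimator requires a \emph{lower} bound on the random samples $\|\mathbf{M}\boldsymbol{\omega}^{(i)}\|$ in terms of $\|\mathbf{M}\|$, so the relevant direction is the top singular subspace rather than an averaged or upper estimate. Once one sees that projecting onto $\mathbf{u}_1$ isolates a single standard Gaussian coordinate $\mathbf{v}_1^T\boldsymbol{\omega}$, the constant is forced by requiring the per-sample anti-concentration probability to equal exactly $1/10$, which produces the clean $10^{-l}$ tail. Crucially, no information about the full spectrum of $\mathbf{M}$ enters the argument, which is precisely why the estimate is cost-free.
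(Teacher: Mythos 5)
Your proof is correct. The paper itself does not prove this proposition --- it simply cites \citet{halko2011finding}, where the statement appears as Lemma~4.1 (with general parameter $\alpha$ in place of $10$; the result goes back to Dixon's work on a posteriori norm estimation) --- and your argument is essentially the standard one from that reference: reduce $\|\mathbf{M}\|$ to the top singular pair via $\|\mathbf{M}\boldsymbol{\omega}\|\ge|\mathbf{u}_1^T\mathbf{M}\boldsymbol{\omega}|=\sigma_1|\mathbf{v}_1^T\boldsymbol{\omega}|$, observe that $\mathbf{v}_1^T\boldsymbol{\omega}^{(i)}$ are i.i.d.\ standard normals by rotational invariance, and kill the failure event with the anti-concentration bound $P(|g|<t)\le t\sqrt{2/\pi}$, which with $t=1/C$ and $C=10\sqrt{2/\pi}$ gives exactly $1/10$ per sample and $10^{-l}$ after taking the product. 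Your remark that the correct singular value decomposition (not an eigendecomposition) and the correct orientation of the inequality are the only delicate points is well taken.

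One assumption you use silently deserves a sentence: you treat $\mathbf{v}_1$ as a \emph{fixed} unit vector, which requires the test vectors $\boldsymbol{\omega}^{(i)}$ to be independent of $\mathbf{Q}$ (and hence of $\mathbf{M}$). Formally one conditions on $\mathbf{Q}$, runs your argument for the conditional law of the fresh $\boldsymbol{\omega}^{(i)}$, and integrates out; the conditional failure probability is at most $10^{-l}$ uniformly, so the unconditional one is too. This matters because the paper's ``cost-free'' implementation reuses samples generated at step~1 of Algorithm~\ref{alg_one_pass}; the independence is preserved only because the two error-estimation samples are explicitly withheld from the construction of $\mathbf{Q}$. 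A second, trivial point: the step ``failure implies $\max_i|g_i|<1/C$'' divides by $\sigma_1$, so the case $\sigma_1=0$ should be dispatched separately (the bound then holds with probability one). Neither point is a gap in substance.
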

Based on this result, one sees that using only two samples ($l=2$) to estimate the right-hand side of (\ref{bound_prob}) provides a bound on the spectral error with a probability 0.99. Note that samples of the form $\mathbf{A}\boldsymbol{\omega}^{(i)}$ are computed at step 1 of Alg. \ref{alg_one_pass} to construct $\mathbf{Q}$. Therefore, a practical implementation of the a posteriori error bound estimate would be to use two samples out of the total set of samples generated at step 1 of Alg. \ref{alg_one_pass} to derive the probabilistic bound in (\ref{bound_prob}), while the remaining samples are used to compute $\mathbf{Q}$. An adaptive range finder method using similar principles to build a matrix $\mathbf{Q}$ associated with a desired spectral error tolerance $\epsilon$ can also be found in \citep{halko2011finding} (see Alg. 4.2).   

\section{Numerical Illustrations}
\label{num_exp}
In this Section we first illustrate the theoretical results obtained in Section \ref{theory} using a small inverse problem ($n=300$). This enables exact computation of the SVD involved in the optimal approximations of the Bayesian problem, and direct evaluation of the performance of the associated posterior mean and posterior error covariance estimates against the true solutions. We then test the performance of the algorithm for a large-scale experiment using the randomized SVD method described in Section \ref{rand_svd}.

\subsection{Atmospheric Source Inversion Problem}
\label{inv_pb_test}
Our numerical experiments are carried out in the context of an atmospheric transport source inversion problem. The setup consists of an Observation Simulation System Experiment (OSSE) where pseudo-observations of methane columns (XCH$_4$) from a Short Wave Infrared (SWIR) instrument in low-earth orbit are generated and used to optimize randomly perturbed prior methane fluxes over North America. A nested domain with spatial resolution ($0.5^{\circ} \times  0.7^{\circ}$) is used and one scaling factor is optimized for each grid-cell for the month of July 2008, which corresponds to an initial control space of dimension $n=151\times121=18,271$. A uniform prior error standard deviation of 40\% is assumed for the CH$_4$ fluxes throughout the domain, with no spatial error correlations (diagonal $\mathbf{B}$ matrix), and the observational error standard deviations are uniformly set to 8 ppb, with no spatial or temporal correlations.  More information about the general configuration of this type of OSSE experiment can be found in \citet{bousserez2016constraints}. The atmospheric transport (forward model, $H$) is simulated using GEOS-Chem, which is an offline atmospheric transport model widely used in the atmospheric chemistry community. The model configuration we used is described in \citet{wecht2014spatially}. The adjoint of GEOS-Chem, also employed in our experiment, is described in \citet{Henze07} and has been extensively used in previous sensitivity and inverse modeling studies \citep{kopacz2009comparison,jiang2011quantifying,xu2013constraints,wells2015simulation}. Figure \ref{fig:map_prior} shows a map of the prior CH$_4$ emissions over North America. Emissions are geographically contrasting and highly variable. Since in our setup the prior error standard deviation is proportional to the prior emission magnitude, a similar high spatial variability is obtained for the prior error variances.

Note that the uniform diagonal $\mathbf{B}$ matrix used in our setup implies that the matrices $\mathbf{Q}_\text{dof}$ and $\mathbf{Q}_\text{var}$ as well as their associated approximations will coincide modulo a scalar multiplication (this is obviously not the case for a general $\mathbf{B}$ matrix). This simplified configuration allows a clear interpretation of the results in term of observational constraints (e.g., all posterior error correlations are due to the observations only), while demonstrating the theoretical properties and numerical efficiency of the optimal approximations. As discussed in Section \ref{rem_svd}, the approximations based on the solution of the maximum-DOFS projection have clear theoretical and practical advantages compared to the approximations derived from the eigendecomposition of  $\mathbf{Q}_\text{var}$. Therefore, contrasting the characteristics of these two types of approximations in the general case (i.e., non-diagonal $\mathbf{B}$) was not viewed as a priority for our analysis. 

\subsection{Convergence Analysis for a Small Problem}
\label{toy_prob}
The convergence properties of the optimal posterior mean and posterior error covariance approximations are investigated using a reduced version of the source inversion problem described in Section \ref{inv_pb_test}. In this experiment, the control vector dimension is reduced to $n=300$ by selecting the model grid-cells which correspond to the first 300 highest gradients of the 4D-Var cost function (\ref{eq3}) with respect to the emission scaling factors. With this setup, the Hessian of the cost function is explicitly calculated using the finite-difference method combined with adjoint model integrations. More specifically, the following formula is used to estimate each element of the Hessian matrix:
\begin{eqnarray}
\label{hessian_fd}
\widehat{\textbf{H}}_{i,j}\approx\frac{(\nabla J(\textbf{x}+\epsilon_i)-\nabla J(\textbf{x}))_j}{\epsilon} ,
\end{eqnarray}
where $\epsilon_i = (\epsilon \delta_{i,k})_k$, $\delta$ is the Kronecker delta, and $\epsilon$ is a small real number. For this experiment $\epsilon=0.01$, which corresponds to a 1\% perturbation of the CH$_4$ flux for a particular grid cell. Since $\widehat{\textbf{H}}$ is symmetric, this calculation requires $n+1$ gradient calculations, which corresponds to $301$ forward model integrations and $301$ adjoint model integrations for the reduced problem. The fact that these gradient calculations can be performed in parallel makes the computation efficient. The inverse of the Hessian matrix $\widehat{\textbf{H}}$ provides the posterior error covariance matrix $\mathbf{P}^a$, which is used to compute the exact posterior mean using formulas (\ref{post_update2}).
The prior-preconditioned Hessian $\widehat{\mathbf{H}}_p \equiv \mathbf{B}^{1/2}\mathbf{H}^T\mathbf{R}^{-1}\mathbf{H}\mathbf{B}^{1/2}$ is also computed explicitly from the Hessian finite-difference estimate (\ref{hessian_fd}) using: $\widehat{\textbf{H}}_p=\mathbf{B}^{-1/2}(\widehat{\textbf{H}}-\mathbf{B})\mathbf{B}^{-1/2}$.

Figure \ref{fig:toy_spectra} shows the singular value spectra of the prior-preconditioned Hessian of the reduced inverse problem. The spectra shows a fast decrease of the first 20 singular values (by an order of magnitude), followed by a slow decrease. The basis for the rank-$k$ maximum-DOFS projection is made of the first $k$ singular vectors of $\widehat{\textbf{H}}_p$, which are computed exactly here. Moreover, according to Corollary \ref{cor:dofs}, the DOFS of the inversion for a rank-$k$ optimal projection is equal to $\sum_{i=1}^k \lambda_i/(1+\lambda_i)$, where the $\{\lambda_i,\, i=1,...,k\}$ are the first $k$ singular values of $\widehat{\textbf{H}}_p$. The DOFS for our reduced inverse problem is $\sim$43. Figure \ref{fig:perc_dof} shows the $\mathbf{P}^a$-weighted error in the posterior error covariance matrix approximations $\mathbf{P}^a_{\mathbf{H}_{\Pi_\text{dof}}}$ and $\mathbf{P}^a_{\Pi_\text{dof}}$ defined by Eq. (\ref{F_Pa_dof}) and (\ref{F_Pa_LRdof}), respectively. For the sake of simplicity only the $\mathbf{P}^a$-weighted errors are considered here. As expected, for small ranks $k$ (here, for $k<3$) the low-rank approximation $\mathbf{P}^a_{\Pi_\text{dof}}$ is associated with a smaller error than the low-rank update approximation $\mathbf{P}^a_{\mathbf{H}_{\Pi_\text{dof}}}$. However, the approximation error associated with $\mathbf{P}^a_{\mathbf{H}_{\Pi_\text{dof}}}$ shows an exponential decrease and is about an order of magnitude smaller than the approximation error of $\mathbf{P}^a_{\Pi_\text{dof}}$ for $k>50$.  

Also shown on Fig. \ref{fig:perc_dof} is the relative error in the DOFS approximation for solution of the maximum-DOFS projection ($\mathbf{A}_{\Pi_\text{dof}}$), as well as the relative error in the total variance approximations $\mathrm{Tr}(\mathbf{P}^a_{\mathbf{H}_{\Pi_\text{dof}}})$ and $\mathrm{Tr}(\mathbf{P}^a_{\Pi_\text{dof}})$, as a function of the rank $k$ of the approximation. The results for the DOFS show that more than 80\% of the information content of the inversion is captured by the first 120 modes, with a fast decrease of the error for the first third of the spectra followed by a slower decrease. Interestingly, the low-rank update posterior error covariance approximation $\mathbf{P}^a_{\mathbf{H}_{\Pi_\text{dof}}}$ shows much better performances than the low-rank approximation $\mathbf{P}^a_{\Pi_\text{dof}}$, even for small values of the rank $k$. These results show that, for our experiment, the low-rank update approximation $\mathbf{P}^a_{\mathbf{H}_{\Pi_\text{dof}}}$ should be chosen when estimating posterior error variances. Overall, our numerical tests demonstrate the fast convergence of the low-rank update approximation $\mathbf{P}^a_{\mathbf{H}_{\Pi_\text{dof}}}$ for different information content metrics.

 In addition to analyzing the convergence globally, it may be of interest to analyze the local behavior of this approximation. Figures \ref{fig:map_var} and \ref{fig:scatterplot_var} illustrate the convergence of the approximated error variances for $\mathbf{P}^a_{\mathbf{H}_{\Pi_\text{dof}}}$ for all control vector elements. Figure \ref{fig:map_var} represents the spatial distribution of the true posterior error variances as well as the approximated posterior error variances for the ranks $k=40$, $k=100$, and $k=200$, while Fig. \ref{fig:scatterplot_var} shows the corresponding scatterplots and linear regression fits for each of those ranks. A very good accuracy of the approximated posterior error variances is observed for all ranks. This is further confirmed by the linear regression analysis, with a Pearson correlation coefficient of about 1 and an almost perfect regression line (1:1) for all ranks. From Fig. \ref{fig:scatterplot_var} it is evident that increasing the rank of the approximation above 40 does not significantly improve the results, which is consistent with the results obtained for the total error variance in Fig. \ref{fig:perc_dof}. 

Similarly to the posterior error covariance approximations, we now investigate the convergence properties of the posterior mean approximations described in Prop. \ref{thm:opt_approx_mean}. Again, for the sake of simplicity only the $\mathbf{P}^a$-weighted errors are considered here. Figure \ref{fig:bayes_risk} shows the expectancy (or average) of the total $\mathbf{P}^a$-weighted error in the approximated posterior mean (or Bayes risk) for the solution of the maximum-DOFS projection $\mathbf{x}^a_{\mathbf{\Pi}_\mathrm{dof}}$ and for the full-rank posterior mean approximation $\mathbf{x}^{a}_{\mathrm{FR}_\mathrm{dof}}$, as a function of the rank. The results for one single realization of the prior and the observations are also shown. The Bayes risk for each rank $k$ is calculated using Eq. (\ref{eq:opt_dof_Pa_mean}) and (\ref{eq:opt_dof_B_mean_FR}), while the $\mathbf{P}^a$-weighted posterior mean error for one single realization is calculated by explicitly computing the error for one particular instance of the prior and observation probability distributions. As expected from the theory, for small values of the rank ($k<10$) the error associated with the low-rank projection $\mathbf{x}^a_{\mathbf{\Pi}_\mathrm{dof}}$ is much smaller (by several order of magnitude) than the error associated with the full-rank approximation $\mathbf{x}^{a}_{\mathrm{FR}_\mathrm{dof}}$. However the full-rank approximation $\mathbf{x}^{a}_{\mathrm{FR}_\mathrm{dof}}$ becomes rapidly the most accurate (for $k>10$), with an exponential decay of the error. The results for one realization of the prior and observation statistics also suggest very small deviations of the $\mathbf{P}^a$-weighted error from its mean behavior, which is consistent with previous findings in \citet{spantini2015optimal}.

Similar to the posterior error analysis, it may be of interest to analyze locally the convergence of the approximated posterior mean, i.e., in our case, the spatial distribution of the posterior flux increments. Figure \ref{fig:lr_incr_toy} shows the spatial distribution of the true and approximated posterior flux increments for the solution of the maximum-DOFS projection $\mathbf{x}^a_{\mathbf{\Pi}_\mathrm{dof}}$ and the full-rank posterior increment approximation $\mathbf{x}^{a}_{\mathrm{FR}_\mathrm{dof}}$, for the ranks $k=5$ and $k=100$. The contrast between the low-rank nature of the maximum-DOFS solution $\mathbf{x}^a_{\mathbf{\Pi}_\mathrm{dof}}$ and the full-rank nature of $\mathbf{x}^{a}_{\mathrm{FR}_\mathrm{dof}}$ is evident for $k=5$. As shown in Fig. \ref{fig:bayes_risk}, for a rank $k=5$ the maximum-DOFS solution $\mathbf{x}^a_{\mathbf{\Pi}_\mathrm{dof}}$ is associated with a smaller $\mathbf{P}^a$-normalized error (Bayes risk) than $\mathbf{x}^{a}_{\mathrm{FR}_\mathrm{dof}}$. Although the full-rank approximation $\mathbf{x}^{a}_{\mathrm{FR}_\mathrm{dof}}$ better captures the true posterior increment distribution over large areas (e.g., Canada) compared to $\mathbf{x}^a_{\mathbf{\Pi}_\mathrm{dof}}$, those areas are associated with large posterior errors (see Figure \ref{fig:map_var}), and thus are attributed less weight in the $\mathbf{P}^a$-normalized posterior mean score than regions over the east of the US domain associated with small posterior errors. The better performance of the low-rank maximum-DOFS solution $\mathbf{x}^a_{\mathbf{\Pi}_\mathrm{dof}}$ over those regions with small posterior errors explains its overall better score. Consistent with our previous analysis (see Fig. \ref{fig:bayes_risk}), for $k=100$, the full-rank posterior increment approximation $\mathbf{x}^{a}_{\mathrm{FR}_\mathrm{dof}}$ better reproduces the spatial distribution of the true posterior increment across the whole domain, with now similar performances as the maximum-DOFS posterior increment $\mathbf{x}^a_{\mathbf{\Pi}_\mathrm{dof}}$ over regions associated with small posterior errors (see, e.g., the eastern US).

\subsection{Performance for a Large-Scale Experiment}
\label{large_scale}
In this Section we illustrate the efficiency of combining the optimal approximation methods with the randomized SVD algorithm by applying this approach to the full-dimensional source inversion problem defined in Section \ref{inv_pb_test}. Since the dimension of the control vector is now $n=151\times121=18,271$, explicitly forming the prior-preconditioned Hessian matrix and computing its SVD using dense linear algebra is not practical. Therefore, we use the One-Pass SVD algorithm described in Alg. \ref{alg_one_pass} to compute the eigenvectors and eigenvalues of $\widehat{\textbf{H}}_p$. The computation can be performed in parallel, since the $k$ matrix-vector products of $\widehat{\textbf{H}}_p$ with the columns of the $n\times k$ random Gaussian test matrix $\boldsymbol{\Omega}$ can be performed independently. In our context, the computational cost of the method is largely dominated by the Hessian matrix-vector products used to build the basis for the reduced space $\mathbf{Q}$ (see step 1 of Alg. \ref{alg_one_pass}), since each of them amounts to integrating one tangent linear and adjoint model. The shape of the approximated eigenvalue spectra of $\widehat{\textbf{H}}_p$, not shown here, is comparable to the one obtained with the small problem of Section \ref{toy_prob}, with an exponential decay followed by a long flat tail, typical of severely underconstrained inverse problems.

Figure \ref{fig:prob_err} shows the probabilistic spectral error bound calculated using Prop. \ref{prob_error}, as a function of the number of samples (i.e., the number of columns $k$ of $\boldsymbol{\Omega}$) used in the randomized SVD estimate. A fast decrease of the spectral error bound is observed for the first 100 samples, followed by a smaller decrease between 100 and 400 samples. It appears clearly that increasing the number of samples beyond 200 does not significantly reduce the spectral error bound ($\sim$3), which suggests a reduced basis of 200 vectors would provide a reasonably good approximation of the range of  $\widehat{\textbf{H}}_p$ for our problem. Here we used those 400 samples to form an orthonormal basis $\mathbf{Q}$ of the subspace in which the low-rank SVD was performed. We note that no significant differences were found in the approximated posterior error covariances or the approximated posterior mean updates between computations using 200 or 400 samples.

 The analytical expressions for the posterior error covariance approximations, $\mathbf{P}^a_{\mathbf{H}_{\Pi_\mathrm{dof}}}$ (Eq. (\ref{LR_dof_post_err})) and $\mathbf{P}^a_{\Pi_\mathrm{dof}}$ (Eq. (\ref{pa_pi_dof})), and for the maximum-DOFS model resolution matrix approximation, $\mathbf{A}_{\Pi_\mathrm{dof}}$ (Eq. (\ref{A_pi_dof})), can be used to efficiently compute any matrix-vector product involving those matrices, and in particular to extract subsets of their elements. As an example, Fig. \ref{fig:diag_avk} shows the diagonal of the model resolution matrix $\mathbf{A}_{\Pi_\text{dof}}$ of the rank-$400$ maximum-DOFs projection, which represents the observational constraints for each flux estimate. More specifically, each value on the diagonal quantifies the relative contribution of the observations to the total information content, with respect to the prior information. As discussed previously, it also corresponds to the sensitivity of the posterior mean flux to its true value (see Eq.(\ref{eq6})). The sum of the diagonal elements of $\mathbf{A}_{\Pi_\mathrm{dof}}$ is the DOFS for the projected problem, which is equal to 46. Therefore, 46 independent pieces of information (i.e., flux modes) can be obtained from the observational constraints. As explained in Section \ref{opt_proj}, the eigenvectors of the model resolution matrix represent precisely the modes that are independently constrained by the observations (with respect to the prior information), and are given by $\{\mathbf{B}^{1/2}\mathbf{v}_i,\, i=1,...,k\}$, where the $\{\mathbf{v}_i,\, i=1,...,k\}$ are the first $k$ eigenvectors of the prior-preconditioned Hessian $\widehat{\mathbf{H}}_p$. Figure \ref{fig:princ_modes} shows the 1st, 2nd and 5th eigenvectors of $\widehat{\mathbf{H}}_p$, which in the case of a uniform diagonal $\mathbf{B}$ matrix are therefore scalar multiples of the eigenvectors of the model resolution matrix $\mathbf{A}_{\Pi_\mathrm{dof}}$. Additionally, the corresponding modes in observation space, that is, the left singular vectors $\{\mathbf{w}_i,\, i=1,...,k\}$ of the square-root of $\widehat{\textbf{H}}_p$, are also shown and were computed by propagating each singular vectors $\mathbf{v}_i$ using the relation $\mathbf{w}_i=\mathbf{R}^{-1/2}\mathbf{H}^T\mathbf{B}^{1/2}\mathbf{v}_i$ (see Section \ref{rem_svd}). As shown, the relative contribution of the observations to the information content of each posterior mode is greater than 80\% for all three modes. The three modes correspond to clearly distinct geographical patterns.  The results show that the satellite observations allow one to constrain CH$_4$ emissions at high (almost grid-scale) spatial resolution over the Toronto (1st mode) and (to a lesser extent) the Los Angeles areas, but that constraints between the New York and Appalachian regions tend to be correlated. These figures illustrate the potential of the maximum-DOFS low-rank projection problem as a robust framework to objectively characterize the information content of an inversion.

 In addition to providing useful tools to analyze fundamental properties of the inverse problem, the joint use of the optimal approximations with the randomized SVD approach provides a fast method to compute the posterior solution. The efficiency of the approach can be illustrated by comparing the number of iterations required for convergence of a standard iterative minimization routine with the number of samples needed for the randomized SVD to provide similar results. Figure \ref{fig:comp_eff} shows the posterior scaling factor increments obtained from a BFGS minimization with 40 iterations and from our adaptive posterior mean approximation method using randomized SVD computations with 200 samples. In this case, since $\lambda_{200}<1$, the approximated posterior mean chosen is $\mathbf{x}^{a}_{\mathrm{FR}_\mathrm{dof}}$ (see Section \ref{interpretation_optimal}). The results show that a parallel implementation of the randomized SVD approach using only 200 samples provides posterior flux increments comparable to 40 iterations of the BFGS algorithm over most areas (with differences rarely exceeding $0.10$). The moderate number of samples for this test allowed us to run all independent simulations in the randomized SVD algorithm simultaneously, resulting in a wall time computation of $\sim$72 mins. Comparatively, the BFGS minimization algorithm required a wall time 40 times longer than the randomized SVD, with a total of 48 hours. These results clearly illustrate the benefit of using the randomized SVD approach to drastically reduce the computational cost of large-scale inverse problems.
 
  As discussed in \citet{halko2011finding}, the efficiency of randomized range approximation methods is driven by the rate of decrease of the singular value spectra, and is therefore problem-dependent. As suggested by Thm. \ref{average_error}, a sharp decrease of the singular values is associated with a more accurate estimate of the range of the matrix. We note that many inverse problems in geophysics, including atmospheric source inversion and weather data assimilation problems, generally exhibit this desirable behavior. However, for inverse problems whose singular values decay slowly, randomized power-iteration methods can still be used, which provides much more accurate results while mitigating the cost of the SVD compared to standard Krylov subspace methods. Their efficiency stems from exploiting parallel implementation of randomized estimates combined with a small number of power-iterations \citep{halko2011finding}.
    
\section{Link With Methods in Data Assimilation}
\label{link_da}
In this Section we discuss some interesting links between the theory developed in this paper and methods used in operational data assimilation centers. In Section \ref{sec:incr_4dvar}, we recall the incremental 4D-Var algorithm and its different square-root formulations, and explain how it is closely related to the maximum-DOFS low-rank projection. An overlooked issue related to the representativeness error when singular square-roots are used is also discussed. In Section \ref{improv_4Dvar}, we propose an improved implementation of the incremental 4D-Var method, combining the optimality results of Section \ref{link_lr_approx} with the randomized SVD technique described in Section \ref{rand_svd_sec}.

\subsection{Incremental 4D-Var}
\label{sec:incr_4dvar}
\subsubsection{Principle}
Most large-scale inverse problems, such as those encountered in atmospheric data assimilation, are solved by minimizing the least-squares cost function (\ref{eq3}). Operational DA centers often use the incremental 4D-Var technique (e.g., \citet{courtier94}), which consists of a sequence of quadratic CG minimizations based on linearizations of the forward model. Each quadratic minimization is called an inner-loop, and is often performed using  simplified tangent-linear and adjoint models. After each quadratic minimization, the updated posterior mean is propagated through the non-linear model $H$ and compared to the observations. This step, called the outer-loop, is repeated until a convergence criterion is reached. 

In its original form, the linear (or quadratic) least-squares problem can be severely ill-conditioned, i.e., the gap between the eigenvalues of the Hessian of the cost function can be very large. This can prevent fast convergence of the CG algorithm. One widely used remedy is preconditioning, which consists of solving the least-squares problem in a basis where the eigenvalues of the Hessian of the cost function are as tightly clustered as possible. In the ideal case, where the Hessian is the identity matrix, the convergence is obtained in one iteration, which is why the matrix of change of basis (or preconditioner) should be constructed so as to best approximate a square-root of the Hessian. In practice, only limited prior knowledge of the Hessian is available (note that knowing this matrix perfectly amounts to solving the least-squares problem). Since the Hessian can be written as a positive low-rank update to the inverse prior error covariance matrix ($\nabla ^2 J(\mathbf{x}^a)=\mathbf{B}^{-1}+\mathbf{H}^T\mathbf{R}^{-1}\mathbf{H}$), a common approach is to use the square-root of $\mathbf{B}$ as a preconditioner, which one shall note $\mathbf{L}$, such as $\mathbf{B}=\mathbf{L}\mathbf{L}^{T}$. This change of variable leads to the following least-squares problem for each inner-loop of the incremental 4D-Var algorithm:
\begin{eqnarray}
 \label{incr_4dvar}
&\min_\mathbf{z} \tilde{J}(\mathbf{z}) =\frac{1}{2}(\mathbf{d}-\mathbf{HLz})^T\mathbf{R}^{-1}(\mathbf{d}-\mathbf{HLz})+\frac{1}{2}\mathbf{z}^T\mathbf{z} \\
 &\mathbf{B}=\mathbf{LL}^T, \nonumber \\
 & \mathbf{x}=\mathbf{L}\mathbf{z},\nonumber
\end{eqnarray}
where $\mathbf{x}$ is the increment and $\mathbf{d}$ the innovation.
\subsubsection{Square-Root Formulations}
\label{sq_formulations}
There are two main types of square-root formulations used in current 4D-Var DA systems.
The Control Variable Transform (CVT) consists of modeling the prior error covariance matrix $\mathbf{B}$ implicitly, by projecting the control vector onto a basis where its elements are uncorrelated (i.e., their covariance matrix is the identity matrix) \citep{Bannister08}. In this approach, the change of basis is constructed using operators that impose, e.g., dynamical balances, or filtering out modes of variability that one does not wish to include in the posterior update. In addition to allowing an implicit representation of a matrix that would otherwise be too large to store in computer memory ($n\sim10^8$ in operational NWP), this technique provides an efficient preconditioning of the variational minimization\citep{courtier94}.
With the CVT technique, the product of all the operators that decorrelate the control variables forms the square-root $\mathbf{L}$. 

The other approach to the square-root formulation is the ensemble-based method. An ensemble-based square-root consists of modeling the matrix $\mathbf{B}$ using an ensemble of forward model perturbation trajectories centered around their mean, i.e.:
\begin{eqnarray}
\mathbf{B}_{\text{ens}}=\textbf{W} \textbf{W}^T, \qquad \textbf{W}=\frac{1}{\sqrt{K-1}}(\textbf{w}_1-\overline{\textbf{w}},...,\textbf{w}_K-\overline{\textbf{w}}), 
\end{eqnarray}
where $(\textbf{w}_1,\textbf{w}_2,...,\textbf{w}_K)$ represents an ensemble of $K$ perturbed states and $\overline{\textbf{w}}$ the mean of the ensemble. This approach is used, e.g., in ensemble-based DA methods such as the Ensemble Adjustment Kalman filter (EAKF) \citep{Anderson01}, or the Ensemble-Variational (EnVar) algorithm \citep{lorenc2003potential}. Localization techniques are usually required to mitigate undersampling noise in the approximated error variances and error correlations  (e.g., \citet{menetrier2015linear}).
 In practice, it may be desirable to implement a hybrid formulation, in which $\textbf{B}$ is a weighted average of a static full-rank matrix $\textbf{B}_{\text{stat}}$ that represents, e.g., a climatology, and an ensemble-based flow-dependent matrix $\textbf{B}_{\text{ens}}$, so that $\mathbf{B}$ is eventually modeled as:
\begin{eqnarray}
\label{Schur_product}
\mathbf{B}_\mathrm{hyb} &=&(1-\beta) \textbf{B}_{\text{stat}}+ \beta \textbf{B}_{\text{ens}} \circ \mathbf{C},
\end{eqnarray}
where $\mathbf{C}$ is made of compactly supported and space-limited covariance functions \citep{gaspari1999construction}, $\circ$ represents the Schur product between two matrices, and $\beta$ is a scalar verifying $0<\beta<1$. 

The matrix $\mathbf{B}_\mathrm{hyb}$ can be defined only at the initial time, $t_0$, of the variational window, which is the En4DVar method, or throughout the assimilation window (4D matrix), which is the 4DEnVar (or EnVar) approach \citep{desroziers20144denvar,lorenc2003potential}. Based on this very general square-root formulation of incremental 4D-Var, we will now establish some theoretical links between optimizations performed in current operational DA systems and the optimal low-rank approximation approaches presented in this study.

\subsubsection{Preconditioned Conjugate-Gradient as an Optimal Low-Rank Projection}
\label{cg_optimal}
The following proposition establishes the theoretical equivalence (i.e., modulo approximation errors) between the posterior mean of the maximum-DOFS rank-$k$ projection and the solution obtained after $k$ iteration of the preconditioned CG algorithm. It stems from the close relationship between the CG algorithm and the Lanczos eigenvalue decomposition \citep{meurant2006lanczos}. 
\begin{prop}[\textbf{Optimality of Preconditioned Conjugate-Gradient Minimizations}]
\label{prop:eq_cg}
Let us consider the $\mathbf{L}$-preconditioned least-squares minimization problem (\ref{incr_4dvar}), where $\mathbf{B}=\mathbf{LL}^T$ and $\mathrm{rank}(\mathbf{L})=\mathrm{rank}(\mathbf{B})=n$. One notes $\mathbf{x}^a_{\mathbf{\Pi}_\mathrm{dof}}$ the posterior mean of the rank-$k$ maximum-DOFS projection (\ref{eq:update_max_dof_mean_new}) and $\mathbf{z}_k^a$ the solution obtained after $k$ (non-converged) iterations of a conjugate-gradient minimization starting from  $\mathbf{z}_0^a=0$.
One has:
\begin{align}
\label{eq_gc_max_dof}
\mathbf{L}\mathbf{z}_k^a=\mathbf{x}^a_{\mathbf{\Pi}_\mathrm{dof}}
\end{align}
\begin{proof}
We first rewrite the cost function in (\ref{incr_4dvar}):
\begin{align*}
 \tilde{J}(\mathbf{z}) &=\frac{1}{2}(\mathbf{d}-\mathbf{HLz})^T\mathbf{R}^{-1}(\mathbf{d}-\mathbf{HLz})+\frac{1}{2}\mathbf{z}^T\mathbf{z} \\
 &=  \frac{1}{2} \left [ \mathbf{z}^T \left (  \mathbf{L}^T\mathbf{H}^T\mathbf{R}^{-1}\mathbf{HL}+\mathbf{Id} \right ) \mathbf{z} -\mathbf{z}^T  \mathbf{L}^T\mathbf{H}^T\mathbf{R}^{-1} \mathbf{d}-\mathbf{d}^T\mathbf{R}^{-1}\mathbf{HL} \mathbf{z} +\mathbf{d}^T\mathbf{R}^{-1}\mathbf{d}  \right ] \\
 &=  \frac{1}{2} \left [   \mathbf{z}^T\mathbf{A} \mathbf{z} -\mathbf{k}^T \mathbf{z} -\mathbf{z}^T\mathbf{k} \right ] + \textrm{constant},
\end{align*}
with $\mathbf{A}= \mathbf{L}^T\mathbf{H}^T\mathbf{R}^{-1}\mathbf{HL}+\mathbf{Id} $ and $\mathbf{b}= \mathbf{L}^T\mathbf{H}^T\mathbf{R}^{-1} \mathbf{d}$.\newline
Minimizing $ \tilde{J}(\mathbf{z}) $ is therefore equivalent to solving:
\begin{align*}
\mathbf{A}\mathbf{z} = \mathbf{b}
\end{align*}
Starting with $\mathbf{z}_1=\mathbf{0}$, the CG algorithm at iteration $k$ produces (e.g., \citet{golub2012matrix}):
\begin{align}
\label{cg_krylov}
\mathbf{z}_k^a=\textrm{argmin}_{\mathbf{z} \in \mathcal{K}_k(\mathbf{A},\mathbf{z}_1)}  \| \mathbf{A}\mathbf{z}-\mathbf{b}   \|^2,
\end{align}
 where $\mathcal{K}_k(\mathbf{A},\mathbf{z}_1)$ is the $k$-dimensional Krylov subspace associated with $\mathbf{A}$ and initialized with the vector $\mathbf{z}_1$, i.e., $\mathcal{K}_k(\mathbf{A},\mathbf{z}_1)=\{ \mathbf{z}_1, \mathbf{A} \mathbf{z}_1,...,  \mathbf{A}^k \mathbf{z}_1\} $.  One can rewrite (\ref{cg_krylov}):
 \begin{align}
 \mathbf{z}_k^a=\textrm{argmin}_{\mathbf{z}}  \| \mathbf{A}\mathbf{Q}\mathbf{Q}^T\mathbf{z}-\mathbf{b}   \|^2,
\end{align}
where $\mathbf{Q}$ is a matrix whose columns form an orthonormal basis of $\mathcal{K}_k(\mathbf{A},\mathbf{z}_1)$. Noting that $\mathcal{K}_k(\mathbf{A},\mathbf{p}_1) \approx \mathrm{Im}(\{\mathbf{v}_i,\, i=1,...,k \})$, where $\mathbf{v}_i$ represents the $i$-th eigenvector of $\widehat{\mathbf{H}_p}$, which is a basic property of the CG algorithm and is related to its link with the Lanczos method  \citep{golub2012matrix}. This is equivalent to assuming $\mathbf{Q} \mathbf{Q}^T \approx \mathbf{V}_k\mathbf{V}_k^T$. We therefore obtain the least-squares solution:
\begin{align*}
 \mathbf{z}_k^a&=\left [ \mathbf{A}\mathbf{Q} \mathbf{Q}^T \right ]^{+}\mathbf{b}  \\
&=\left [ \mathbf{A}\mathbf{V}_k \mathbf{V}_k^T \right ]^{+}\mathbf{b}  \\
 &=  \mathbf{L} \left [ \sum_{i=0}^{k} \mathbf{v}_i\mathbf{v}^T_i (1-\lambda_i(1+\lambda_i)^{-1}) \right ] \mathbf{V}\mathbf{\Lambda}^{1/2}\mathbf{W}\mathbf{R}^{-1/2} \mathbf{d} \\
&=  \mathbf{L}\left [ \sum_{i=1}^{k} \lambda_i^{1/2}(1+\lambda_i)^{-1}\mathbf{v}_i\mathbf{w}^T_i \right ]\mathbf{R}^{-1/2}\mathbf{d}   \\
&= \mathbf{x}^a_{\mathbf{\Pi}_\mathrm{dof}},
\end{align*}
where we used $\mathbf{A}=\mathbf{V}(\mathbf{Id}-\mathbf{\Lambda}^{1/2}(\mathbf{Id}+\mathbf{\Lambda})^{-1})\mathbf{V}^T$ and $\mathbf{b}=\mathbf{V}\mathbf{\Lambda}^{1/2}\mathbf{W}\mathbf{R}^{-1/2} \mathbf{d}$. 

\end{proof}
\end{prop}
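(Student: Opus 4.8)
The plan is to reduce the preconditioned minimization (\ref{incr_4dvar}) to a single linear system and then exploit the standard correspondence between conjugate-gradient iterates and Krylov subspaces. First I would expand the quadratic cost $\tilde{J}(\mathbf{z})$, collecting terms so that its unique minimizer solves the normal equations $\mathbf{A}\mathbf{z}=\mathbf{b}$, with $\mathbf{A}=\mathbf{L}^T\mathbf{H}^T\mathbf{R}^{-1}\mathbf{H}\mathbf{L}+\mathbf{Id}$ and $\mathbf{b}=\mathbf{L}^T\mathbf{H}^T\mathbf{R}^{-1}\mathbf{d}$. Because $\mathbf{B}=\mathbf{L}\mathbf{L}^T$, the matrix $\mathbf{A}$ is orthogonally similar to $\widehat{\mathbf{H}_p}+\mathbf{Id}$, so it shares the spectrum $\{1+\lambda_i\}$ and (up to the square-root convention) the eigenvectors $\{\mathbf{v}_i\}$ of the prior-preconditioned Hessian of Prop.~\ref{simp_dof}.

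The second step is to recall that, started from $\mathbf{z}_0^a=0$, the $k$-th CG iterate is exactly the minimizer of the associated quadratic over the Krylov subspace $\mathcal{K}_k(\mathbf{A},\mathbf{b})$, so that $\mathbf{z}_k^a=[\mathbf{A}\mathbf{Q}\mathbf{Q}^T]^{+}\mathbf{b}$ for any orthonormal basis $\mathbf{Q}$ of $\mathcal{K}_k$. I would then invoke the equivalence between the CG recursion and the Lanczos tridiagonalization of $\mathbf{A}$: since the dominant Ritz pairs converge first, the Krylov subspace is, modulo approximation error, spanned by the $k$ leading eigenvectors, i.e. $\mathbf{Q}\mathbf{Q}^T\approx\mathbf{V}_k\mathbf{V}_k^T$.

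The final step is purely algebraic. Substituting $\mathbf{Q}\mathbf{Q}^T\approx\mathbf{V}_k\mathbf{V}_k^T$, the eigendecomposition of $\mathbf{A}$, and the SVD $\mathbf{B}^{1/2}\mathbf{H}^T\mathbf{R}^{-1/2}=\mathbf{V}^T\mathbf{\Lambda}^{1/2}\mathbf{W}$ into the pseudo-inverse expression, and using the orthogonality of $\mathbf{V}$ and $\mathbf{W}$, I would collapse $\mathbf{z}_k^a$ to a short sum over the first $k$ modes. Left-multiplying by $\mathbf{L}$ and identifying $\mathbf{w}_i=\mathbf{R}^{-1/2}\mathbf{H}\mathbf{B}^{1/2}\mathbf{v}_i$ then recovers (\ref{eq:update_max_dof_mean_new}), giving $\mathbf{L}\mathbf{z}_k^a=\mathbf{x}^a_{\mathbf{\Pi}_\mathrm{dof}}$, which is (\ref{eq_gc_max_dof}).

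I expect the main obstacle to be the second step: the identification $\mathcal{K}_k(\mathbf{A},\mathbf{b})\approx\mathrm{Im}(\mathbf{V}_k)$ is only approximate and is precisely where the qualifier \emph{modulo approximation errors} in the statement enters. Making it rigorous would require the Ritz-value convergence theory for the Lanczos process together with a spectral-gap assumption ensuring the leading eigenvectors emerge before the subdominant ones; absent such a gap the $k$-dimensional Krylov subspace need not align with the leading eigenspace, and the equality would hold only in an asymptotic sense. Everything else — the quadratic reduction and the closing computation — is routine linear algebra.
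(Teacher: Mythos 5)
Your proposal follows essentially the same route as the paper's proof: reduce the preconditioned quadratic to the system $\mathbf{A}\mathbf{z}=\mathbf{b}$, invoke the CG--Krylov correspondence, replace $\mathbf{Q}\mathbf{Q}^T$ by $\mathbf{V}_k\mathbf{V}_k^T$ via the Lanczos connection, and close with the SVD algebra that recovers (\ref{eq:update_max_dof_mean_new}). You also correctly identify the only non-rigorous step --- the identification $\mathcal{K}_k(\mathbf{A},\mathbf{b})\approx\mathrm{Im}(\mathbf{V}_k)$ --- which is exactly the approximation the paper itself makes.
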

From Proposition \ref{prop:eq_cg}, we see that the solution of the preconditioned CG minimization after $k$ non-converging iterations is also the posterior mean of an optimal projection, in the sense of Prop. \ref{thm:opt_approx_mean}. Moreover, it is also the posterior mean solution of the projected Bayesian problem with maximum DOF. This latter fact is useful in term of interpretation, since it means that the non-converged solution (for the initial problem) is still the exact solution of a low-rank Bayesian problem, with an information content explicitly defined by Eq. (\ref{eq:update_max_dof_mod_res_new}). Preconditioning the CG algorithm with a square-root of the prior covariance $\mathbf{B}$ was adopted for practical reasons in operational DA systems, that is, to improve the convergence of the minimization and to efficiently represent (implicitly) a high-dimensional $\mathbf{B}$ matrix. Interestingly, our result provides also a theoretical justification for preconditioning the CG algorithm with a square-root of $\mathbf{B}$ in quadratic 4D-Var minimizations. We also note that potentially better approximations of the posterior mean (i.e., $\mathbf{x}^{a}_{\mathrm{FR}_\mathrm{dof}}$, see Fig. \ref{fig:bayes_risk}) are available.

\subsubsection{Singular Square-Root Formulations and Representativeness Error}
In the context of Gaussian pdf assumptions and maximum likelihood estimation, the use of a non-singular $\mathbf{B}$ matrix is required, as evident from the presence of $\mathbf{B}^{-1}$ in the cost function (\ref{eq3}). The positive-definiteness of $\mathbf{B}$ indicates that none of the control space directions are associated with a null probability density in the case of a Gaussian pdf. Square-root formulations of $\mathbf{B}$ currently used in operational DA systems (CVT and/or ensemble-based) can violate this assumption. As emphasized recently by \citet{menetrier2015overlooked}, in some DA systems the balance operators used to construct the CVT lead to a non-square square-root $\mathbf{L}$, and therefore implicitly model a singular $\mathbf{B}$ matrix. Moreover, as previously mentioned, the number of trajectory perturbations, $K$, used to represent $\mathbf{B}$ in ensemble-based approaches is very small compared to the dimension of the control vector, i.e., $K<<n$, which also leads to singular square-roots for $\mathbf{B}$.

We note that preconditioned 4D-Var techniques that use a singular square-root for $\mathbf{L}$ can be readily interpreted within the theoretical framework developed in Section \ref{model_reduc}. Indeed, solving the preconditioned quadratic 4D-Var minimization problem (\ref{incr_4dvar}) with a singular square-root $\mathbf{L}$ amounts to applying a two-step low-rank projection method, as described in Section \ref{form_prel}, with $\mathbf{\Gamma}=(\mathbf{L}^T\mathbf{L})^{-1}\mathbf{L}^T$ and $\mathbf{\Gamma}^\star=\mathbf{L}$. In this case, it is also easy to check that the prolongation operator verifies $\mathbf{\Gamma}^\star=\mathbf{B}\mathbf{\Gamma}^T\left (\mathbf{\Gamma}^T\mathbf{B}\mathbf{\Gamma} \right )^{-1}$, which makes it an optimal prolongation (see Thm. \ref{thm:best_prolong}). Therefore, by Corollary \ref{post_sol_projec}, the exact solution of this singular preconditioned minimization problem (\ref{incr_4dvar}) is the projection of the solution of the initial (full-rank) Bayesian problem, i.e., one has:
\begin{align}
\label{eq:sing_sq_sol}
\mathbf{L}\mathbf{z}^a=\mathbf{\Pi}_\mathbf{L}\mathbf{x}^a,
\end{align}
with $\mathbf{\Pi}_\mathbf{L}=\mathbf{\Gamma}^\star\mathbf{\Gamma}$.
 Note that (\ref{eq:sing_sq_sol}) is verified if the matrix $\mathbf{R}_{\mathbf{\Pi}_\mathbf{L}}$, which accounts for the representativeness error due to the restriction of the problem to the subspace spanned by the column of $\mathbf{L}$, is used in (\ref{incr_4dvar}) instead of the initial observational error covariance matrix $\mathbf{R}$ (i.e., (\ref{eq:sing_sq_sol}) is the solution of the Bayesian problem $\mathcal{B}_{\mathbf{\Pi}_\mathbf{L}}=(E,F,\mathbf{H}\mathbf{\Pi}_\mathbf{L},\mathbf{B},\mathbf{R}_{\mathbf{\Pi}_\mathbf{L}})$. It is worthwhile to note that when singular square-roots formulations are used in operational DA systems, the initial observational error covariance ($\mathbf{R}$) is not modified, which can theoretically lead to errors in the optimization. 
Since the representativeness error \ref{agg_error} cannot be computed in practice, it is useful to understand under which conditions it vanishes, which is the object of the following Proposition:

 \begin{prop}
\label{prop:null_repres_error}
Let us consider  $\mathbf{L}$ a singular square-root of $\mathbf{B}$ $(\mathrm{rank}(\mathbf{L})<n)$, and $\mathbf{\Pi}_\mathbf{L}=\mathbf{L}(\mathbf{L}^T\mathbf{L})^{-1}\mathbf{L}^T$ and $\mathbf{R}_{\mathbf{\Pi}_\mathbf{L}}$ its associated projection and representativeness error, respectively. One has:
\begin{align}
\label{equiv_repres_error}
\mathbf{R}_{\mathbf{\Pi}_\mathbf{L}}=\mathbf{R} \Leftrightarrow \mathrm{Im}(\mathbf{L})^\perp \subset \mathrm{ker}(\mathbf{H})  \Leftrightarrow  \mathrm{Im}(\mathbf{L})^\perp \subset  {\mathrm{Im}(\mathbf{H}^T)}^\perp,
\end{align}
where $^\perp$ denotes the orthogonal complement.
\end{prop}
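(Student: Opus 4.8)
The plan is to collapse the matrix identity $\mathbf{R}_{\mathbf{\Pi}_\mathbf{L}}=\mathbf{R}$ to the vanishing of a single positive-semidefinite matrix, and then read off the subspace condition. Starting from Proposition \ref{agg_error}, the representativeness error is $\mathbf{R}_{\mathbf{\Pi}_\mathbf{L}}=\mathbf{R}+\mathbf{H}\,\Delta\mathbf{B}\,\mathbf{H}^T$ with $\Delta\mathbf{B}=\mathbf{B}+\mathbf{\Pi}_\mathbf{L}\mathbf{B}\mathbf{\Pi}_\mathbf{L}^T-\mathbf{B}\mathbf{\Pi}_\mathbf{L}^T-\mathbf{\Pi}_\mathbf{L}\mathbf{B}$. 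First I would rewrite this, using that $\mathbf{B}$ is symmetric, in the factored form $\Delta\mathbf{B}=(\mathbf{Id}-\mathbf{\Pi}_\mathbf{L})\mathbf{B}(\mathbf{Id}-\mathbf{\Pi}_\mathbf{L})^T$ (the analogue of the square-root factorization already recorded in Proposition \ref{prop:geom_interp_prolong}). Since $\mathbf{R}$ is common to both sides, $\mathbf{R}_{\mathbf{\Pi}_\mathbf{L}}=\mathbf{R}$ is then equivalent to $\mathbf{H}(\mathbf{Id}-\mathbf{\Pi}_\mathbf{L})\mathbf{B}(\mathbf{Id}-\mathbf{\Pi}_\mathbf{L})^T\mathbf{H}^T=0$.

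Next I would introduce a full-rank square-root $\mathbf{B}^{1/2}$ of the prior and set $\mathbf{N}=\mathbf{H}(\mathbf{Id}-\mathbf{\Pi}_\mathbf{L})\mathbf{B}^{1/2}$, so that the quantity above is exactly $\mathbf{N}\mathbf{N}^T$. Because $\mathrm{Tr}(\mathbf{N}\mathbf{N}^T)=\|\mathbf{N}\|_F^2$, a real matrix satisfies $\mathbf{N}\mathbf{N}^T=0$ if and only if $\mathbf{N}=0$; invertibility of $\mathbf{B}^{1/2}$ then lets me cancel it and conclude $\mathbf{H}(\mathbf{Id}-\mathbf{\Pi}_\mathbf{L})=0$. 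Since $\mathbf{\Pi}_\mathbf{L}=\mathbf{L}(\mathbf{L}^T\mathbf{L})^{-1}\mathbf{L}^T$ is the orthogonal projector onto $\mathrm{Im}(\mathbf{L})$, the complementary operator $\mathbf{Id}-\mathbf{\Pi}_\mathbf{L}$ is the orthogonal projector whose range is precisely $\mathrm{Im}(\mathbf{L})^\perp$. Hence $\mathbf{H}(\mathbf{Id}-\mathbf{\Pi}_\mathbf{L})=0$ says exactly that $\mathbf{H}$ annihilates every vector of $\mathrm{Im}(\mathbf{L})^\perp$, i.e. $\mathrm{Im}(\mathbf{L})^\perp\subset\mathrm{ker}(\mathbf{H})$, which gives the first equivalence.

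The second equivalence I would dispatch with the fundamental theorem of linear algebra: $\mathrm{ker}(\mathbf{H})=\mathrm{Im}(\mathbf{H}^T)^\perp$, so the two inclusions $\mathrm{Im}(\mathbf{L})^\perp\subset\mathrm{ker}(\mathbf{H})$ and $\mathrm{Im}(\mathbf{L})^\perp\subset\mathrm{Im}(\mathbf{H}^T)^\perp$ are literally the same statement, with no further work required.

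The subtle point, and the step I expect to carry the real content, is the cancellation of $\mathbf{B}^{1/2}$ in the forward implication: passing from $\mathbf{H}(\mathbf{Id}-\mathbf{\Pi}_\mathbf{L})\mathbf{B}^{1/2}=0$ to $\mathbf{H}(\mathbf{Id}-\mathbf{\Pi}_\mathbf{L})=0$ is valid only because the prior covariance entering the representativeness error is positive definite (full rank). If one instead substituted $\mathbf{B}=\mathbf{L}\mathbf{L}^T$ using the \emph{same} singular factor, then $(\mathbf{Id}-\mathbf{\Pi}_\mathbf{L})\mathbf{B}=0$ holds automatically and $\Delta\mathbf{B}$ would vanish unconditionally, trivializing the statement; so in the proof I must treat $\mathbf{B}$ as the full-rank prior of the underlying initial Bayesian problem while $\mathbf{L}$ supplies only the projection direction $\mathrm{Im}(\mathbf{L})$. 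The reverse implication needs no such care: if $\mathrm{Im}(\mathbf{L})^\perp\subset\mathrm{ker}(\mathbf{H})$ then $\mathbf{H}(\mathbf{Id}-\mathbf{\Pi}_\mathbf{L})=0$ forces $\mathbf{H}\,\Delta\mathbf{B}\,\mathbf{H}^T=0$ immediately, irrespective of the rank of $\mathbf{B}$.
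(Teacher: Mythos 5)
Your proof is correct, and it reaches the paper's conclusion by a cleaner route. Both arguments reduce $\mathbf{R}_{\mathbf{\Pi}_\mathbf{L}}=\mathbf{R}$ to the vanishing of a Gram matrix $\mathbf{N}\mathbf{N}^T$ and then read off a range inclusion, but the factorizations differ. The paper first decomposes the full-rank prior as $\mathbf{B}=\mathbf{L}\mathbf{L}^T+\mathbf{Z}_\perp\mathbf{\Omega}\mathbf{Z}_\perp^T$, i.e.\ into orthogonal blocks adapted to $\mathrm{Im}(\mathbf{L})$ and $\mathrm{Im}(\mathbf{L})^\perp$, deduces $\mathbf{\Pi}_\mathbf{L}\mathbf{B}=\mathbf{L}\mathbf{L}^T$ and hence $\Delta\mathbf{B}=\mathbf{Z}_\perp\mathbf{\Omega}\mathbf{Z}_\perp^T$, and takes $\mathbf{N}=\mathbf{H}\mathbf{Z}_\perp\mathbf{\Omega}^{1/2}\mathbf{Z}_\perp^T$; full rank of $\mathbf{\Omega}$ (positive definiteness of $\mathbf{B}$) then gives $\mathrm{Im}(\mathbf{Z}_\perp\mathbf{\Omega}^{1/2}\mathbf{Z}_\perp^T)=\mathrm{Im}(\mathbf{L})^\perp$. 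You instead keep $\mathbf{B}$ intact, use $\Delta\mathbf{B}=(\mathbf{Id}-\mathbf{\Pi}_\mathbf{L})\mathbf{B}(\mathbf{Id}-\mathbf{\Pi}_\mathbf{L})^T$ (valid because $\mathbf{\Pi}_\mathbf{L}$ is a symmetric idempotent), take $\mathbf{N}=\mathbf{H}(\mathbf{Id}-\mathbf{\Pi}_\mathbf{L})\mathbf{B}^{1/2}$, and cancel the invertible $\mathbf{B}^{1/2}$. What your route buys is that it never presupposes that $\mathbf{B}$ splits block-diagonally across $\mathrm{Im}(\mathbf{L})\oplus\mathrm{Im}(\mathbf{L})^\perp$ with $\mathbf{L}\mathbf{L}^T$ as its $\mathrm{Im}(\mathbf{L})$-block --- an assumption the paper's decomposition makes tacitly and that a general positive-definite $\mathbf{B}$ need not satisfy, since the cross-blocks $\mathbf{Z}^T\mathbf{B}\mathbf{Z}_\perp$ need not vanish (they happen to be killed by conjugation with $\mathbf{Id}-\mathbf{\Pi}_\mathbf{L}$, which is exactly what your factored form makes visible). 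Your closing remark, that one must treat $\mathbf{B}$ as the full-rank prior and not substitute $\mathbf{B}=\mathbf{L}\mathbf{L}^T$ with the singular factor lest $\Delta\mathbf{B}$ vanish identically, correctly isolates where positive definiteness enters; it plays the same role as the paper's appeal to the full rank of $\mathbf{\Omega}$. The second equivalence via $\mathrm{ker}(\mathbf{H})=\mathrm{Im}(\mathbf{H}^T)^\perp$ is handled identically in both (the paper's proof writes $\mathrm{ker}(\mathbf{H})=\mathrm{Im}(\mathbf{H}^T)$, an evident typo that your version corrects).
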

 
\begin{proof}
The second equivalence on the right is immediate since $\mathrm{ker}(\mathbf{H})= \mathrm{Im}(\mathbf{H}^T)$.
To show the first equivalence, let us write explicitly the representativeness error:
\begin{align}
\label{B_decomp1}
\mathbf{R}_{\mathbf{\Pi}_\mathbf{L}}=\mathbf{R} + \mathbf{H}(\mathbf{B}+\mathbf{\Pi}_\mathbf{L}\mathbf{B} \mathbf{\Pi}_\mathbf{L}- \mathbf{\Pi}_\mathbf{L}\mathbf{B}-\mathbf{B} \mathbf{\Pi}_\mathbf{L} )\mathbf{H}^T
\end{align}
Let us decompose the (full-rank) $\mathbf{B}$ matrix as:
 \begin{align}
 \label{B_decomp2}
 \mathbf{B}&=\mathbf{B}_\mathbf{L}+\mathbf{B}_{\mathbf{L}_\perp} = \mathbf{Z}\mathbf{\Theta}\mathbf{Z}^T+\mathbf{Z}_\perp\mathbf{\Omega}\mathbf{Z}^T_\perp,
 \end{align}
 where $\mathbf{B}_\mathbf{L}=\mathbf{LL}^T=\mathbf{Z}\mathbf{\Theta}\mathbf{Z}^T$ is an eigendecomposition of $\mathbf{B}$ in the subspace spanned by the columns of $\mathbf{L}$, and $\mathbf{B}_\perp=\mathbf{Z}_\perp \mathbf{\Omega} \mathbf{Z}^T_\perp$ is the expression of $\mathbf{B}$ in a basis $\{ \mathbf{z}^\perp_i,\, i=1,...,k\}$ of the orthogonal complement of $\mathrm{Im}(\{ \mathbf{z}_i,\, i=1,...,k\})$ (i.e., $\mathbf{z}_i^T\mathbf{z}^\perp_j=0,\, \forall i,j$). Using (\ref{B_decomp2}) and the orthogonality properties, it is clear that $\mathbf{\Pi}_\mathbf{L}\mathbf{B}=\mathbf{L}\mathbf{L}^T$. Therefore, one has: $\mathbf{B}+\mathbf{\Pi}_\mathbf{L}\mathbf{B} \mathbf{\Pi}_\mathbf{L}- \mathbf{\Pi}_\mathbf{L}\mathbf{B}-\mathbf{B} \mathbf{\Pi}_\mathbf{L}=\mathbf{B}-\mathbf{B} \mathbf{\Pi}_\mathbf{L}=\mathbf{Z}_\perp\mathbf{\Omega}\mathbf{Z}^T_\perp$. Replacing this expression in (\ref{B_decomp1}), one obtains:
\begin{align}
\mathbf{R}_{\mathbf{\Pi}_\mathbf{L}}=\mathbf{R}+\mathbf{H}\mathbf{Z}_\perp\mathbf{\Omega}\mathbf{Z}^T_\perp\mathbf{H}^T
\end{align}
Using a square-root of $\mathbf{\Omega}$\footnote{ Note that $\mathbf{\Omega}= \mathbf{\Omega}^T$, from $\mathbf{Z}_\perp\mathbf{\Omega}\mathbf{Z}^T_\perp =\mathbf{Z}_\perp\mathbf{\Omega}^T\mathbf{Z}^T_\perp$ and multiplication by $(\mathbf{Z}_\perp^T\mathbf{Z}_\perp)^{-1}\mathbf{Z_\perp}^T$ and $\mathbf{Z_\perp}(\mathbf{Z}_\perp^T\mathbf{Z}_\perp)^{-1}$ on the left and right, respectively, which guarantees the existence of a square-root for $\mathbf{\Omega}$.}, one can also write: 
\begin{align}
\mathbf{R}_{\mathbf{\Pi}_\mathbf{L}}=\mathbf{R}+\mathbf{M}\mathbf{M}^T,
\end{align}
where $\mathbf{M}=\mathbf{H}\mathbf{Z_\perp}\mathbf{\Omega}^{1/2}\mathbf{Z_\perp}^T$. Therefore, one obtains the equivalence:
\begin{align}
 \mathbf{R}_{\mathbf{\Pi}_\mathbf{L}}=\mathbf{R} &\Leftrightarrow  \mathbf{M}\mathbf{M}^T = \mathbf{0} \\
 &\Leftrightarrow   \mathbf{M}=\mathbf{0} \\
 &\Leftrightarrow  \mathrm{Im}(\mathbf{Z_\perp}\mathbf{\Omega}^{1/2}\mathbf{Z_\perp}^T) \subset  \mathrm{ker}(\mathbf{H}) \\
 &\Leftrightarrow   \mathrm{Im}(\mathbf{Z_\perp}) \subset  \mathrm{ker}(\mathbf{H}) \\
  &\Leftrightarrow   {\mathrm{Im}(\mathbf{L})}^\perp \subset  \mathrm{ker}(\mathbf{H}) ,
\end{align}
where we used the fact that $ \mathrm{Im}(\mathbf{Z_\perp}\mathbf{\Omega}^{1/2}\mathbf{Z_\perp}^T) = \mathrm{Im}(\mathbf{Z_\perp})$ (since $\mathbf{\Omega}$ is full-rank) and $\mathrm{Im}(\mathbf{Z_\perp})={\mathrm{Im}(\mathbf{L})}^\perp$.
\end{proof}
 \paragraph{Diagnosing the Representativeness Error}
 In practice, the right equivalence in (\ref{equiv_repres_error}) can be useful to diagnose the representativeness error.
 Indeed, one has the following necessary condition:
 \begin{align}
 \label{necess_cond_repres_error}
\mathbf{R}_{\mathbf{\Pi}_\mathbf{L}}=\mathbf{R} \Leftrightarrow  \mathrm{Im}(\mathbf{L})^\perp \subset  {\mathrm{Im}(\mathbf{H}^T)}^\perp  \Rightarrow \mathrm{Im}(\mathbf{H}^T) \subset \mathrm{Im}(\mathbf{L}),
 \end{align}
 where we used the fact that $ {\mathrm{Im}(\mathbf{A})}\subset \mathrm{Im}(\mathbf{B})\Rightarrow{\mathrm{Im}(\mathbf{B})}^\perp\subset {\mathrm{Im}(\mathbf{A})}^\perp$ for any two matrices $\mathbf{A}$ and $\mathbf{B}$. The necessary condition on the right of (\ref{necess_cond_repres_error}) means that the subspace corresponding to non-zero sensitivities of the observations to the control space should be restricted to the range of the singular square-root of $\mathbf{L}$. We note that an approximation of both $\mathrm{Im}(\mathbf{L})$ and $ \mathrm{Im}(\mathbf{H}^T)$ can be obtained using the randomized range finder described in Alg. \ref{alg_one_pass} (step 1 and 2) for $\mathbf{L}$ and $\mathbf{H}^T$, respectively. If one notes $\mathbf{Q}_{\mathbf{H}^T}$ an orthonormal basis for the range of $\mathbf{H}^T$ and $\mathbf{Q}_{\mathbf{L}}$ an orthonormal basis for the range of $\mathbf{L}$, one has: $\mathrm{Im}(\mathbf{H}^T) \subset \mathrm{Im}(\mathbf{L}) \Leftrightarrow \mathbf{Q}_{\mathbf{L}} \mathbf{Q}_{\mathbf{L}}^T \mathbf{Q}_{\mathbf{H}^T}  = \mathbf{Q}_{\mathbf{H}^T} $. Therefore, in the case where the orthonormal basis $\mathbf{Q}_{\mathbf{L}}$ and $\mathbf{Q}_{\mathbf{H}^T}$ are only approximations, a diagnostic for the representativeness error can consist of checking that: 
 \begin{align}
 \label{diag_repres_error}
\mathbf{Q}_{\mathbf{L}} \mathbf{Q}_{\mathbf{L}}^T \mathbf{Q}_{\mathbf{H}^T}  \approx \mathbf{Q}_{\mathbf{H}^T}  \Rightarrow  \| (\mathbf{Id}-\mathbf{Q}_{\mathbf{L}} \mathbf{Q}_{\mathbf{L}}^T)\mathbf{Q}_{\mathbf{H}^T} \|^2 \approx 0
 \end{align}
 In a cycling DA context, $\mathrm{Im}(\mathbf{H}^T)$ would need to be estimated at each DA cycle, while $\mathrm{Im}(\mathbf{L})$ would be estimated only once. However, in Section \ref{improv_4Dvar} we discuss a strategy that would allow estimation of $\mathrm{Im}(\mathbf{H}^T)$ as a by-product of the minimization of (\ref{incr_4dvar}), where the CG algorithm is replaced by a randomized SVD approach. 
 \begin{rem}
A recent study by \citet{menetrier2015overlooked} discusses the impact of singular square-root preconditionings in variational minimization. The discussion focuses on the idea that the change of variable $\mathbf{x}=\mathbf{L}\mathbf{v}$ should rigorously lead to a preconditioned background term of the cost function of the form $J_b(\mathbf{v})=\mathbf{v}^T \mathbf{L}^T\mathbf{B}^{-1}\mathbf{L}\mathbf{v}$, which can be different from the formulation $J_b(\mathbf{v})=\mathbf{v}^T \mathbf{v}$ always used in practice. The authors demonstrate that in the subspace where the CG minimization operates, the two background term formulations are equivalent. However, as described in the present paper (see Prop. \ref{proper:corresp}), and since the singular preconditioning can be interpreted as an effective dimension reduction, the correct approach here is to consider $\mathbf{B}_\omega=\mathbf{L}^T\mathbf{B}\mathbf{L}=\mathbf{Id}_m$ ($m<n$) as the prior error covariance matrix of the preconditioned cost function, since it is precisely the prior error covariance of the reduced Bayesian problem $\mathcal{B}_\omega=(E_\omega,F,\mathbf{H\Gamma}^\star,\mathbf{\Gamma}^T\mathbf{B}\mathbf{\Gamma},\mathbf{R}_{\Pi})$ (with $\mathbf{\Gamma}=(\mathbf{L}^T\mathbf{L})^{-1}\mathbf{L}^T$ and $\mathbf{\Gamma}^\star=\mathbf{\Gamma}$.). Therefore, one sees that the claim that there would be any potential inconsistency in using $J_b(\mathbf{v})=\mathbf{v}^T \mathbf{v}$ in the preconditioned 4D-Var minimization has no theoretical basis. The inconsistency can only arise from neglecting the representativeness error, as explained in this Section.
\end{rem}
\subsection{Improving Incremental 4D-Var}
\label{improv_4Dvar}
In this Section we propose some improvements to the standard 4D-Var algorithm, leveraging the results established in this paper. Description of the modifications and their justifications are provided below for each part of the algorithm. 
\subsubsection{Parallelization of the Inner-Loop Step and Optimal Increments}
\label{parallel_incr4D}
As shown in Prop. \ref{prop:eq_cg}, the increment obtained after $k$ iterations of the CG algorithm in a given inner-loop approximates the rank-$k$ optimal posterior mean update $\mathbf{x}^a_{\mathbf{\Pi}_\mathrm{dof}}$ associated with the linear Bayesian problem defined by the quadratic cost function (\ref{incr_4dvar}).  This stems from the fact that the CG procedure minimizes the cost function in a Krylov subspace that approximately spans the leading $k$ eigenvectors of the prior-preconditioned Hessian, $\widehat{\mathbf{H}}_p$. The eigendecomposition of  $\widehat{\mathbf{H}}_p$ $\{(\mathbf{v}_i,\lambda_i),\, i=1,...,k\}$ is the basis of the two optimal posterior mean updates $\mathbf{x}^a_{\mathbf{\Pi}_\mathrm{dof}}$ (\ref{eq:update_max_dof_mean_new}) and  $\mathbf{x}^{a}_{\mathrm{FR}_\mathrm{dof}}$ (\ref{eq:FR_postupdate}). Therefore, for each inner-loop step, either the Ritz pairs obtained from a CG minimization or the approximation of $\{(\mathbf{v}_i,\lambda_i),\, i=1,...,k\}$ computed from a randomized SVD method can be used in combination with formula (\ref{eq:update_max_dof_mean_new}) or (\ref{eq:FR_postupdate}) to define optimal truncated posterior solutions. Assuming a large number of processors are available, our numerical experiments in Section \ref{num_exp} clearly suggest that replacing an iterative CG minimization algorithm by a parallel implementation of a randomized SVD computation would dramatically enhance the computational efficiency of the 4D-Var algorithm. In particular, in an operational context where only $\sim$10 inner-loop iterations can be afforded for each CG minimizations, a randomized SVD algorithm is expected to provide many more approximated eigenpairs of $\widehat{\mathbf{H}}_p$ than the Ritz pairs. Note that in practice it may be necessary to oversample $\widehat{\mathbf{H}}_p$ in order to obtain accurate SVD estimates (see \ref{average_error}). As discussed in\citet{halko2011finding}, for most problems an oversampling parameter of $\sim$10 is sufficient to obtain satisfying results, so that the number of required parallel integrations of the tangent-linear and adjoint models is roughly equal to number of eigenpairs of $\widehat{\mathbf{H}}_p$ one wants to approximate. For the sake of simplicity, we shall use the term inner-loop to describe any quadratic minimization step of the incremental 4D-Var (i.e., any linearization step of the Gauss-Newton algorithm), even though in the context of randomized SVD methods the minimization is not performed through an iterative algorithm.

As discussed in \ref{interpretation_optimal}, for a given inner-loop and a given approximation of $k$ eigenpairs of $\widehat{\mathbf{H}}_p$ $\{(\widetilde{\mathbf{v}_i},\widetilde{\lambda_i}),\, i=1,...,k\}$, an optimal approach to define the truncated solution of the quadratic minimization problem can be designed using the optimality results established in Proposition \ref{thm:opt_approx_mean}. More precisely, here we present a strategy, described in Alg. \ref{opt_incr}, that can be employed to statistically minimize the error in the approximated increments.

\begin{algorithm}
\caption{Adaptive Posterior Increment}
\label{opt_incr}
For a given inner-loop, let $\{(\widetilde{\mathbf{v}_i},\widetilde{\lambda_i}),\, i=1,...,k\}$ be an approximation of the first $k$ eigenpairs of $\widehat{\mathbf{H}}_p$:
 \begin{algorithmic}[1]
  \STATE if $\widetilde{\lambda_k}\le1$, then the approximated solution to the quadratic minimization problem (\ref{incr_4dvar}) is:  
  \begin{align}
  \label{4dvar_upd_fr}
  \mathbf{x}^{a}_{\mathrm{FR}_\mathrm{dof}}=\mathbf{L}\left [\mathbf{Id}-\left(\sum_{i=1}^{k}\widetilde{\lambda_i}(1+\widetilde{\lambda_i})^{-1}\widetilde{\mathbf{v}_i}\widetilde{\mathbf{v}_i}^T\right) \right ]\mathbf{L}^T\mathbf{H}^T\mathbf{R}^{-1} \mathbf{d},
  \end{align}
   \STATE if $\widetilde{\lambda_k}>1$, then the approximated solution to the quadratic minimization problem (\ref{incr_4dvar}) is: 
    \begin{align}
      \label{4dvar_upd_dof}
 &\mathbf{x}^a_{\mathbf{\Pi}_\mathrm{dof}}=\mathbf{L}\left [ \sum_{i=1}^{k}\left( 1+\widetilde{\lambda_i} \right)^{-1}\widetilde{\mathbf{v}_i}\widetilde{\mathbf{v}_i}^T\right ]\mathbf{L}^T\mathbf{H}^T\mathbf{R}^{-1} \mathbf{d}. 
      \end{align}
 \end{algorithmic}
\end{algorithm}

Based on the optimality results of Prop. \ref{thm:opt_approx_mean}, it is clear that when $\lambda_k\le1$ the choice $\mathbf{x}^{a}_{\mathrm{FR}_\mathrm{dof}}$ for the increment will always yield a smaller average approximation error than $\mathbf{x}^a_{\mathbf{\Pi}_\mathrm{dof}}$. However, the choice $\mathbf{x}^a_{\mathbf{\Pi}_\mathrm{dof}}$ when $\lambda_k>1$ does not  ensure a smaller average approximation error than $\mathbf{x}^{a}_{\mathrm{FR}_\mathrm{dof}}$. The rationale here is that the terms $\lambda_i^3$ in (\ref{eq:opt_dof_B_mean}) will greatly dominate the terms $\lambda_i$ in (\ref{eq:opt_dof_B_mean_FR}) as soon as $\lambda_i$ is significantly greater than 1. This threshold should therefore be adjusted when appropriate, depending on the typical spectra of $\widehat{\mathbf{H}}_p$ for a given inverse problem. Indeed, as shown in our numerical experiments in Section \ref{toy_prob} (Fig. \ref{fig:perc_dof} and Fig. \ref{fig:bayes_risk}), the approximations $\mathbf{x}^{a}_{\mathrm{FR}_\mathrm{dof}}$ and $\mathbf{P}^a_{\mathbf{H}_{\Pi_\mathrm{dof}}}$ can be associated with significantly lower approximation errors than $\mathbf{x}^a_{\mathbf{\Pi}_\mathrm{dof}}$ and $\mathbf{P}^{a}_{\mathbf{\Pi}_\mathrm{dof}}$, respectively, for a number of ranks $k$ such that $k<\max\{j,\,\lambda_j>1\}$. In the remaining we shall use a threshold of 1 for the sake of simplicity, but in principle improved objective criteria to define this threshold based on both heuristic and theory can be derived. This is let for future work.

\begin{rem}
Statistically, Algorithm \ref{opt_incr} should improve the convergence rate of incremental 4D-Var using standard CG minimizations, since by Proposition \ref{prop:eq_cg} the latter always produces the increment $\mathbf{x}^a_{\mathbf{\Pi}_\mathrm{dof}}$, which yields a greater approximation error than $ \mathbf{x}^{a}_{\mathrm{FR}_\mathrm{dof}}$ whenever the smallest approximated eigenvalue of $\widehat{\mathbf{H}}_p$ verifies $\widehat{\lambda_k}\le1$. Note that this method can be readily integrated into existing incremental 4D-Var systems based on CG minimizations by extracting the Ritz pairs $\{(\widetilde{\mathbf{v}_i},\widetilde{\lambda_i}),\, i=1,...,k\}$ and applying Algorithm \ref{opt_incr} instead of using the CG solution. 
\end{rem}

\begin{rem}
In incremental 4D-Var, a stopping criterion needs to be defined for each inner-loop of the algorithm. In many operational DA systems, the stopping criterion corresponds to a fixed number of iterations defined by the available computing resources. Whenever possible, the choice can be improved, by, e.g., considering the absolute norm of the gradient of the cost function (\ref{incr_4dvar}) (i.e., one wants $ \| \nabla J^{k} \|_2 < \epsilon $, $k$ being the iteration number), the relative change of the cost function (i.e., $| J^{k}-J^{k-1}|<\epsilon(1+J^{k})$), or a more sophisticated criterion based on sufficient conditions of convergence for Gauss-Newton minimizations, such as the one described in Lawless and Nichols [2006] (i.e., $\frac{\| \nabla J^{k} \|_2}{\| \nabla J^{0}\|_2}<\epsilon$). In the context where an approximation of $k$ eigenpairs  of $\widehat{\mathbf{H}}_p$ $\{(\widetilde{\mathbf{v}_i},\widetilde{\lambda_i}),\, i=1,...,k\}$ is available and Algorithm \ref{opt_incr} is used instead of the CG solution, the cost function $J^{k}$ and its gradient $\nabla J^{k}$ at the optimal rank-$k$ posterior mean can be evaluated around the values of the posterior means founds from Eqs. (\ref{4dvar_upd_fr}) and (\ref{4dvar_upd_dof}). When using the stopping criteria above in a randomized SVD context for instance, a minimum number of $k$ iterations translates into a minimum number of $k+l$ samples ($l$ being the oversampling parameter) for the random test matrix (see Algorithm \ref{alg_one_pass}). Alternatively, a stopping criterion that combines the deterministic convergence criteria of Lawless and Nichols [2006] and the statistical approach of Algorithm \ref{opt_incr} can be defined as $k=\min\{i,\, \widehat{\lambda_i}<1 \mathrm{\,\, and\,\, } \frac{\| \nabla J^{i} \|_2}{\| \nabla J^{0}\|_2}<\epsilon\}$, since this guarantees both the convergence of the Gauss-Newton algorithm and the statistical optimality of the update $\mathbf{x}^{a}_{\mathrm{FR}_\mathrm{dof}}$.  

\end{rem}

\subsubsection{Optimal Posterior Perturbations in Ensemble-Based DA systems}
\label{opt_post_samp}
In ensemble-based variational DA systems (e.g., EnVar), the propagation of errors from one assimilation window to the next is carried out using an ensemble of perturbed forecast trajectories that sample the posterior probability distribution. In practice, this posterior distribution needs to be evaluated at the maximum-likelihood \citep{Tarantola05}. Therefore, a square-root of the posterior error covariance matrix evaluated at the last (converged) outer-loop iteration can be used to sample the posterior distribution. Two sampling strategies, one deterministic, and one stochastic, can then be adopted:
\paragraph{Stochastic Method}
The stochastic approach to sampling, similar to an EnKF, consists of generating an ensemble of random perturbations that sample the posterior distribution, using the formula:
\begin{align}
\label{stochastic_samp}
\delta \mathbf{w}_i=\mathbf{S} \xi_i,\, i=1,...,k,
\end{align}
where the $\xi_i$ are independent random vectors drawn from a standard normal distribution $\mathcal{N}(0,1)$, and $\mathbf{S}$ is a square-root of the approximated posterior error covariance matrix.
\paragraph{Deterministic Method}
In the deterministic approach, similar to an EAKF, the ensemble of perturbations $\{\delta \mathbf{w}_i,\,i=1,...,k\}$ is constructed so that it exactly verifies:
 \begin{align}
 \label{det_sqrt}
&\mathbf{P}^a_\mathrm{approx}=\delta \mathbf{W}\delta \mathbf{W}^T,
\end{align}
where $\textbf{W}=\frac{1}{\sqrt{k-1}}(\delta \mathbf{w}_1,...,\delta \mathbf{w}_k)$ and  $\mathbf{P}^a_\mathrm{approx}$ is a given approximation of $\mathbf{P}^a$.

Based on Proposition \ref{thm:opt_approx_scov}, optimal rank-$k$ posterior error covariance approximations can be constructed from the eigenpairs $\{(\mathbf{v}_i,\lambda_i)$, $i=1,..,k\}$ of $\widehat{\mathbf{H}_p}$. Given an estimate of the first $k$ eigenpairs $\{(\widetilde{\mathbf{v}_i},\widetilde{\lambda_i})$, $i=1,..,k\}$ obtained, e.g., from iterative CG minimizations or randomized SVD techniques, a adaptive strategy to approximate the square-root of the posterior error covariance matrix is provided in Algorithm \ref{opt_sqrt}.

\begin{algorithm}
\caption{Adaptive Posterior Sampling}
\label{opt_sqrt}
Let $\{(\widetilde{\mathbf{v}_i},\widetilde{\lambda_i}),\, i=1,...,k\}$ be an approximation of the first $k$ eigenpairs of $\widehat{\mathbf{H}}_p$, computed at the last outer-loop iteration of an incremental 4D-Var optimization:
 \begin{algorithmic}[1]
  \STATE if $\widetilde{\lambda_k}\le1$, then the approximated square-root for the posterior error covariance matrix is defined as:  
  \begin{align}
  \label{fr_sqrt}
 \mathbf{S}_{\mathbf{H}_{\Pi_\mathrm{dof}}}=\mathbf{L}\left ( \sum_{i=1}^{k}\left[ (1+\widetilde{\lambda_i})^{-1/2}-1\right ]\widetilde{\mathbf{v}_i}\widetilde{\mathbf{v}_i}^T+\mathbf{Id}\right ),
  \end{align}
   \STATE if $\widetilde{\lambda_k}>1$,  then the approximated square-root for the posterior error covariance matrix is defined as:  
    \begin{align}
      \label{lr_sqrt}
  \mathbf{S}_{\Pi_\text{dof}}=\mathbf{L}  \sum_{i=1}^{k} (1+\widetilde{\lambda_i})^{-1/2} \widetilde{\mathbf{v}_i}.
       \end{align}
 \end{algorithmic}
\end{algorithm}

Both Eq. (\ref{lr_sqrt}) and (\ref{fr_sqrt}) can be used with a stochastic method to generate optimal posterior perturbations. The approximated square-root (\ref{lr_sqrt}) can also be used in a deterministic approach by setting $\delta \mathbf{w}_i= \sqrt{k-1}\,\,\mathbf{L} (1+\lambda_i)^{-1/2} \mathbf{v}_i$ in (\ref{det_sqrt}). Note that the full-rank nature of (\ref{fr_sqrt}) prevents it from being modeled as a deterministic square-root $\delta \mathbf{W}$ using a (necessarily) small number of perturbations $\delta \mathbf{w}_i$.

\begin{rem}
The posterior square-roots (\ref{fr_sqrt}) and (\ref{lr_sqrt}) have also recently been proposed in \citet{auligne2016ensemble} in the context of ensemble-variational DA for NWP. Their method (EVIL) uses the Ritz pairs obtained from the CG minimization at each inner-loop to construct the posterior square-root. The practical advantages of this approach and its better properties compared to other ensemble-based DA methods are discussed in \citet{auligne2016ensemble}. Our optimality results bring further theoretical justifications for using these posterior error square-root formulations, while providing new adaptive methods to construct the approximations for the posterior square-root and the posterior mean update. Additionally, \citet{auligne2016ensemble} note that many Ritz pairs may be necessary to obtain an accurate posterior ensemble, which implies many iterations for the inner-loops and may not be practical for high-dimensional systems. It is further complicated by the fact that a rigorous approach would require one to use only the Ritz pairs obtained from the last inner-loop minimization, i.e., at the optimal state $\mathbf{x}^a$. This can have a significant impact when the system is highly non-linear, since in this case the Hessian matrix can be very different across inner-loops. Provided enough resources are available to perform many tangent-linear and adjoint integrations in parallel, the randomized SVD approach has the potential to approximate many more eigenpairs of the prior-preconditioned Hessian than the CG technique. In practice, it would therefore provide a more accurate approximation of the square-root of the posterior error covariance matrix at the optimal state $\mathbf{x}^a$ than obtained through the framework proposed by \citet{auligne2016ensemble}.
\end{rem}
\begin{rem}
Another possible way to define an approximated square-root of the posterior error covariance is to consider the preconditioner defined in \citet{Tshimanga08}, since their preconditioner effectively approximates the inverse Hessian of the 4D-Var cost function, which itself is an approximation of $\mathbf{P}^a$. In order to sample the posterior distribution, the preconditioner would be computed using only the approximated eigenpairs $\{(\widetilde{\mathbf{v}_i},\widetilde{\lambda_i}),\, i=1,...,k\}$ obtained at the last inner-loop, and the square-root constructed using the recursive formula provided in Thm. 2 of \citet{Tshimanga08}.
\end{rem}

\subsubsection{A New Randomized Incremental Optimal Technique (RIOT) for Variational Data Assimilation }

Combining all the results from Section \ref{improv_4Dvar}, a statistically optimal and computationally efficient ensemble-variational algorithm can be defined as follows:

\begin{algorithm}

\caption{One Cycle of the Randomized Incremental Optimal Technique for Variational Data Assimilation (RIOT VarDA)}
\label{RIOT_4DVAR}
 \begin{algorithmic}[1]
 \REQUIRE An initial time: $t_0$; a final time: $t_f$; an initial prior vector: $\mathbf{x}^b$; an initial innovation vector: $\mathbf{d}=\mathbf{y}-H(\mathbf{x}^b(t_0))$; a square-root of the prior error covariance  matrix: $\mathbf{L}$; a maximum number of outer-loop iterations (resource-dependent): $m$; a maximum number of samples for the randomized SVD approximation (resource-dependent): $r$ ; an oversampling parameter for the randomized SVD approximation: $q$; a targeted rank for the randomized SVD approximation: $k=r-2-q$; a maximum number of samples for the non-linear trajectories (resource-dependent): $p$  \footnotemark
 \FOR {$i =1,m$} 
        \STATE Compute $\mathbf{d}=\mathbf{y}-H(\mathbf{x}^b(t_0))$
         \STATE Using ($r-2$) samples for $\mathbf{\Omega}$ in Algorithm \ref{alg_one_pass}, generate $\mathbf{Q}$ that approximates the range of $\widehat{\mathbf{H}_p} = \mathbf{L}^T\mathbf{H}^T\mathbf{R}^{-1}\mathbf{H}\mathbf{L}$ at $\mathbf{x}^b(t_0)$, and evaluate the probabilistic error bound using two error samples ($l=2$) in Eq. (\ref{bound_prob}) \\
         \STATE Using $\mathbf{Q}$ from step 3, and Algorithm \ref{alg_one_pass}, compute the SVD of $\widehat{\mathbf{H}_p} = \mathbf{L}^T\mathbf{H}^T\mathbf{R}^{-1}\mathbf{H}\mathbf{L}$ \\
        	\STATE Update $\mathbf{x}^b(t_0)$ using Algorithm \ref{opt_incr} to compute the increment $\delta \mathbf{x}^a(t_0)$: \\
	 $\mathbf{x}^b(t_0) \gets \mathbf{x}^b(t_0)+\delta \mathbf{x}^a(t_0)$
 \ENDFOR
 \STATE Propagate $\mathbf{x}^b(t_0)$ to $t_f$ using:\\
  $\mathbf{x}^b(t^f) \gets H(\mathbf{x}^b(t^0))$
 \STATE Use the eigenpair approximations $\{(\widetilde{\mathbf{v}_j},\widetilde{\lambda_j})$, $j=1,..,k\}$ from outer-loop $m$ to define an optimal posterior error square-root $\mathbf{S}$ using Algorithm \ref{opt_sqrt}
 \STATE Construct $p$ stochastic perturbations $\{\delta \mathbf{w}_i(t^0),\, i=1,...,p\}$ at initial time $t_0$ using: \\
 $\delta \mathbf{w}_i(t_0)\gets\mathbf{S} \xi_i$, $\xi_i \sim \mathcal{N}(0,1)$, $i=1,...,p$ 
 \STATE Propagate the perturbations $\delta\mathbf{w}_i(t_0)$ to $t_f$ using: \\
 $\delta \mathbf{w}_i(t^f) \gets H[\mathbf{x}^b(t^0)+\delta  \mathbf{w}_i(t^0)]-H[\mathbf{x}^b(t^0)]$
 \STATE Update the prior square-root $\mathbf{L}$ using, e.g., the hybrid formulation of Eq. (\ref{Schur_product}) with:\\
 $\mathbf{B}_\mathrm{ens}= \mathbf{W}\mathbf{W}^T,\, \textbf{W}=\frac{1}{\sqrt{p-1}}(\delta \mathbf{w}_1(t^f),..., \delta\mathbf{w}_p(t^f))$
 \end{algorithmic}
\end{algorithm}
 \footnotetext{ All the resource-dependent parameters (i.e., $m$, $r$ and $p$) should be defined from previous tests based on the particular application and computational resources available.}

\begin{rem}
An alternative strategy to using Alg. \ref{alg_one_pass} is to use Algorithm 5.1 of \citet{halko2011finding} to approximate the SVD of the square-root of the prior-preconditioned Hessian $\mathbf{U}_{\widehat{\mathbf{H}_p}}=\mathbf{L}^T\mathbf{H}^T\mathbf{R}^{-1/2}\approx \sum_{i=1}^{k}\sqrt{\lambda_i}\mathbf{v}_i\mathbf{w}_i^T$ (or ${\mathbf{U}_{\widehat{\mathbf{H}_p}}}^T=\mathbf{R}^{-1/2}\mathbf{H}\mathbf{L}\approx \sum_{i=1}^{k}\sqrt{\lambda_i}\mathbf{w}_i\mathbf{v}_i^T$). In Alg. 5.1, after application of the range finder (step 1 of Alg. \ref{alg_one_pass}),  the operator $\mathbf{U}_{\widehat{\mathbf{H}_p}}$ (or $\mathbf{U}_{\widehat{\mathbf{H}_p}}^T$) needs to be projected onto the range $\mathbf{Q}$, i.e., one needs to perform the product $\mathbf{Q}^T\mathbf{U}_{\widehat{\mathbf{H}_p}}=({\mathbf{U}_{\widehat{\mathbf{H}_p}}^T\mathbf{Q}})^T$, which requires $k$ tangent linear integrations (or $\mathbf{Q}^T\mathbf{U}_{\widehat{\mathbf{H}_p}}^T=({\mathbf{U}_{\widehat{\mathbf{H}_p}}\mathbf{Q}})^T$, which requires $k$ adjoint integrations). Since the computational cost for both algorithms is largely dominated by the tangent linear and adjoint integrations, we see that Alg. 5.1 and Alg. 5.6 present similar complexities. Interestingly, if Alg. 5.1 is applied to $\mathbf{U}_{\widehat{\mathbf{H}_p}}=\mathbf{L}^T\mathbf{H}^T\mathbf{R}^{-1/2}$, one can used the random samples $\mathbf{H}^T\mathbf{R}^{-1/2}\boldsymbol{\Omega}$ generated as by-product of step 1 to estimate $\mathrm{Im}(\mathbf{H}^T)$, since one has $\mathrm{Im}(\mathbf{H}^T)=\mathrm{Im}(\mathbf{H}^T\mathbf{R}^{-1/2})$. Therefore, Alg. 5.1 would enable a cost-free implementation of the representativeness error diagnostic (\ref{diag_repres_error}). 
\end{rem}

\begin{rem}
Although Algorithm \ref{RIOT_4DVAR} has been presented in the context of a strong-constrained 4D-Var system, it is applicable to EnVar systems by formally removing the explicit time dimension, which is now implicitly included in all vectors and operators. In particular, the increments $\overline{\delta\mathbf{x}}$, the 4D observational operator $\overline{H}$ and the prior error covariance $\overline{\mathbf{B}}$ (represented by an ensemble) are now defined throughout the assimilation window (i.e., between $t_0$ and $t_f$). As a result, the tangent-linear and adjoint models include only spatial interpolation and measurement sensitivity operators (e.g., satellite averaging kernels), which are much easier to develop and faster to integrate than the tangent-linear and adjoint of the NWP model. In the context of EnVar the posterior error covariances are also defined by the updated (posterior) ensemble throughout the assimilation window, so that steps 7-11 of Alg. \ref{RIOT_4DVAR} are no longer needed. Instead, steps 2-5 can be applied to each prior perturbation member to generate the posterior ensemble. 
 The sampling noise in the posterior ensemble can then be mitigated by using an objective 4D-Var localization method, such as the one proposed in \citet{bocquet2016localization}.
\end{rem}
\newpage

\section{Conclusion}
This paper has presented a robust theoretical and numerical study of methods to approximate the solution of large-scale Bayesian problems. Our analysis focused principally on the construction of optimal low-rank projections for potentially high-dimensional Bayesian problems, taking into account the computational constraints of this framework, and in particular the need for matrix-free algorithms. A low-rank projection that maximizes the information content (i.e., the DOFS) of the inversion, the maximum-DOFS solution, was proposed and its optimality properties with respect to the posterior error covariance matrix and posterior mean approximations were analyzed in details. These results are also compared to other useful optimal low-rank approximations that are not associated with projections of the initial Bayesian problem. An interesting aspect of the low-rank approximations derived from the maximum-DOFS solution is that they allow one to adaptively optimize the posterior mean and the posterior error covariances based on the properties of the spectra of the so-called prior-preconditioned Hessian matrix. 

The performance of the optimal projection and alternate low-rank approximations are tested in the context of an atmospheric source inversion problem whose dimension is small enough to allow exact computation of the posterior solution. Our results indicate good convergence properties for both the posterior error covariances and posterior mean approximations and are consistent with the theory. A large-scale version of this experiment is also presented, where the maximum-DOFS solutions are computed using an efficient randomized SVD algorithm, whose parallel implementation dramatically improves the scalability of the SVD computation upon standard iterative matrix-free methods (e.g., the BFGS or Lanczos algorithms). 

Finally, we discussed the link between the maximum-DOFS low-rank approximation and the square-root formulation of incremental 4D-Var methods used in current operational DA systems. In particular, we showed the theoretical equivalence between $k$ iterations of the preconditioned conjugate-gradient algorithm in inner-loop (quadratic) minimizations and the rank-$k$ maximum-DOFS solutions of the associated projected Bayesian problem. Leveraging both our optimality results (e.g., the use of adaptive approximations in quadratic inner-loop minimizations) and the computational efficiency of the randomized SVD algorithms, we then proposed an improved implementation of incremental 4D-Var (RIOT). This approach is very generic and can be used with any square-root formulation of incremental 4D-Var, including hybrid ensemble-4D-Var \citep{clayton2013operational}. 

Randomized SVD methods can be exploited to massively parallelize adjoint-based 4D-Var minimization algorithms, and as such  represent an alternative approach that could rival the computational efficiency of ensemble-based DA, while preserving the full-rank nature of the variational formulation. However, adjoint-free ensemble-based approach such as EnKF or EnVar \citep{buehner2010intercomparison} still present the advantage that they do not require the development and maintenance of an adjoint model. Currently, randomization methods fundamentally rely on the availability of an adjoint model. One challenge ahead is to design matrix-free randomization methods that could perform SVD based on the forward model only. Additionally, preconditioning techniques such as the ones proposed in \citet{Tshimanga08} could be applied to the randomized SVD method to improve the accuracy of the inner-loop minimizations. Therefore, future studies should focus on leveraging both parallelization and preconditioning methods to maximize the efficiency of randomized SVD techniques in incremental 4D-Var. 

\paragraph{Acknowledgments} This work was supported by the NASA GEO-CAPE Science Team grant
  NNX14AH02G and the NOAA grant NA14OAR4310136. This work utilized the Janus supercomputer, which is supported by the National Science Foundation (award number CNS-0821794), the University of Colorado Boulder, the University of Colorado Denver, and the National Center for Atmospheric Research. The Janus supercomputer is operated by the University of Colorado Boulder.

\appendix
\section*{Appendix: Useful Lemmas and Their Proofs}
 
\begin{lem}
If $\mathbf{A}$, $\mathbf{B}$ and $\mathbf{C} > 0$ are two $n \times n$ Hermitian non-negative definite matrices, and $\mathbf{S}$ is a $n \times m$ matrix, one has the following properties:
\begin{align}
\label{eq:ineq_inv}
&\mathbf{A} > \mathbf{B}  \Leftrightarrow  \mathbf{B}^{-1} > \mathbf{A}^{-1} \\
\label{eq:ineq_proj}
&\mathbf{A} \ge \mathbf{B} \Rightarrow \mathbf{S}^T\mathbf{A} \mathbf{S}\ge \mathbf{S}^T\mathbf{B} \mathbf{S}
\end{align}
where $\ge$ (respectively $>$) denotes the L\"owner partial ordering (respectively strict) within the set of Hermitian non-negative definite matrices.
\end{lem}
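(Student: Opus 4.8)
The plan is to establish the two properties in the order (\ref{eq:ineq_proj}) then (\ref{eq:ineq_inv}), since the elementary congruence property will serve as the main tool for the inversion-reversing property. For (\ref{eq:ineq_proj}), I would argue directly from the definition of the L\"owner ordering: $\mathbf{A} \ge \mathbf{B}$ means $\mathbf{x}^T(\mathbf{A}-\mathbf{B})\mathbf{x} \ge 0$ for every $\mathbf{x} \in \mathbb{R}^n$. Taking an arbitrary $\mathbf{y} \in \mathbb{R}^m$ and computing $\mathbf{y}^T\mathbf{S}^T(\mathbf{A}-\mathbf{B})\mathbf{S}\mathbf{y} = (\mathbf{S}\mathbf{y})^T(\mathbf{A}-\mathbf{B})(\mathbf{S}\mathbf{y}) \ge 0$, the inequality is just the hypothesis applied to the vector $\mathbf{x} = \mathbf{S}\mathbf{y}$. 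As $\mathbf{y}$ is arbitrary, this gives $\mathbf{S}^T(\mathbf{A}-\mathbf{B})\mathbf{S} \ge 0$, which is the claim.

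For (\ref{eq:ineq_inv}), the plan is to reduce the matrix inequality to the trivial scalar fact that $c > 1$ implies $c^{-1} < 1$ by whitening $\mathbf{B}$ with a congruence. Assuming $\mathbf{A} > \mathbf{B} > 0$, I would apply the congruence $\mathbf{S} = \mathbf{B}^{-1/2}$ to obtain $\mathbf{B}^{-1/2}\mathbf{A}\mathbf{B}^{-1/2} > \mathbf{B}^{-1/2}\mathbf{B}\mathbf{B}^{-1/2} = \mathbf{Id}$. Writing $\mathbf{M} = \mathbf{B}^{-1/2}\mathbf{A}\mathbf{B}^{-1/2}$, the relation $\mathbf{M} > \mathbf{Id}$ forces every eigenvalue of $\mathbf{M}$ to exceed $1$, hence every eigenvalue of $\mathbf{M}^{-1}$ to lie below $1$, so $\mathbf{M}^{-1} < \mathbf{Id}$. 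Since $\mathbf{M}^{-1} = \mathbf{B}^{1/2}\mathbf{A}^{-1}\mathbf{B}^{1/2}$, a second congruence by $\mathbf{B}^{-1/2}$ yields $\mathbf{A}^{-1} < \mathbf{B}^{-1}$, that is $\mathbf{B}^{-1} > \mathbf{A}^{-1}$. The converse implication follows by applying this forward statement to the pair $(\mathbf{B}^{-1}, \mathbf{A}^{-1})$, which completes the equivalence.

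The one point requiring care will be the strict versus non-strict distinction: as stated, (\ref{eq:ineq_proj}) delivers only a non-strict conclusion, whereas (\ref{eq:ineq_inv}) is phrased with strict inequalities. I would therefore note separately that congruence by the \emph{invertible} matrix $\mathbf{B}^{-1/2}$ preserves strict definiteness, because $\mathbf{y} \mapsto \mathbf{B}^{-1/2}\mathbf{y}$ is a bijection and hence $\mathbf{S}\mathbf{y} \neq 0$ whenever $\mathbf{y} \neq 0$. Beyond this bookkeeping there is no genuine obstacle; the entire substance lies in the reduction to the eigenvalues of the whitened matrix $\mathbf{M}$, where the needed inequality is immediate.
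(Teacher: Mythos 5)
Your proof is correct. For (\ref{eq:ineq_proj}) you argue exactly as the paper does: substituting $\mathbf{x}=\mathbf{S}\mathbf{y}$ reduces the claim to the definition of positive semidefiniteness of $\mathbf{A}-\mathbf{B}$. For (\ref{eq:ineq_inv}), however, you take a genuinely different route. The paper recalls the argument of Horn and Johnson (Corollary 7.7.4): it invokes the characterization $\mathbf{A}>\mathbf{B}\Leftrightarrow\rho(\mathbf{A}^{-1}\mathbf{B})<1$ together with the invariance $\rho(\mathbf{M}\mathbf{N})=\rho(\mathbf{N}\mathbf{M})$, which settles the equivalence in two lines but leaves the substantive step --- the spectral-radius characterization itself --- as an external black box. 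You instead whiten by the congruence $\mathbf{B}^{-1/2}(\cdot)\,\mathbf{B}^{-1/2}$, reduce the claim to the statement that $\mathbf{M}>\mathbf{Id}$ implies $\mathbf{M}^{-1}<\mathbf{Id}$ for the positive definite matrix $\mathbf{M}=\mathbf{B}^{-1/2}\mathbf{A}\mathbf{B}^{-1/2}$ (immediate from its eigenvalues), and undo the congruence; the converse follows by applying the forward implication to the pair $(\mathbf{B}^{-1},\mathbf{A}^{-1})$. Your version is fully self-contained and in effect unpacks the same mechanism that underlies the cited spectral-radius criterion, at the cost of a few extra lines; you also correctly isolate the one delicate point, namely that congruence by an \emph{invertible} matrix preserves strictness of the L\"owner ordering, which is exactly what upgrades the non-strict conclusion of (\ref{eq:ineq_proj}) to the strict inequalities needed in (\ref{eq:ineq_inv}). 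Note that both proofs implicitly require $\mathbf{A}$ and $\mathbf{B}$ to be positive definite rather than merely semidefinite for (\ref{eq:ineq_inv}), as the presence of their inverses in the statement already presupposes.
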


\begin{proof}
A proof of (\ref{eq:ineq_inv}) is given in \citet{horn2012matrix} (Corollary 7.7.4). We recall it here for the reader's convenience. 
Let $\rho(.)$ be the spectral radius of a matrix. One has $\mathbf{A}>\mathbf{B}$ if and only if $\rho(\mathbf{A}^{-1}\mathbf{B})<1$. Since for all $\mathbf{M}$, $\mathbf{N}$ Hermitian matrices $\rho(\mathbf{MN})=\rho(\mathbf{NM})$, one has $\rho(\mathbf{A}^{-1}\mathbf{B})=\rho(\mathbf{B}\mathbf{A}^{-1})$. (\ref{eq:ineq_inv}) follows immediately. \newline
By linearity, demonstrating (\ref{eq:ineq_proj}) is equivalent to demonstrating $\mathbf{A} \ge 0 \Rightarrow \mathbf{S}^T\mathbf{A} \mathbf{S}\ge 0$. It is easily proven by considering $\mathbf{x}\ne 0$ and $\mathbf{y}=\mathbf{Sx}$. Since $\mathbf{A}\ge0$, in particular $\mathbf{y}^T\mathbf{A}\mathbf{y}= \mathbf{x}^T\mathbf{S}^T\mathbf{A}\mathbf{S} \mathbf{x}\ge0$, which shows that $\mathbf{S}^T\mathbf{A} \mathbf{S}\ge 0$.
\end{proof}

The demonstration of Theorem (\ref{thm:opt_proj}) requires the following lemma (see \citet{horn2012matrix}, Corollary 4.3.39):
\begin{lem}
\label{max_tr}
Let $\mathbf{A}$ be a $(n \times n)$ Hermitian matrix, suppose that $1\le p\le n$, and let $\mathbf{U}$ be a $(n \times p)$ matrix whose $p$ columns form an orthonormal basis . Let $\lambda_1\le...\le\lambda_n$ be the eigenvalues of $\mathbf{A}$. Then one has:
\begin{eqnarray}
\label{eq:ineq_eigenv}
\sum_{i=1}^{p} \lambda_i \le  \mathrm{Tr}(  \mathbf{U}^T\mathbf{AU} )\\ \notag
\sum_{i=n-p+1}^{n} \lambda_i \ge  \mathrm{Tr}( \mathbf{U}^T\mathbf{AU})
\end{eqnarray}
Moreover, for each $p=0,...,n$, one has:
 \begin{eqnarray}
 \label{eq:eq_eigenv}
 \mathrm{Tr}( \mathbf{V}_\mathrm{min}^T\mathbf{A} \mathbf{V}_\mathrm{min}  )&=& \sum_{i=1}^{p} \lambda_i  \\ \notag
 \mathrm{Tr}(  \mathbf{V}_\mathrm{max}^T\mathbf{A}\mathbf{V}_\mathrm{max} ) &=& \sum_{i=n-p+1}^{n} \lambda_i ,
\end{eqnarray}
where $\mathbf{V}_\mathrm{min}$  (resp., $\mathbf{V}_\mathrm{max}$) represents the matrix whose columns are the singular vectors associated with the $p$ smallest (resp., highest) singular values of $\mathbf{A}$. 
\end{lem}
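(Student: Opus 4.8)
The plan is to reduce the trace to a weighted sum of the eigenvalues of $\mathbf{A}$ and then solve an elementary linear optimization over a polytope of admissible weights. First I would diagonalize the Hermitian matrix as $\mathbf{A} = \mathbf{W}\mathbf{\Lambda}\mathbf{W}^T$, with $\mathbf{W}$ unitary and $\mathbf{\Lambda} = \mathrm{diag}(\lambda_1,\dots,\lambda_n)$ (eigenvalues in ascending order). Setting $\mathbf{Z} = \mathbf{W}^T\mathbf{U}$, the columns of $\mathbf{Z}$ remain orthonormal because $\mathbf{W}$ is unitary, and the cyclic invariance of the trace gives $\mathrm{Tr}(\mathbf{U}^T\mathbf{A}\mathbf{U}) = \mathrm{Tr}(\mathbf{Z}^T\mathbf{\Lambda}\mathbf{Z}) = \sum_{i=1}^n \lambda_i d_i$, where $d_i = \sum_{j=1}^p |z_{ij}|^2$ is the squared Euclidean norm of the $i$th row of $\mathbf{Z}$.

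The crux is then to characterize the admissible weight vectors $(d_1,\dots,d_n)$. Since $\mathbf{Z}^T\mathbf{Z} = \mathbf{Id}_p$, the matrix $\mathbf{Z}\mathbf{Z}^T$ is an orthogonal projector of rank $p$; its diagonal entries are exactly the $d_i$, so $0 \le d_i \le 1$, while $\sum_{i=1}^n d_i = \mathrm{Tr}(\mathbf{Z}\mathbf{Z}^T) = \mathrm{Tr}(\mathbf{Z}^T\mathbf{Z}) = p$. Thus the quantity of interest equals $\sum_i \lambda_i d_i$ subject to the constraints $0 \le d_i \le 1$ and $\sum_i d_i = p$.

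Minimizing (respectively maximizing) the linear functional $\sum_i \lambda_i d_i$ over this feasible region, and using that the $\lambda_i$ are sorted in ascending order, the extremum is attained by concentrating the unit weights on the $p$ smallest (respectively largest) eigenvalues, i.e. $d_i = 1$ for $i \le p$ (respectively $i \ge n-p+1$) and $d_i = 0$ otherwise. This yields $\sum_{i=1}^p \lambda_i \le \mathrm{Tr}(\mathbf{U}^T\mathbf{A}\mathbf{U}) \le \sum_{i=n-p+1}^n \lambda_i$, which is (\ref{eq:ineq_eigenv}). For the equality cases, choosing $\mathbf{U} = \mathbf{V}_\mathrm{min}$ (respectively $\mathbf{V}_\mathrm{max}$) makes $\mathbf{Z} = \mathbf{W}^T\mathbf{U}$ select precisely the corresponding coordinate directions, so the admissible weights saturate the bound and (\ref{eq:eq_eigenv}) follows.

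The main obstacle is the justification of the constraint $0 \le d_i \le 1$, which rests on recognizing $\mathbf{Z}\mathbf{Z}^T$ as an orthogonal projector whose diagonal entries lie in $[0,1]$; the remaining linear-programming argument (that a linear functional over this hypersimplex is extremized at a vertex with $d_i \in \{0,1\}$) is routine once the feasible region has been pinned down.
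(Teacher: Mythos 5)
Your proof is correct, but it takes a genuinely different route from the paper. The paper derives the inequalities (\ref{eq:ineq_eigenv}) by invoking the Poincar\'e separation theorem, which gives the termwise interlacing $\lambda_i(\mathbf{A})\le\lambda_i(\mathbf{U}^T\mathbf{A}\mathbf{U})\le\lambda_{i+n-p}(\mathbf{A})$, and then sums over $i$ using the fact that the trace of a Hermitian matrix equals the sum of its eigenvalues. You instead give a self-contained, first-principles argument: diagonalize $\mathbf{A}$, observe that $\mathrm{Tr}(\mathbf{U}^T\mathbf{A}\mathbf{U})=\sum_i\lambda_i d_i$ where the weights $d_i$ are the diagonal entries of the rank-$p$ orthogonal projector $\mathbf{Z}\mathbf{Z}^T$, hence satisfy $0\le d_i\le1$ and $\sum_i d_i=p$, and then extremize the resulting linear functional over the hypersimplex. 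This is essentially the standard proof of Ky Fan's trace inequality, and each step (the projector bound on the $d_i$, the vertex structure of the feasible polytope) is correctly justified. What your approach buys is elementarity and transparency --- no appeal to an external interlacing theorem, and the equality cases in (\ref{eq:eq_eigenv}) fall out immediately by checking that $\mathbf{V}_\mathrm{min}$ and $\mathbf{V}_\mathrm{max}$ saturate the weight constraints. What the paper's route buys is brevity and the stronger termwise information carried by the interlacing inequalities, which is discarded once only traces are needed. Both are complete proofs of the stated lemma.
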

\begin{proof}
This is a corollary of the so-called \textit{Poincar\'e separation theorem}, which states that:
\begin{align}
\lambda_i(\mathbf{A})\le\lambda_i(\mathbf{U}^T\mathbf{A}\mathbf{U})\le\lambda_{i+n-p}(\mathbf{A}),\, \forall i=0,...,p,
\end{align} 
where $\lambda_i(\mathbf{M})$ denotes the $i$th singular value of the matrix $\mathbf{M}$, the singular values being arranged in increasing order. 
To prove (\ref{eq:ineq_eigenv}), one just needs to remark that the sum of the eigenvalues and the sum of the main diagonal elements of an Hermitian matrix are equal. (\ref{eq:eq_eigenv}) is trivial.
\end{proof}

\begin{lem}
\label{equiv_average_F}
Let us define a matrix $\mathbf{M}$, a random vector $\mathbf{q}$ with covariance matrix $ \mathbb{E}(\mathbf{qq}^T)=\mathbf{Q}$, and a Hermitian positive-definite matrix $\mathbf{A}$. Let us also define the square-roots $\mathbf{L}_\mathbf{Q}$ and  $\mathbf{L}_\mathbf{A}$  of $\mathbf{Q}$ and $\mathbf{A}$, respectively, i.e., $\mathbf{Q}=\mathbf{L}_\mathbf{Q}{\mathbf{L}_\mathbf{Q}}^T$ and $\mathbf{A}=\mathbf{L}_\mathbf{A}{\mathbf{L}_\mathbf{A}}^T$. One has:
\begin{align}
 \mathbb{E}  \| \mathbf{M}\mathbf{q} \|^2_A =\| \mathbf{L}_\mathbf{A}^T  \mathbf{M}\mathbf{L}_\mathbf{Q}  \|^2_F
\end{align}
\end{lem}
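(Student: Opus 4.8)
The plan is to collapse the whole statement into a single trace identity and then exploit the cyclic invariance of the trace together with the two square-root factorizations. First I would rewrite the weighted Euclidean norm as a quadratic form: since $\|\mathbf{x}\|_\mathbf{A}^2=\mathbf{x}^T\mathbf{A}\mathbf{x}$, we have $\|\mathbf{M}\mathbf{q}\|_\mathbf{A}^2=\mathbf{q}^T\mathbf{M}^T\mathbf{A}\mathbf{M}\mathbf{q}$. This quantity is a scalar, hence equal to its own trace, and cyclic invariance of the trace lets me move the right-hand factor $\mathbf{q}$ to the front, giving $\mathbf{q}^T\mathbf{M}^T\mathbf{A}\mathbf{M}\mathbf{q}=\mathrm{Tr}(\mathbf{M}^T\mathbf{A}\mathbf{M}\,\mathbf{q}\mathbf{q}^T)$.

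Next I would take the expectation. The matrix $\mathbf{M}^T\mathbf{A}\mathbf{M}$ is deterministic, so linearity of $\mathbb{E}$ and of the trace allows the expectation to act only on the single random factor $\mathbf{q}\mathbf{q}^T$, yielding $\mathbb{E}\,\|\mathbf{M}\mathbf{q}\|_\mathbf{A}^2=\mathrm{Tr}\bigl(\mathbf{M}^T\mathbf{A}\mathbf{M}\,\mathbb{E}(\mathbf{q}\mathbf{q}^T)\bigr)=\mathrm{Tr}(\mathbf{M}^T\mathbf{A}\mathbf{M}\mathbf{Q})$, where the last equality uses the hypothesis $\mathbb{E}(\mathbf{q}\mathbf{q}^T)=\mathbf{Q}$. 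Finally I would substitute $\mathbf{A}=\mathbf{L}_\mathbf{A}\mathbf{L}_\mathbf{A}^T$ and $\mathbf{Q}=\mathbf{L}_\mathbf{Q}\mathbf{L}_\mathbf{Q}^T$ and regroup: starting from $\mathrm{Tr}(\mathbf{M}^T\mathbf{L}_\mathbf{A}\mathbf{L}_\mathbf{A}^T\mathbf{M}\mathbf{L}_\mathbf{Q}\mathbf{L}_\mathbf{Q}^T)$, cycling the trailing $\mathbf{L}_\mathbf{Q}^T$ to the front produces $\mathrm{Tr}\bigl((\mathbf{L}_\mathbf{A}^T\mathbf{M}\mathbf{L}_\mathbf{Q})^T(\mathbf{L}_\mathbf{A}^T\mathbf{M}\mathbf{L}_\mathbf{Q})\bigr)$, which is by definition $\|\mathbf{L}_\mathbf{A}^T\mathbf{M}\mathbf{L}_\mathbf{Q}\|_F^2$. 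This is exactly the claimed right-hand side.

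I do not expect any genuine analytic obstacle here; the argument is purely algebraic. The only points requiring care are that the expectation commutes past the deterministic factors (which needs nothing beyond linearity of $\mathbb{E}$ and $\mathrm{Tr}$), and the bookkeeping of transposes when matching the final trace to the Frobenius norm via the identity $\|\mathbf{X}\|_F^2=\mathrm{Tr}(\mathbf{X}^T\mathbf{X})$. Note also that the factorizations $\mathbf{A}=\mathbf{L}_\mathbf{A}\mathbf{L}_\mathbf{A}^T$ and $\mathbf{Q}=\mathbf{L}_\mathbf{Q}\mathbf{L}_\mathbf{Q}^T$ need not be the symmetric (principal) square-roots; any factorization of this form works, since the computation never uses symmetry of $\mathbf{L}_\mathbf{A}$ or $\mathbf{L}_\mathbf{Q}$, which is exactly why the lemma can later be applied with the various non-symmetric square-roots (e.g.\ $\mathbf{L}_{(\mathbf{P}^a)^{-1}}$, $\mathbf{L}_\mathbf{Y}$) used in the proof of Prop.~\ref{thm:opt_approx_mean}.
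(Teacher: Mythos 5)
Your proof is correct and follows essentially the same route as the paper's: expand the weighted norm as a quadratic form, pass to the trace, use cyclic invariance and linearity of $\mathbb{E}$ to replace $\mathbf{q}\mathbf{q}^T$ by $\mathbf{Q}$, substitute the factorizations $\mathbf{A}=\mathbf{L}_\mathbf{A}\mathbf{L}_\mathbf{A}^T$ and $\mathbf{Q}=\mathbf{L}_\mathbf{Q}\mathbf{L}_\mathbf{Q}^T$, and identify the result with $\|\mathbf{L}_\mathbf{A}^T\mathbf{M}\mathbf{L}_\mathbf{Q}\|_F^2$. Your closing remark that the square-roots need not be symmetric is a useful observation, as the paper indeed applies the lemma with non-symmetric factors later on.
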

\begin{proof}
\begin{align*}
  \mathbb{E}  \| \mathbf{M}\mathbf{q} \|^2_A &=  \mathbb{E}  \left [ (\mathbf{M}\mathbf{q})^T\mathbf{A}\mathbf{M}\mathbf{q} \right ] \\
&=  \mathbb{E}  \left [\mathbf{q}^T\mathbf{M}^T\mathbf{L}_\mathbf{A}{\mathbf{L}_\mathbf{A}}^T\mathbf{M}\mathbf{q}\right ]   \\
&= \mathbb{E}  \left [ \mathrm{Tr} \left ( \mathbf{q}^T\mathbf{M}^T\mathbf{L}_\mathbf{A}{\mathbf{L}_\mathbf{A}}^T\mathbf{M}\mathbf{q} \right )\right ] \\
&=  \mathbb{E}  \left [ \mathrm{Tr} \left ({\mathbf{L}_\mathbf{A}}^T\mathbf{M}\mathbf{q}  \mathbf{q}^T\mathbf{M}^T\mathbf{L}_\mathbf{A}\right )\right ] \\
&= \mathrm{Tr} \left [ {\mathbf{L}_\mathbf{A}}^T\mathbf{M}\mathbb{E} (\mathbf{q}  \mathbf{q}^T)\mathbf{M}^T\mathbf{L}_\mathbf{A}\right ]  \\
&= \mathrm{Tr} \left [ {\mathbf{L}_\mathbf{A}}^T\mathbf{M}\mathbf{L}_\mathbf{Q}{\mathbf{L}_\mathbf{Q}}^T\mathbf{M}^T\mathbf{L}_\mathbf{A}\right ]   \\
&= \mathrm{Tr} \left [ {\mathbf{L}_\mathbf{A}}^T\mathbf{M}\mathbf{L}_\mathbf{Q}( {\mathbf{L}_\mathbf{A}}^T\mathbf{M}\mathbf{L}_\mathbf{Q})^T \right ] \\
&= \| \mathbf{L}_\mathbf{A}^T  \mathbf{M}\mathbf{L}_\mathbf{Q}  \|^2_F
\end{align*}
\end{proof}


\begin{figure}[]
\centering
\includegraphics[width=25pc]{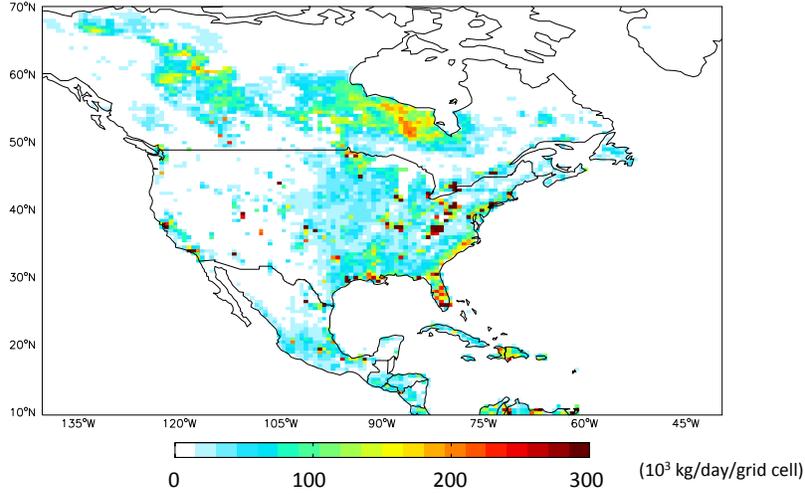}
\caption{Monthly averaged total daily prior methane emissions for the nested North America domain ($0.5^{\circ} \times  0.7^{\circ}$). The size of the control vector for the original inverse problem is $n=151\times121=18,271$. The reduced inverse problem considers only a subset of the grid-cells, resulting in a control vector of size $n=300$.}
\label{fig:map_prior}
\end{figure}

\begin{figure}[]
\centering
\includegraphics[width=25pc]{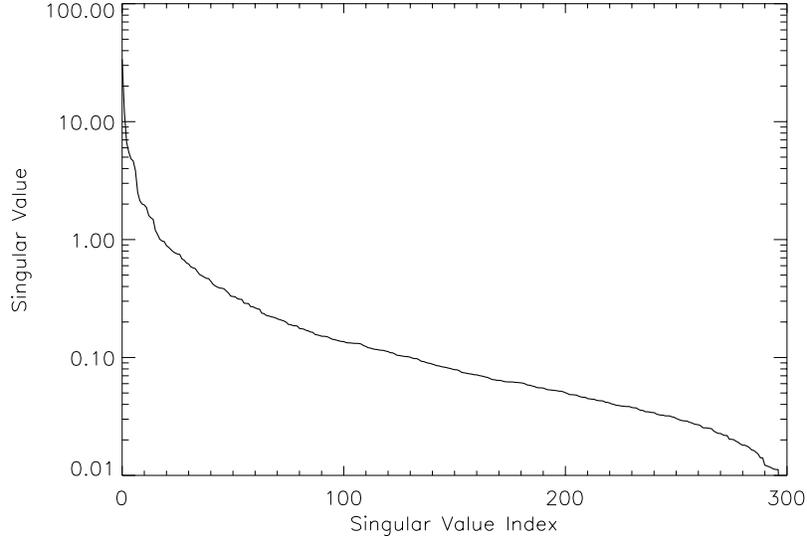}
\caption{ Singular value spectra of the prior-preconditioned Hessian matrix of the reduced source inversion problem. }
\label{fig:toy_spectra}
\end{figure}

\begin{figure}[]
\centering
\includegraphics[width=35pc]{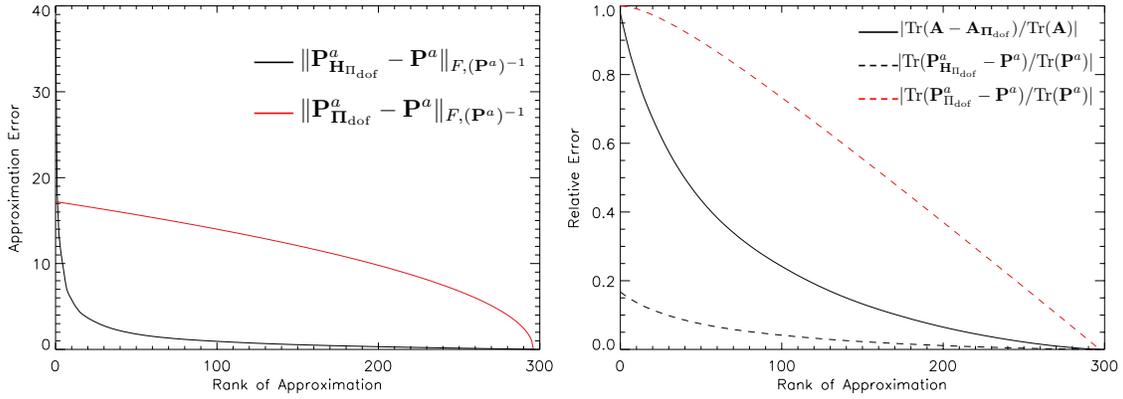}
\caption{Left: $\mathbf{P}^a$-weighted error in the posterior error covariance matrix approximations $\mathbf{P}^a_{\mathbf{H}_{\Pi_\text{dof}}}$ (black line) and $\mathbf{P}^a_{\Pi_\text{dof}}$ (red line), as a function of the rank $k$ of the approximation; Right: relative error in the DOFS approximation for solution of the maximum-DOFS projection ($\mathbf{A}_{\Pi_\text{dof}}$) (black solid line), and relative error in the total variance approximations $\mathrm{Tr}(\mathbf{P}^a_{\mathbf{H}_{\Pi_\text{dof}}})$ (black dashed line) and $\mathrm{Tr}(\mathbf{P}^a_{\Pi_\text{dof}})$ (red dashed line), as a function of the rank $k$ of the approximation. }
\label{fig:perc_dof}
\end{figure}

\begin{figure}[]
\centering
\includegraphics[width=35pc]{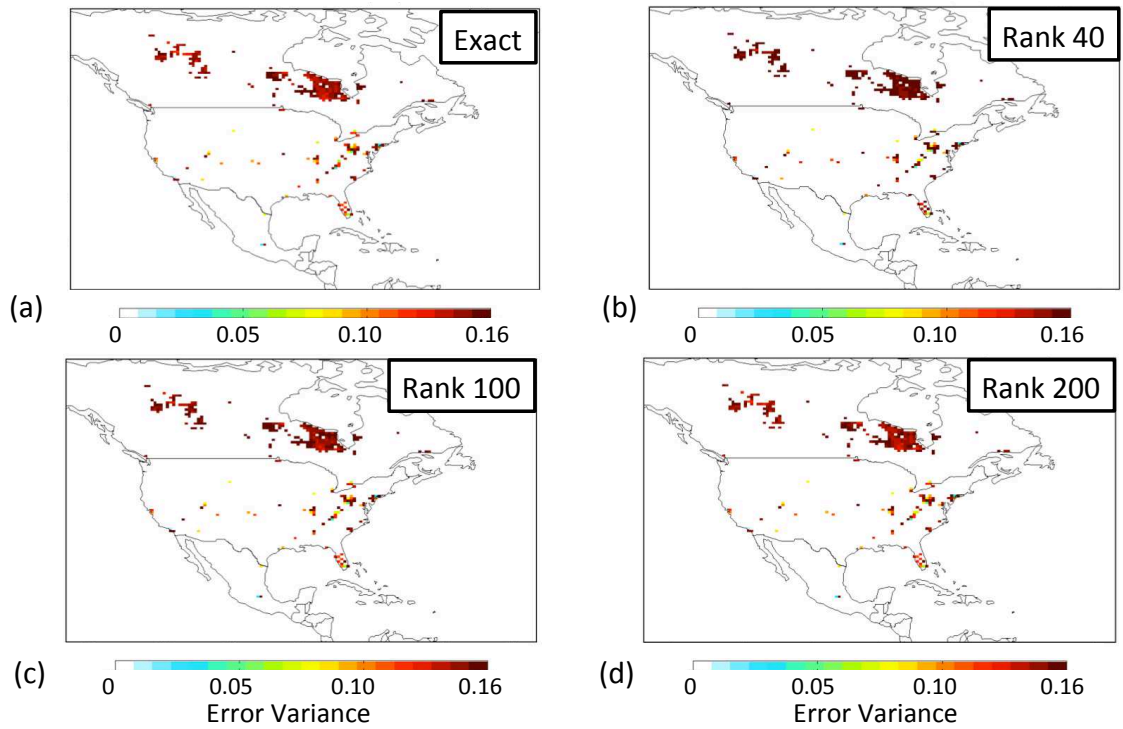}
\caption{Exact and approximated posterior error variances using the low-rank update estimation $\mathbf{P}^a_{\mathbf{H}_{\Pi_\text{dof}}}$: (a) exact variances; (b) variances for a rank-40 update; (c) variances for a rank-100 update; (d) variances for a rank-200 update. }
\label{fig:map_var}
\end{figure}

\begin{figure}[]
\centering
\includegraphics[width=30pc]{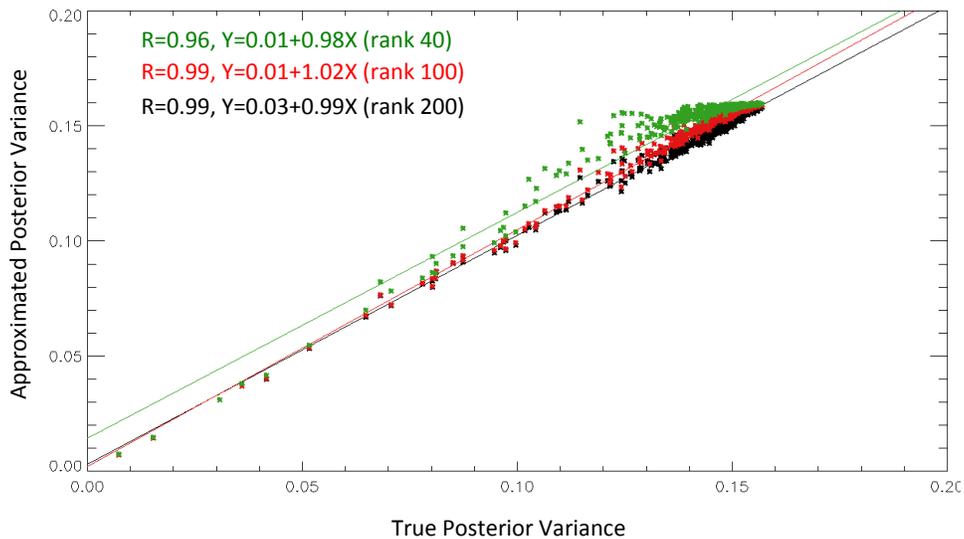}
\caption{Scatterplot of true and approximated posterior error variances using the low-rank update estimation $\mathbf{P}^a_{\mathbf{H}_{\Pi_\text{dof}}}$, for different value of the rank $k$. Least-squares fit lines are shown along with the corresponding Pearson correlation coefficients (R) and linear fit equations (Y=b+aX). Green: variances for a rank-40 update; red: variances for a rank-100 update; black: variances for a rank-200 update. }
\label{fig:scatterplot_var}
\end{figure}

\begin{figure}[]
\centering
\includegraphics[width=25pc]{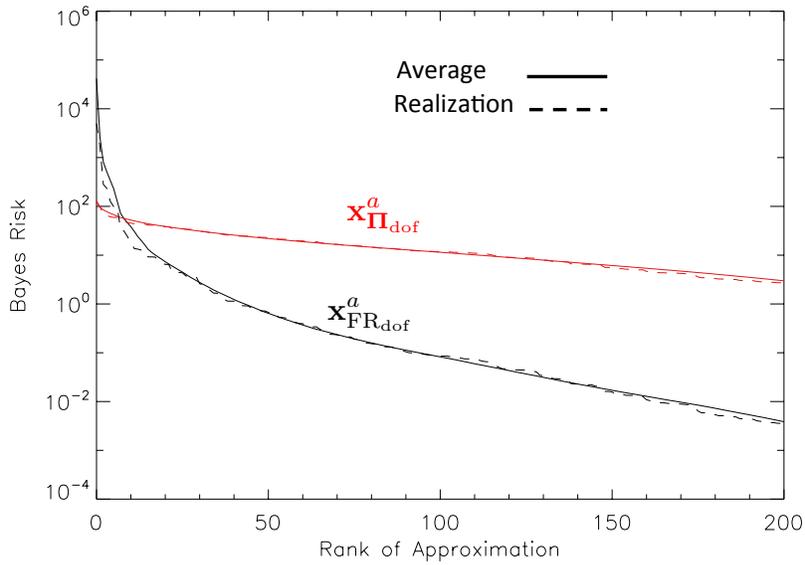}
\caption{Average $\mathbf{P}^a$-weighted error in the posterior mean approximation (or Bayes risk), for the solution of the maximum-DOFS projection $\mathbf{x}^a_{\mathbf{\Pi}_\mathrm{dof}}$ (red solid line) and for the full-rank posterior mean approximation $\mathbf{x}^{a}_{\mathrm{FR}_\mathrm{dof}}$ (black solid line), as a function of the rank of the approximation. Results for one single realization of the prior and the observations are also shown in dashed lines for both approximations.}
\label{fig:bayes_risk}
\end{figure} 

\begin{figure}[]
\centering
\includegraphics[width=35pc]{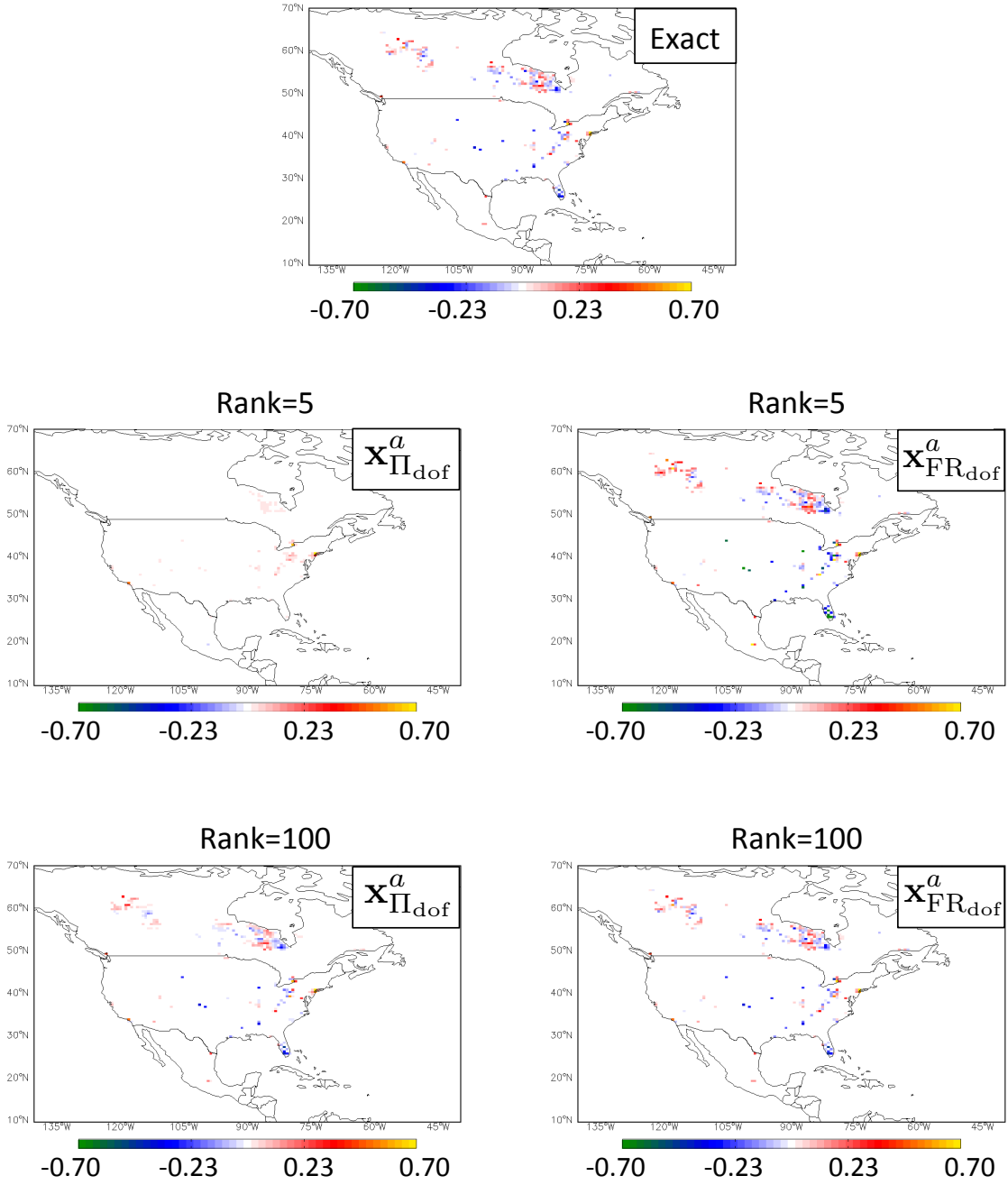}
\caption{Exact (top) and approximated posterior flux increments for the solution of the maximum-DOFS projection $\mathbf{x}^a_{\mathbf{\Pi}_\mathrm{dof}}$ and the full-rank posterior increment approximation $\mathbf{x}^{a}_{\mathrm{FR}_\mathrm{dof}}$, for $k=5$ and $k=100$.}
\label{fig:lr_incr_toy}
\end{figure} 

\begin{figure}[]
\centering
\includegraphics[width=35pc]{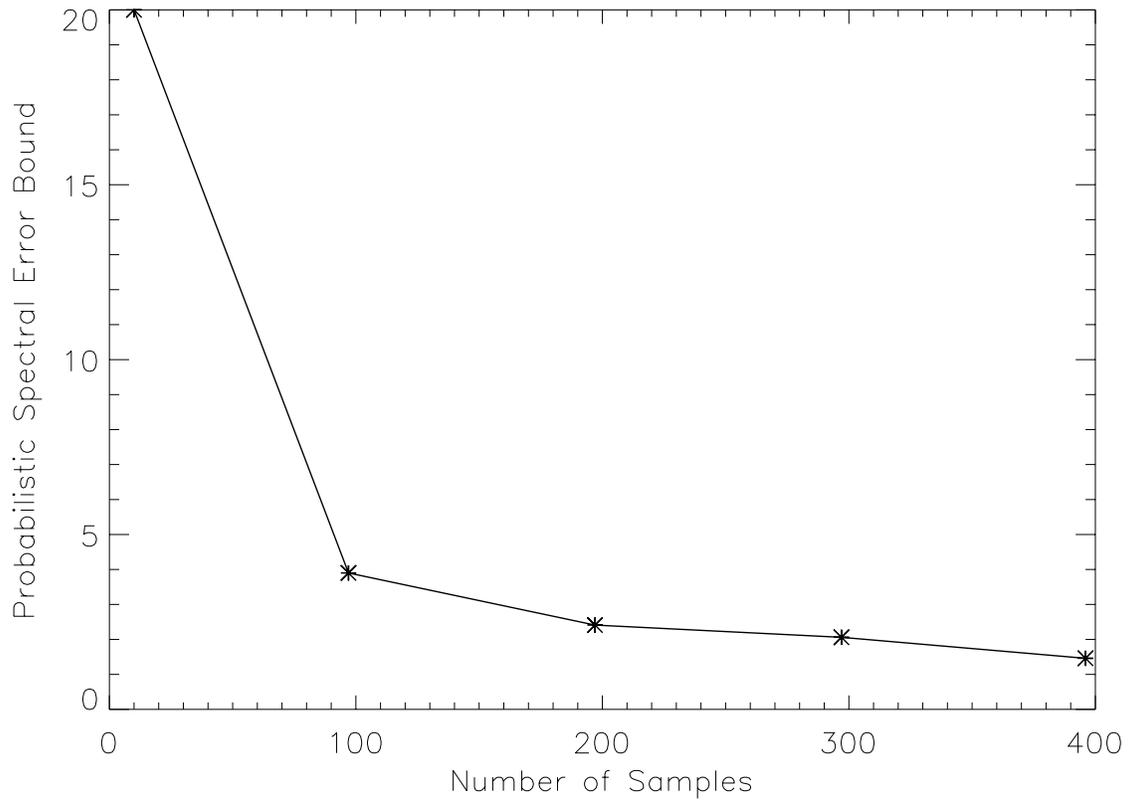}
\caption{Probabilistic spectral error bound as a function of the number of samples used in the randomized SVD estimate.}
\label{fig:prob_err}
\end{figure} 

\begin{figure}[]
\centering
\includegraphics[width=35pc]{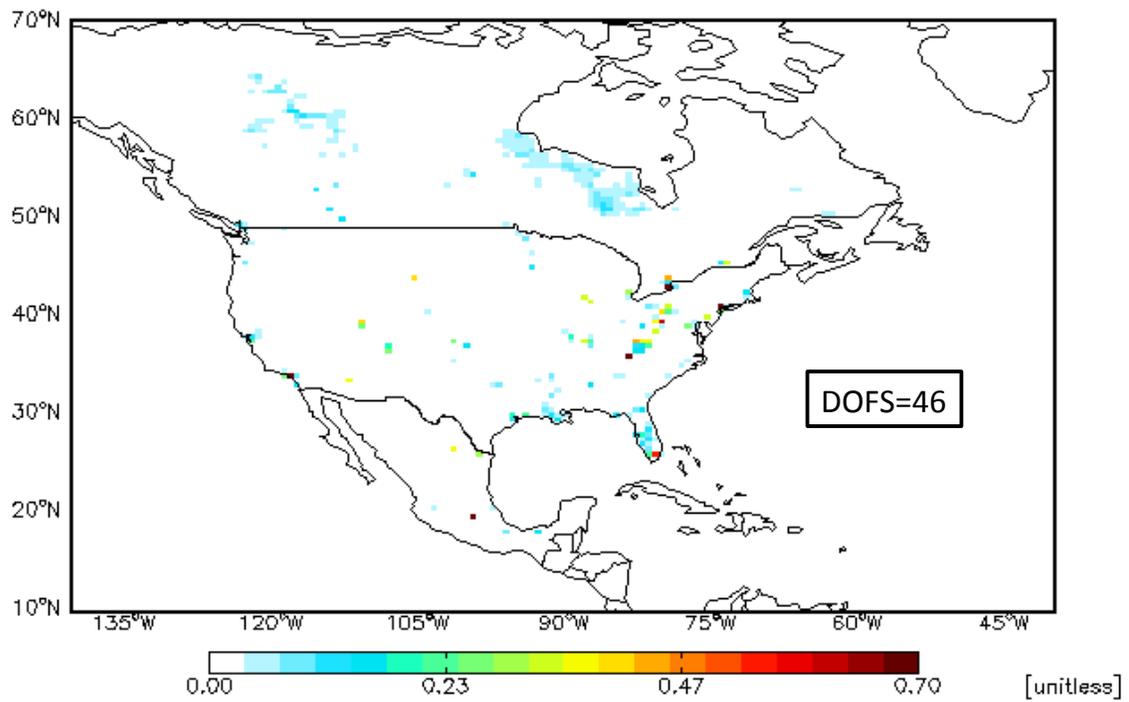}
\caption{Diagonal of the model resolution matrix of the rank-$400$ maximum-DOFs projection ($\mathbf{A}_{\Pi_\text{dof}}$) for the full-dimensional inverse problem. Each element of the diagonal is associated with the observational constraints on the flux in one single grid-cell. The value quantifies the relative contribution (from 0 to 1) of the observations to the total information content, with respect to the prior information. The trace of the model resolution matrix, or DOFS, is also indicated.
}
\label{fig:diag_avk}
\end{figure} 

\begin{figure}[]
\centering
\includegraphics[width=35pc]{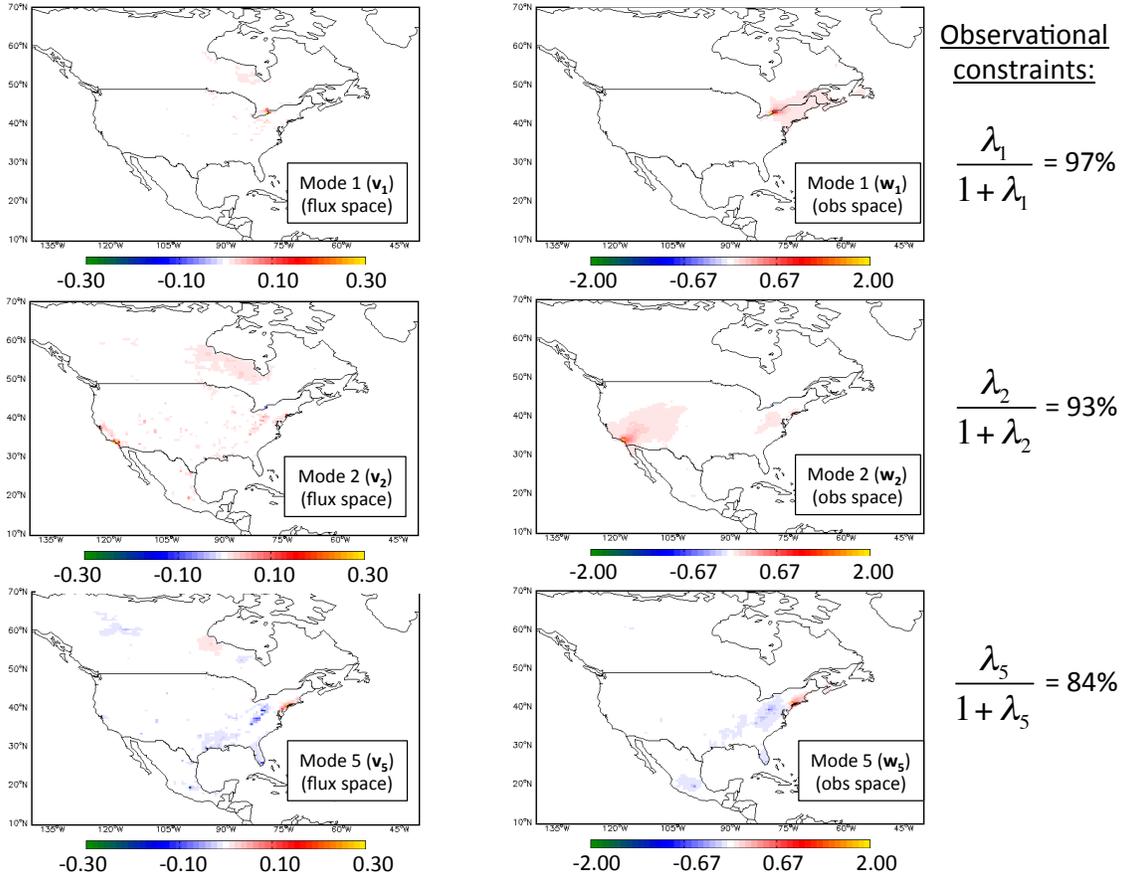}
\caption{Three of the principal modes of the prior-preconditioned Hessian of the inversion ($\widehat{\mathbf{H}}_p$), in control and observation space. Left panel: 1st, 2nd and 5th right singular vectors of the square-root of the prior-preconditioned Hessian $\widehat{\mathbf{H}}^{1/2}_p\equiv \mathbf{R}^{-1/2}\mathbf{H}^T\mathbf{B}^{1/2}$. Right panel: 1st, 2nd and 5th left singular vectors of $\widehat{\mathbf{H}}^{1/2}_p$. The relative contribution of the observations to the posterior information content (with respect to the prior) is also indicated on the right of the figures. }
\label{fig:princ_modes}
\end{figure} 

\begin{figure}[]
\centering
\includegraphics[width=35pc]{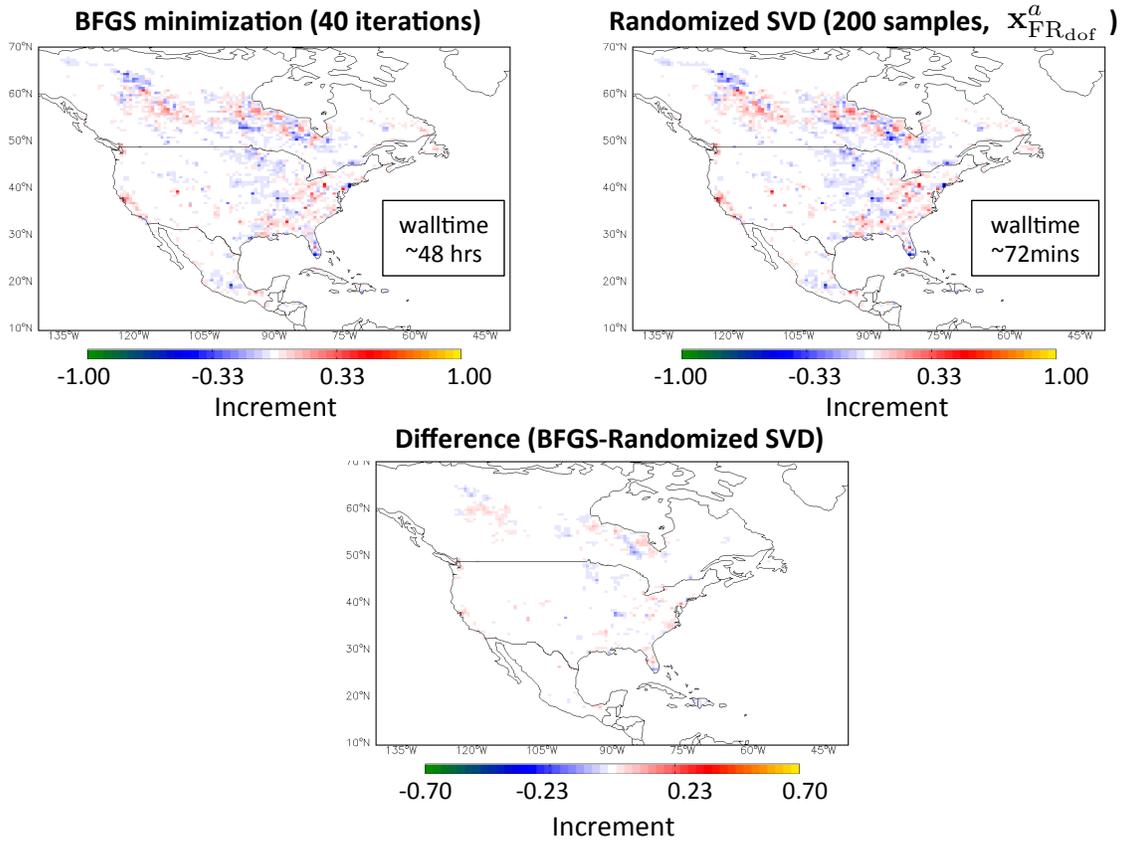}
\caption{Comparison of the computational efficiency of a standard BFGS minimization with the adaptive approximation approach using a randomized SVD. Top left: posterior scaling factor flux increments after 40 iterations of the BFGS algorithm; Top right: posterior scaling factor flux increments for the adaptive approximation using a randomized SVD with 200 samples. In this case $\lambda_{200}<1$, so the posterior increment is the full-rank approximation $\mathbf{x}^{a}_{\mathrm{FR}_\mathrm{dof}}$; Bottom: Difference between the posterior scaling factor increments obtained from the BFGS minimization and from the adaptive approximation using randomized SVD. The walltimes associated with the BFGS minimization and the adaptive approximation using randomized SVDs are also indicated on the top figures.}
\label{fig:comp_eff}
\end{figure} 
\bibliography{mybib}{}
\bibliographystyle{wileyqj}

\end{document}